\newcommand*{\N}{\mathbb{N}}
\newcommand*{\A}{\mathcal{A}}
\newcommand*{\B}{\mathcal{B}}
\newcommand*{\C}{\mathfrak{C}}
\newcommand*{\D}{\mathfrak{D}}
\newcommand*{\ra}{\rightarrow}
\newcommand*{\Xtt}{X_{\emptyword}^2}
\newcommand*{\Ytt}{Y_{\emptyword}^2}
\newcommand*{\rat}{\mathcal{R}}
\newcommand*{\CF}{\mathcal{CF}}
\newcommand*{\etol}{\mathcal{ET}0\mathcal{L}}
\newcommand*{\edtol}{\mathcal{EDT}0\mathcal{L}}
\newcommand*{\OC}{\mathcal{O}}
\newcommand*{\lin}{\mathcal{LIN}}
\newcommand*{\ind}{\mathcal{I}}
\newcommand*{\csl}{\mathcal{CSL}}
\newcommand*{\WP}{\operatorname{WP}}
\newtheorem{theorem}{Theorem}
\newtheorem{proposition}[theorem]{Proposition}
\newtheorem{corollary}[theorem]{Corollary}
\newdefinition{example}[theorem]{Example}
\newdefinition{conjecture}[theorem]{Conjecture}
\newproof{proof}{Proof}
\newif\ifpgf@rectanglewrc@donecorner@
\def\pgf@rectanglewithroundedcorners@docorner#1#2#3#4{%
  \edef\pgf@marshal{%
    \noexpand\pgfintersectionofpaths
      {%
        \noexpand\pgfpathmoveto{\noexpand\pgfpoint{\the\pgf@xa}{\the\pgf@ya}}%
        \noexpand\pgfpathlineto{\noexpand\pgfpoint{\the\pgf@x}{\the\pgf@y}}%
      }%
      {%
        \noexpand\pgfpathmoveto{\noexpand\pgfpointadd
          {\noexpand\pgfpoint{\the\pgf@xc}{\the\pgf@yc}}%
          {\noexpand\pgfpoint{#1}{#2}}}%
        \noexpand\pgfpatharc{#3}{#4}{\cornerradius}%
      }%
    }%
  \pgf@process{\pgf@marshal\pgfpointintersectionsolution{1}}%
  \pgf@process{\pgftransforminvert\pgfpointtransformed{}}%
  \pgf@rectanglewrc@donecorner@true
}
  \savedmacro\cornerradius{%
    \edef\cornerradius{\pgfkeysvalueof{/pgf/rectangle corner radius}}%
  }
    \edef\pgf@marshal{%
      \noexpand\pgfpointborderrectangle
      {\noexpand\pgfqpoint{\the\pgf@xb}{\the\pgf@yb}}
      {\noexpand\pgfqpoint{\the\pgf@xc}{\the\pgf@yc}}%
    }%
    \pgfextract@process\borderpoint{}%
       \pgf@rectanglewithroundedcorners@docorner{\cornerradius}{0pt}{0}{90}%
\DeclarePairedDelimiter{\set}{\{}{\}}
\DeclarePairedDelimiterX{\gset}[2]{\{}{\}}{\,#1:#2\,}
\newcommand*{\biggg}{\bBigg@{4}}
\newcommand*{\Biggg}{\bBigg@{5}}
\newcommand*{\sizeddelimiter}[2]{\bBigg@{#1}#2}
\newcommand*{\emptyword}{\varepsilon}
\newcommand*{\rev}{\mathrm{rev}}
\DeclarePairedDelimiterX{\pres}[2]{\langle}{\rangle}{#1\,\delimsize\vert\,\mathopen{}#2}
\tikzset{
  contained/.style={
    Hooks[{length=2pt,left}]-Computer Modern Rightarrow[{width=6pt,length=6pt}],
  },
  rcontained/.style={
    Hooks[{length=2pt,right}]-Computer Modern Rightarrow[{width=6pt,length=6pt}],
  },
  maybecontained/.style={
    ->,
  },
  incomparable/.style={
    decorate,
    decoration={
      zigzag,
      amplitude=.6mm,
      segment length=1mm,
      pre length=1mm,
      post length=1mm}
  },
}
\begin{document}

\title{A Language Hierarchy of Binary Relations}

\author{Tara Brough\corref{cor1}\fnref{fn1,fn3}}
\ead{t.brough@fct.unl.pt}

\author{Alan Cain\fnref{fn2,fn3}}
\ead{a.cain@fct.unl.pt}

\address{
Centro de Matem\'{a}tica e Aplica\c{c}\~{o}es,
Faculdade de Ci\^{e}ncias e Tecnologia,
Universidade Nova de Lisboa,
2829--516 Caparica,
Portugal}

\cortext[cor1]{Corresponding author}
\fntext[fn1]
{The first author was supported by an {\scshape FCT} post-doctoral fellowship ({\scshape SFRH}/{\sc
      BPD}/121469/2016).}

\fntext[fn2]{The second author was supported by an Investigador {\scshape FCT}
    fellowship ({\scshape IF}/01622/2013/{\scshape CP}1161/{\scshape CT}0001).}

\fntext[fn3]{Both authors were partially supported by by the Funda\c{c}\~{a}o para a Ci\^{e}ncia e a Tecnologia (Portuguese Foundation for Science and Technology) through the
projects {\scshape UID}/{\scshape MAT}/00297/2013 (Centro de Matem\'{a}tica e Aplica\c{c}\~{o}es),
{\scshape PTDC}/{\scshape MAT-PUR}/31174/2017 and
{\scshape PTDC}/{\scshape MHC-FIL}/2583/2014.}

\begin{abstract}
  Motivated by the study of word problems of monoids, we explore two ways of viewing binary relations on $X^*$ as languages.  We exhibit a hierarchy of classes of binary relations on $X^*$, according to the class of languages the relation belongs to and the chosen viewpoint.  We give examples of word problems of monoids distinguishing the various classes.  Aside from the algebraic interest, these 
 examples demonstrate that the hierarchy still holds when restricted to equivalence relations.
\end{abstract}

\begin{keyword}
binary relations on words \sep word problems \sep transducer \sep one-counter automaton \sep context-free grammar \sep ET0L-system \sep indexed grammar \sep linear indexed grammar
\end{keyword}

\maketitle

\section{Introduction}

In several applications of language theory in algebra and combinatorics, the issue arises of representing a relation in
a way that is recognizable by an automaton or that can be defined by a grammar. For instance, automatic structures,
defined for groups by Epstein et al.~\cite{epstein_wordproc} and generalized to semigroups by Campbell et
al.~\cite{campbell_autsg}, are a way of defining a group or semigroup using binary relations, recognizable by a
synchronous two-tape finite automaton that describe how generators for the group multiply normal-form words; in such
groups and semigroups fundamental questions like the word problem are solvable. Automatic presentations for relational
structures \cite{khoussainov_autopres,rubin_survey} similarly use synchronous multi-tape finite automata that recognize,
in terms of some regular language of representatives, the relations in the signature of a structure.

Possibly the most well-known type of relation in the application of language theory to algebra is the word problem,
which is a binary relation that relates pairs of words over a generating set for a semigroup that represent the same
element of that semigroup. In the literature, this binary relation has been studied in the context of language theory
from two perspectives. In one viewpoint, which we call the \emph{two-tape} viewpoint, an element $(u,v)$ of the binary
relation is thought of as being read (synchronously or asynchronously) by a two-tape automaton
\cite{brough_inverse,neunhoffer_deciding,pfeiffer_phd}. In the other, which we call the \emph{unfolded} viewpoint, the
element $(u,v)$ is represented by a word $u\#v^\rev$, where $\#$ is a new symbol and $v^\rev$ denotes the reverse of $v$
\cite{brough_inverse,holt_onecounter,holt_cfcoword}. The latter viewpoint, which can also be thought of in terms of reading
the first tape forwards and then the second tape in reverse, is a very natural representation for a binary relation when
a stack is involved.  In particular, hyperbolic groups in the sense of Gromov \cite{gromov_hyperbolic} can be
characterized using context-free languages \cite{gilman_wordhyperbolic} in a way that closely resembles this, and it is
this linguistic characterization that has given rise to the theory of word-hyperbolic semigroups
\cite{cm_wordhypunique,duncan_hyperbolic}, since the geometric definition of hyperbolicity is less natural for
semigroups. This in turn led to the study of semigroups with context-free word problem \cite{hoffmann_contextfree}.

These considerations motivate the present paper, which compares and contrasts the binary relations that can be defined
in the two-tape and unfolded perspectives for a number of language classes: rational ($\rat$), one-counter ($\OC$),
context-free ($\CF$), ET0L ($\etol$), EDT0L ($\edtol$), linear indexed ($\lin$), and indexed ($\ind$). These classes were chosen because
their applications to semigroups or groups have previously been studied; see
\cite{pfeiffer_phd,holt_onecounter,hoffmann_contextfree,brough_inverse,bcm_crosssection,holt_indexedcoword,ciobanu_solutions,ciobanu_applications,gilman_nested}.

\begin{figure}[t]
  \centering
  \begin{tikzpicture}[x=20mm,y=20mm]
    \useasboundingbox (-.5,.8) -- (6,-1.8);
    \begin{scope}[
      every node/.style={
        rectangle with rounded corners,
        draw=gray,
        font=\small,
        inner sep=.5mm,
      }
      ]
      \node (urat) at (0,0) {$U(\rat)$};
      \node (trat) at (0,-1) {$T(\rat)$};
      \node (uone) at (1,0) {$U(\OC)$};
      \node (tone) at (1,-1) {$T(\OC)$};
      \node (ucf) at (2,0) {$U(\CF)$};
      \node (tcf) at (2,-1) {$T(\CF)$};
      \node (ulin) at (3,.5) {$U(\lin)$};
      \node (tlin) at (3,-.25) {$T(\lin)$};
      \node (uetol) at (3,-.75) {$U(\etol)$};
      \node (tetol) at (3,-1.5) {$T(\etol)$};
      \node (uind) at (4,0) {$U(\ind)$};
      \node (tind) at (4,-1) {$T(\ind)$};
      \node[align=center] (utcsl) at (5.2,-0.5) {$U(\csl)$\\$=$\\$T(\csl)$};
    \end{scope}
    \draw[gray,rounded corners=1mm]
    ($ (uone.north) + (0,.13) $) -|
    ($ (ulin.west) + (-.13,0) $) |-
    ($ (ulin.north) + (0,.13) $) -|
    ($ (ulin.east) + (.13,0) $) |-
    ($ (tlin.south) + (0,-.13) $) -|
    ($ (tcf.east) + (.13,0) $) |-
    ($ (trat.south) + (0,-.13) $) -|
    ($ (trat.west) + (-.13,0) $) |-
    ($ (trat.north) + (0,.13) $) -|
    ($ (uone.west) + (-.13,0) $) |-
    ($ (uone.north) + (0,.13) $);
    \draw[contained] (urat) edge (uone);
    \draw[contained] (uone) edge (ucf);
    \draw[rcontained] (ucf) edge (ulin);
    \draw[contained] (ucf) edge (uetol);
    \draw[contained] (ulin) edge (uind);
    \draw[rcontained] (uetol) edge (uind);
    \draw[contained] (trat) edge (tone);
    \draw[contained] (tone) edge (tcf);
    \draw[contained] (tcf) edge (tetol);
    \draw[contained] (tlin) edge (tind);
    \draw[rcontained] (tetol) edge (tind);
    \draw[contained] (urat) edge (trat);
    \draw[contained] (uone) edge (tone);
    \draw[contained] (ucf) edge (tcf);
    \draw[contained] (ulin) edge (tlin);
    \draw[contained] (uetol) edge (tetol);
    \draw[maybecontained] (uind) edge (tind);
    \draw[incomparable] (trat.75) -- (uone.220);
    \draw[incomparable] (tone) -- (ucf);
    \draw[incomparable] (tcf) -- (uetol);
    \draw[incomparable] (tlin) -- (uetol);
    \draw[incomparable] (ulin.290) -- ($ (uind) + (-.4,0) $) -- ($ (tind) + (-.4,0) $) -- (tetol.70);
    \draw[rcontained] (trat) edge (ucf);
    \draw[rcontained] (tcf) edge (ulin);
    \draw[contained,dotted] (uind) edge[bend left=15] (utcsl);
    \draw[contained,dotted] (tind) edge[bend right=15] (utcsl);
  \end{tikzpicture}
  \caption{Illustration of the language hierarchy of relations. (Notation for classes of languages and relations is as
    defined in \S~\ref{sec:preliminaries}.)  Hooked arrows indicate proper inclusion.  Zigzag lines indicate
    incomparability.  The obvious incomparabilities between $U(\lin)$ and $U(\etol)$ and between $T(\lin)$ and
    $T(\etol)$ are not pictured, for clarity. Other relationships that are not either pictured or implied are currently
    unknown.  The single unhooked arrow from $U(\ind)$ to $T(\ind)$ indicates an inclusion not known to be proper. When
    restricted to relations over a one-symbol alphabet, the classes inside the grey outline coincide.}
  \label{hierarchy}
\end{figure}
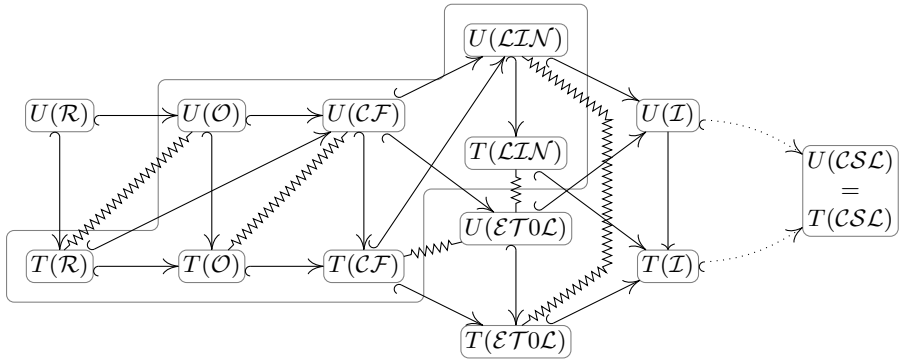

What emerges is the language hierarchy illustrated in Figure~\ref{hierarchy}. There is the expected straightforward
containment of classes of binary relations representable in the two-tape perspective, following the containment of
language classes, and similarly for those representable in the unfolded viewpoint. The two perspectives coincide for
context-sensitive languages ($\csl$). For each other language class, the class of unfolded binary relations is
contained in the corresponding class of two-tape binary relations, with the containments being proven to be proper in
each case except for $\ind$, where it remains an open question. Some of the two-tape classes are contained in
unfolded classes corresponding to larger language classes. For instance, the two-tape $\CF$ relations are contained
in the unfolded $\lin$ relations (but not the unfolded $\etol$ relations). In other cases, we have proven
incomparability of some classes of relations.

For several of these classes (inside the grey outline in Figure~\ref{hierarchy}), the subclasses of binary relations
between words over a $1$-symbol alphabet coincide. It remains open whether the analogous subclasses of other classes
coincide like this. Note that all the witnesses we give for proper containment and incomparability are over alphabets of
at most $2$ symbols, so there is no question of coincidence of subclasses of relations over $2$-symbol alphabets.




\section{Preliminaries}
\label{sec:preliminaries}

Throughout the paper, $X$ will denote a finite alphabet and $\emptyword$ the empty word.

There are two ways to describe a binary relation $\rho$ on $X^*$ (that is, a subset $\rho$
of $X^* \times X^*$) using languages:
\begin{itemize}
\item one can specify a sublanguage $L_\rho$ of $X^*\#X^*$ (where $\#$ is a symbol not in $A$) such that
  $L_{\rho} = \gset{u\#v^\rev}{u,v\in A^*, u\mathrel{\rho} v}$;
\item one can specify a language $L$ over the alphabet
  $\Xtt := (X \cup \set{\emptyword}) \times (X \cup \set{\emptyword})$ such that $\rho = L\pi$, where the map
  $\pi$ is defined by
  \[
    (x_1,y_1)\cdots (x_k,y_k) \mapsto (x_1\cdots x_k,y_1\cdots y_k) \quad \text{for $x_i,y_i \in X \cup \set{\emptyword}$.}   
  \]
  (Note that there may be many choices for the language $L$, and multiple words within $L$ mapping
  onto a given element of $\rho$.)
\end{itemize}
For  a class of languages $\C$, let
\[
  T(\C) = \gset{\rho}{\rho = L\pi \text{ for some $L \in \C$}}
\]
and
\[
  U(\C) = \gset{\rho}{L_\rho \in \C}.
\]
Mnemonically, $T(\cdot)$ signifies `(two-)tape relation'; $U(\cdot)$ signifies `unfolded relation'. In the remainder of
the paper, we will often omit mention of the map $\pi$ and simply think of grammars over $\Xtt$ as defining binary relations.

\begin{example}\label{rev}
Let $|X|\geq 2$ and let $\rho = \gset{ (w,w^\rev)}{ w\in X^*}$ be the relation that relates each word
to its reverse.  Then $\rho$ is in $T(\CF)$ but not in $U(\CF)$.
\end{example}
\begin{proof}
The following context-free grammar defines a language $L$ over $\Xtt$ with $L\pi = \rho$.
\[
S\ra (x,\emptyword) S (\emptyword,x) \mid \emptyword \quad (\forall x\in X).
\]

However, $L_\rho = \gset{ w\# w}{w\in X^*}$, which is well known not to be context-free
(this can be proved by the pumping lemma).
\qed
\end{proof}

It is clear that if $\C$ and $\D$ are classes of languages and $\C \subseteq \D$, then $U(\C) \subseteq U(\D)$ and
$T(\C) \subseteq T(\D)$.  The following result implies that \emph{proper} containment is also inherited from language classes
by classes of two-tape and unfolded relations. The technical condition that language classes considered are closed under
left quotient and left concatenation by a single symbol is a very weak property that is satisfied by all language
classes considered in this paper and many others.

\begin{proposition}\label{basics}
  Let $\C$ and $\D$ be classes of languages closed under homomorphism and under left quotient and left concatenation by
  a single symbol.  If $\C\setminus \D$ is non-empty, then the following classes of relations are all non-empty:
  $U(\C)\setminus U(\D)$, $T(\C)\setminus T(\D)$, $U(\C)\setminus T(\D)$ and $T(\C)\setminus U(\D)$.
\end{proposition}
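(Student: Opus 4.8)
The plan is to transfer the separating language $L\in\C\setminus\D$ into relations by placing $L$ (or its reverse) on a single tape and the empty word on the other, and then to pass freely between a relation and its underlying language using homomorphisms, together with the observation that the separator $\#$ can be \emph{created} by a single left-concatenation and \emph{destroyed} either by a single left-quotient or by a letter-erasing homomorphism. Concretely I would work with the two relations
\[
  \alpha=\gset{(w,\emptyword)}{w\in L}\quad\text{and}\quad\beta=\gset{(\emptyword,w)}{w\in L^{\rev}},
\]
where $\alpha$ is realised by applying the homomorphism $x\mapsto(x,\emptyword)$ to $L$, and where the reversed second component of $\beta$ is chosen precisely so that its unfolded language is the \emph{non-reversed} word set $L_\beta=\gset{\#u}{u\in L}=\#L$.

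First I would dispatch the two ``diagonal'' separations and the $T$-versus-$U$ separation, all of which are routine. The witness for both $T(\C)\setminus T(\D)$ and $T(\C)\setminus U(\D)$ is $\alpha$: it lies in $T(\C)$ because $\C$ is closed under homomorphism; it is not in $T(\D)$ because any two-tape language defining $\alpha$ has all second components empty, so the coordinate homomorphism $(x,\emptyword)\mapsto x$ would place $L$ in $\D$; and it is not in $U(\D)$ because $L_\alpha=\gset{w\#}{w\in L}$, whence the homomorphism fixing $X$ and erasing $\#$ would again place $L$ in $\D$. The witness for $U(\C)\setminus U(\D)$ is $\beta$: here $L_\beta=\#L\in\C$ by a single left-concatenation, while $L_\beta\in\D$ would give $L=\#^{-1}(\#L)\in\D$ by a single left-quotient. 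Each of these three steps uses only the stated closure of $\C$ (to establish membership) and of $\D$ (to derive the contradiction), so none presents any difficulty.

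The remaining class $U(\C)\setminus T(\D)$ is the genuinely delicate one, and I expect the reversal built into the unfolded encoding to be the main obstacle: the two-tape picture reads both components left to right, whereas the unfolded picture reverses the second component, so the two viewpoints never present $L$ in the same orientation. Indeed $\beta$ sits comfortably in $U(\C)$, but its range is $L^{\rev}$, and ruling out $\beta\in T(\D)$ yields only $L^{\rev}\in\D$; dually the relation $\gamma=\gset{(\emptyword,w)}{w\in L}$ satisfies $\gamma\notin T(\D)$ at once (its range is $L$), yet certifying $\gamma\in U(\C)$ requires $L_\gamma=\#L^{\rev}\in\C$, that is, $L^{\rev}\in\C$. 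I would therefore settle this class by a case split: if $L^{\rev}\notin\D$ then $\beta$ is the witness, and if $L^{\rev}\in\C$ then $\gamma$ is the witness. The crux is to guarantee that one of these two alternatives holds, and this is exactly the point where an argument confined to the three stated closures stalls (the awkward case being $L^{\rev}\in\D\setminus\C$); since every language class considered in this paper is closed under reversal, $L\notin\D$ forces $L^{\rev}\notin\D$, so the first alternative always applies. I would close the proof by invoking this reversal closure, and flag it as the precise additional ingredient needed to push the $U(\C)\setminus T(\D)$ case through.
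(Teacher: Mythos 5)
Your proof is correct and in fact more careful than the paper's own. The paper uses a single witness for all four separations: for $K\in\C\setminus\D$ it takes $\rho_K=\gset{(\emptyword,w)}{w\in K}$ (your $\gamma$), asserts $L_{\rho_K}=\gset{\#w}{w\in K}$, and derives everything from left concatenation/quotient by $\#$ together with projection onto the second tape. But by the paper's own definition of the unfolded encoding, $L_{\rho_K}=\gset{\#w^\rev}{w\in K}=\#K^\rev$, not $\#K$, so the paper's claim that $L_{\rho_K}\in\C$ if and only if $K\in\C$ (and the corresponding claim for $\D$) silently invokes closure under reversal --- exactly the obstruction you isolate. Thus the paper's argument needs reversal closure for every part involving $U(\cdot)$, whereas your decomposition --- $\alpha$ for $T(\C)\setminus T(\D)$ and $T(\C)\setminus U(\D)$, $\beta$ for $U(\C)\setminus U(\D)$ --- settles three of the four classes from the stated hypotheses alone, and confines the appeal to reversal to the single class $U(\C)\setminus T(\D)$, where it does appear unavoidable: homomorphism, left quotient and left concatenation all preserve orientation, while the unfolded encoding reverses the second tape relative to the two-tape one. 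Since every class considered in the paper ($\rat$, $\OC$, $\CF$, $\edtol$, $\etol$, $\lin$, $\ind$, $\csl$) is closed under reversal, both arguments are sound for the paper's purposes; yours has the additional merit of making the hidden hypothesis explicit, and is the proof one would want if the proposition is to be read with exactly the hypotheses stated.
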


\begin{proof}
For any language $K\subseteq X^*$, let $\rho_K = \gset{ (\emptyword, w)}{w\in K}$.
Then $L_{\rho_K} = \gset{\#w}{w\in K}$ is in $\C$ (and hence $\rho_K\in U(\C)$)
if and only if $K\in \C$.
 If $\rho\in T(\C)$ then the projections of $\rho$ onto each component are languages in $\C$,
 so we also have $\rho_K\in T(\C)$ if and only if $K\in \C$.
 Hence if $K\in \C\setminus \D$, then $\rho_K$ is in $U(\C)\setminus U(\D)$,
 $T(\C)\setminus T(\D)$, $U(\C)\setminus T(\D)$ and $T(\C)\setminus U(\D)$.
 \qed
\end{proof}

Since even a linear bounded automaton is powerful enough that there is 
no effective difference between the input $(u,v)$ and the input $u\#v^\rev$,
the following proposition is straightforward.

\begin{proposition}\label{csl}
If $\C$ is the class of context-sensitive, recursive or
recursively enumerable languages, we have $U(\C) = T(\C)$.
\end{proposition}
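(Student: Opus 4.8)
The plan is to prove the two inclusions $U(\C)\subseteq T(\C)$ and $T(\C)\subseteq U(\C)$ separately, in each case by a direct simulation argument that exploits the power of the recognising device: a linear-bounded automaton for $\csl$, a halting Turing machine for the recursive languages, and an arbitrary Turing machine for the recursively enumerable languages. The governing observation, as the remark before the statement suggests, is that the two encodings of a pair $(u,v)$ carry essentially the same information at essentially the same cost: the unfolded word $u\#v^\rev$ has length $\abs{u}+\abs{v}+1$, while every $\sigma\in\Xtt^*$ with $\sigma\pi=(u,v)$ has length at most $\abs{u}+\abs{v}$, so the two sizes differ only by a constant, and a given pair has only finitely many preimages under $\pi$.

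For $U(\C)\subseteq T(\C)$ I would produce an explicit witnessing language over $\Xtt$, namely
\[
  L = \gset{(u_1,\emptyword)\cdots(u_m,\emptyword)(\emptyword,v_1)\cdots(\emptyword,v_n)}{u_1\cdots u_m \mathrel{\rho} v_1\cdots v_n},
\]
which has exactly one preimage of this block shape for each pair, so that $L\pi=\rho$. A recogniser for $L$ reads a candidate $\sigma$, checks (by a regular test) that it has the required shape, reads off $u$ and $v$, writes down $u\#v^\rev$, and then runs the recogniser for $L_\rho\in\C$. Since $\abs{u\#v^\rev}=\abs{\sigma}+1$, this uses only linearly much extra tape, so it is performed by an LBA when $\C=\csl$ and by a (halting, respectively arbitrary) Turing machine in the recursive and recursively enumerable cases. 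Hence $L\in\C$ and $\rho\in T(\C)$.

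For $T(\C)\subseteq U(\C)$ I am given $\rho=L\pi$ with $L\in\C$ over $\Xtt$ and must show that the \emph{fixed} language $L_\rho=\gset{u\#v^\rev}{u\mathrel{\rho}v}$ lies in $\C$. On input $u\#v^\rev$ the recogniser recovers $u$ and $v$ and must decide whether some $\sigma\in L$ has $\sigma\pi=(u,v)$. In the recursively enumerable case this is immediate, by enumerating $L$ and comparing $\pi$-images. The recursive and $\csl$ cases rest on the finiteness of the preimage: it suffices to run through the finitely many $\sigma$ with $\abs{\sigma}\le\abs{u}+\abs{v}$, testing each for membership in $L$. This search is decidable in the recursive case, and since each such $\sigma$ is of length linear in the input it is carried out in linear space by a nondeterministic LBA in the $\csl$ case.

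The one genuinely delicate point, and the step I expect to matter most, is precisely this finiteness of preimages — equivalently, that the degenerate pair $(\emptyword,\emptyword)$ need never be used, so that preimages are bounded in length by $\abs{u}+\abs{v}$. This is what prevents a recogniser from having to inspect unboundedly many padded preimages of a single pair, and it is exactly the reason the argument goes through for the recursive languages and for $\csl$ and not merely for the recursively enumerable ones. The remainder — the regular shape tests, the treatment of empty components $u$ or $v$, and the usual convention about the empty word for $\csl$ — is routine bookkeeping.
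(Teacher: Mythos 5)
Your overall structure --- two inclusions, each established by a simulation that converts between the two encodings at only linear cost --- is exactly the argument the paper has in mind; the paper in fact prints no proof at all, only the preceding remark that a linear bounded automaton sees ``no effective difference'' between the inputs $(u,v)$ and $u\#v^\rev$. Your $U(\C)\subseteq T(\C)$ direction is correct for all three classes, and your $T(\C)\subseteq U(\C)$ argument is correct in the recursively enumerable case, where enumeration of $L$ copes with arbitrarily long witnesses.

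However, the step you yourself single out as delicate --- that the pair $(\emptyword,\emptyword)$ ``need never be used'', so that every $\pi$-preimage of $(u,v)$ has length at most $\abs{u}+\abs{v}$ --- is asserted rather than proved, and as written this is a genuine gap in the recursive and context-sensitive cases. The paper's definition $\Xtt=(X\cup\set{\emptyword})\times(X\cup\set{\emptyword})$ literally admits $(\emptyword,\emptyword)$ as an alphabet symbol, so the witness $L\in\C$ you are handed may use it, and you cannot simply delete it: neither the recursive nor the context-sensitive languages are closed under erasing homomorphisms, and no padding-free witness in $\C$ need exist. Concretely, let $A\subseteq\N$ be recursively enumerable but not recursive, and let $f$ be a computable injection with image $A$; then
\[
L=\gset{(x,\emptyword)^{f(k)}(\emptyword,\emptyword)^{k}}{k\in\N}
\]
is recursive (check the shape, compute $f(k)$, compare), yet $L\pi=\gset{(x^{n},\emptyword)}{n\in A}$ has $L_{L\pi}=\gset{x^{n}\#}{n\in A}$ non-recursive: your bounded-length search misses the only witnesses, which are the padded words. (A space-hierarchy analogue of the same construction, padding with $(\emptyword,\emptyword)^{2^{n}}$, defeats the $\csl$ case.) So the claim you lean on cannot be derived from the definitions; it must be imposed as a convention, namely that the two-tape alphabet is $\Xtt\setminus\set{(\emptyword,\emptyword)}$, as is standard for padded alphabets and is surely what the authors intend (for every other class in the paper the point is moot, since those classes are closed under homomorphism). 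With that convention stated explicitly your bounded search, and hence your whole proof, is correct; without it the step fails, and indeed the inclusion $T(\C)\subseteq U(\C)$ itself is false for these two classes.
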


Note that there are two definitions of linear indexed grammars in the
literature.  The one we have in mind and denote by $\lin$ is that in which,
at each derivations step, flags are only copied to one of the 
(possibly multiple) non-terminals on the right-hand side of each production.
The linear indexed grammars in this sense are
equivalent to tree-adjoining grammars \cite[p.72]{kallmeyer_parsing}.
The second definition (see for example \cite{duske_linear}) requires there to
be at most one non-terminal on the right hand side of any production.
All of our results stated for $\lin$ in fact hold equally for the second definition.

\section{Some further examples}

Before moving on to establishing the hierarchy illustrated in Figure~\ref{hierarchy},
we give several examples of binary relations and the language classes they belong to.
While these examples do illustrate various aspects of the hierarchy, the intention 
in this section is not to prove these points but to explore some of the richness of 
binary relations from a language-theory viewpoint.  The hierarchy will be systematically 
established in sections~\ref{UT} and~\ref{TU}, with an attempt to give simple examples 
wherever possible.

\begin{example}\label{functions}
Any function $f:\N_0\ra \N_0$ can be viewed as a binary relation over an alphabet $|X| = \{x\}$,
by defining $\rho_f  = \gset{ (x^n, x^{f(n)})}{n\in \N_0}$.
Consider the following functions:
\begin{itemize}
\item For $e: \N_0\ra \N_0$ given by $n\mapsto n$, we have $\rho_e\in T(\rat)\setminus U(\rat)$.
\item For $f: \N_0\ra \N_0$ given by $n\mapsto n \bmod p$ (some $p\in \N$), 
we have $\rho_f\in U(\rat)$.
\item For $g: \N_0\ra \N_0$ given by $n\mapsto n^2$, we have $\rho_g\in U(\etol)\setminus T(\lin)$.
\item For $h: \N\ra \N$ given by $n\mapsto n^n$, we have $\rho_h\in U(\csl)\setminus T(\ind)$.
\end{itemize}
\end{example}

\begin{proof}
The relation of the identity function $e$ is 
$\rho_e = \gset{ (x^n, x^n)}{n\in \N_0} = (x,x)^*$, which is rational.
However, $L_{\rho_e} = \gset{ x^n\# x^n}{n\in \N_0}$, which is easily
seen by the pumping lemma to be non-rational.

We have $L_{\rho_f} = x^n \# x^{n \bmod p}$.  This is recognised by a finite automaton 
with states $\{q_0,\ldots,q_{p-1}, r_0,\ldots, r_{p-1}\}$, with $q_0$ the initial state,
$r_0$ the unique final state, and transitions 
\[ q_i \stackrel{x}{\ra} q_{i+1 \bmod p}, \quad q_i \stackrel{\#}{\ra} r_i, \quad r_j \stackrel{x}{\ra} r_{j-1}, \]
for $0\leq i\leq p-1$ and $1\leq j\leq p-1$.

An ET0L-system for the language $L_{\rho_g}$ can be obtained by considering the 
expression of $n^2$ as the sum of the first $n$ odd numbers.
The system has axiom $A\#B$ and tables
\begin{itemize}
\item Table 1:  $A\ra xA, \quad B\ra xBC, \quad C\ra x^2 C$;
\item Table 2:  $A,B,C\ra \emptyword.$
\end{itemize}
At each iteration of Table~1, the number of occurrences of $A$ and $B$ remains at one each,
while the number of occurrences of $C$ is increased by $1$.
One $x$ is generated to the left of $\#$ by $A$, while $2c+1$ symbols $x$ are generated to the right of $\#$, where $c$ is the number of occurrences of $C$ in the current sentential form.  
Hence after $n$ iterations of Table~1 and one application of Table~2, we obtain
$x^n\# x^{s(n)}$, where $s(n) = \sum_{i=0}^{n-1} 2i+1 = n^2$. 
Thus $\rho_g\in U(\etol)$.
The Parikh image of a linear indexed language is a semilinear set
\cite[Theorem~5.1]{duske_linear}, hence no language
$L$ with $L\pi = \rho_g$ can be linear indexed, so $\rho_g\notin U(\lin)$.

Suppose $\rho_h\in T(\ind)$; then the language $\gset{x^{n^n}}{n\in \N}$ (obtained by 
projecting onto the second tape) is indexed.
By Gilman's shrinking lemma \cite{gilman_shrinking} (with $m=1$), there exists $k$ such that 
if $n^n>k$ then there exist $r\in \set{2,\ldots k}$ and $l_i\in \N_0$ such that
we can write $n^n = \sum_{i=1}^r l_i$, and
for any $j\in \set{1,\ldots,r}$ there is a proper subset $I$ of $\set{1,\ldots,r}$ with $i\in I$
and $s_j := \sum_{i\in I} l_i = t^t$ for some $t\in \N$.
We may choose $j$ such that $l_j\geq n^n/k$, since $r\leq k$.
But if $n>k$, we then have $l_j\geq n^{n-1} > (n-1)^{(n-1)}$, so that it is not possible 
for the sum $s_j$ to be of the form $t^t$ for $t\in \N$.  Hence $\rho\notin T(\ind)$.
However, $L_{\rho_h}$ is recognised by a linear bounded automaton that writes input
of the form $x^n\#x^m$ onto the tape (rejecting input not of this form) and then checks
whether $m = n^n$ by essentially performing successive multiplications by $n$, using
the initial $x^n$ portion of the tape both to count to $n$ for the multiplication, and to ensure
it is performed $n$ times.  Hence $\rho\in U(\csl)$.
\qed
\end{proof}

The method of proof for $\rho_g\in U(\etol)$ should be extendable to all functions 
$n\mapsto n^k$ for $k\in \N$ (the authors have shown it for $k=3$), and we conjecture 
more strongly:

\begin{conjecture}\label{polynomials}
Let $p:\N\ra \N$ be a polynomial function.  Then $\rho_p\in U(\etol)$.
\end{conjecture}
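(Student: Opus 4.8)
The plan is to generalise the ET0L ``difference engine'' used for $n\mapsto n^2$ in Example~\ref{functions} to arbitrary polynomials, exploiting the classical fact that the iterated forward differences of a polynomial eventually terminate in a constant. Write $\Delta q(n) = q(n+1)-q(n)$. If $p$ is constant then $\rho_p$ is clearly rational (and hence in $U(\etol)$), so I would assume $p$ is non-constant of degree $k$. Since $p$ maps $\N$ into $\N$ its leading coefficient is positive, each iterated difference $\Delta^j p$ ($0\le j\le k$) is an integer-valued polynomial of degree $k-j$ with positive leading coefficient, and $\Delta^k p$ is a positive integer constant. First I would fix a threshold $N$ large enough that $\Delta^j p(n)\geq 0$ for all $n\geq N$ and all $0\le j\le k$; this is possible because each $\Delta^j p$ is eventually positive.

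For $n\geq N$ I would build an ET0L system whose sentential forms encode the ``registers'' $\Delta^j p$ as multiplicities of nonterminals $B_1,\dots,B_k$, together with a left counter $A$. The axiom hard-codes the starting configuration at $n=N$: it is $x^N A \,\#\, x^{p(N)} B_1^{r_1}\cdots B_k^{r_k}$, where $r_j = \Delta^j p(N)$ is a fixed nonnegative integer. The ``step'' table realises one increment of $n$ via the simultaneous recurrence $\Delta^j p(n+1) = \Delta^j p(n)+\Delta^{j+1} p(n)$ through the rules
\[
A\ra xA,\quad B_k\ra B_k B_{k-1},\quad B_j\ra B_j B_{j-1}\ (2\le j\le k-1),\quad B_1\ra xB_1,
\]
with all terminals mapped to themselves; thus each $B_{j+1}$ spawns one $B_j$ (so the number of $B_j$ increases by the number of $B_{j+1}$), the number of $B_k$ stays constant, and the number of $x$ emitted to the right of $\#$ during a step equals the current number of $B_1$. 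A second table $A,B_1,\dots,B_k\ra\emptyword$ (terminals fixed) then erases all nonterminals.

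A straightforward induction on the number $m$ of applications of the step table shows that after $m$ steps the sentential form has $x^{N+m}$ to the left of $\#$ and $x^{p(N+m)}$ (together with nonterminals) to the right, with the multiplicity of each $B_j$ equal to $\Delta^j p(N+m)$; here the key point is that the simultaneous rewriting uses the old register values for both the emission from $B_1$ and the update of the $B_j$, matching the difference engine exactly. Applying the erasing table then yields precisely $x^{N+m}\#x^{p(N+m)}$, and since that table removes every nonterminal at once, no further table application produces new words. Hence the system generates exactly $\gset{x^n\#x^{p(n)}}{n\geq N}$. The remaining words $\gset{x^n\#x^{p(n)}}{n<N}$ form a finite, hence ET0L, language, and as $\etol$ is closed under union we would conclude $L_{\rho_p}\in\etol$, that is, $\rho_p\in U(\etol)$.

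The main obstacle is exactly what forces the threshold-plus-finite-correction split: an L-system can only add symbols, never subtract them, so the construction faithfully computes $p$ only while every register $\Delta^j p(n)$ is nonnegative, whereas the lower-order differences of a genuine integer-valued polynomial may well be negative for small arguments (for instance with $p(n)=(n-c)^2$). Verifying that all the differences are eventually nonnegative, and that the parallel rewriting of the step table reproduces the finite-difference recurrence with no off-by-one error, are the two points requiring care; the supporting induction and the union with the finite initial segment are then routine.
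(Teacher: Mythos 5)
There is no proof in the paper to compare against: the statement is Conjecture~\ref{polynomials}, which the authors leave open, proving only the case $p(n)=n^2$ (Example~\ref{functions}) and remarking that the method ``should be extendable''. Your proposal carries out exactly that extension via finite differences, and as far as I can check it is correct. The three points needing care all hold: (i) each iterated difference $\Delta^j p$ is an integer-valued polynomial with positive leading coefficient, so all of them are simultaneously nonnegative from some threshold $N$ on, making the axiom exponents $r_j=\Delta^j p(N)$ legitimate nonnegative integers; (ii) a table application rewrites every symbol exactly once using the \emph{old} multiplicities, so one step implements $\Delta^j p(n+1)=\Delta^j p(n)+\Delta^{j+1}p(n)$ and $p(n+1)=p(n)+\Delta^1 p(n)$ with no cascading and no off-by-one, so your induction goes through; (iii) once the erasing table is applied the word is terminal and is fixed by both tables, so the system generates exactly $\gset{x^n\# x^{p(n)}}{n\ge N}$, and the union with the finite initial segment stays in $\etol$ since $\etol$ is closed under union (and contains all finite languages). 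One small repair: for $\deg p=1$ your rule list is inconsistent, since $B_k\ra B_kB_{k-1}$ would refer to a nonexistent $B_0$ while $B_1\ra xB_1$ is also prescribed for the same symbol; the table should be stated as $B_1\ra xB_1$ together with $B_j\ra B_jB_{j-1}$ for $2\le j\le k$, so that when $k=1$ the emitter and the top register coincide and its multiplicity stays constant, as the recurrence requires. With that reading, your argument settles the conjecture, by precisely the route the authors anticipated.
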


We note that a fact in some sense `inverse' to Conjecture~\ref{polynomials} is known:
If $p:\N\ra \N$ is a polynomial function, then there exists an D0L language (D0L is a subclass of ET0L)
with growth function $p$ \cite{ruohonen_synthesis}.

\begin{example}\label{content}
Let $X = \set{1,\ldots,n}$ and let
\[o = \gset{(w, 1^{|w|_1}\ldots n^{|w|_n})}{w\in X^*}.\]
The relation $o$ is a function that sorts a word into order.
The language complexity of $o$ appears to increase as $n$ increases:
$o$ is in $T(\CF)$ for $n\leq 2$, in $U(\lin)$ for $n\leq 3$,
in $T(\lin)$ for $n\leq 4$, and in $U(\ind)$ for all $n$.
\end{example}
\begin{proof}
We need only provide grammars for the largest $n$ claimed in each case,
since the relations for smaller $n$ are homomorphic images of those for larger $n$.

For $n=2$, a context-free grammar for $o$ is given by
\[ S\ra (1,1)S \mid (2,\emptyword)S(\emptyword,2)\mid \emptyword.\]

In indexed grammars we will be using the notation $A^f$ for a non-terminal
with flag $f$.  Thus in the following grammars the superscripts in $\N$ denote flags 
rather than powers of the non-terminal.

For $n=3$, a linear indexed grammar for $L_o$ is given by
\begin{align*}
S&\ra 1S1 \mid 2S^2 \mid 3S^3 \mid \#T\\
T^2&\ra T2\\
T^3&\ra 3T\\
T^{\$}&\ra \emptyword.
\end{align*}
Hence $o\in U(\lin)$ for $n\leq 3$, since $\lin$ is closed under homomorphism.

For $n=4$, a linear indexed grammar for $o$ is given by
\begin{align*}
S&\ra (1,1)S\mid (2,\emptyword)S^2\mid (3,\emptyword)S^3\mid (4,\emptyword)S(\emptyword,4)
\mid T\\
T^2&\ra (\emptyword,2)T\\
T^3&\ra T(\emptyword,3)\\
T^{\$}&\ra \emptyword.
\end{align*}
Hence $o\in T(\ind)$ for $n\leq 4$.

An indexed grammar for $L_o$ is given by the following productions,
where $x$ ranges over all elements of $X$:
\begin{align*}
S&\ra S^x \mid T\# T_1 T_2 \ldots T_n
& T^x&\ra xT\\
T_x^x&\ra T_x
& T^{\$}&\ra \emptyword\\
T_x^y&\ra T_x \quad (\forall y\in X\setminus\{x\})
& T_x^{\$}&\ra \emptyword .
\end{align*}
Hence $o\in U(\ind)$ for all $n$.
\qed
\end{proof}

\begin{proposition}
The relation $o$ in Example~\ref{content} is not in $T(\CF)$ for $n>2$.
\end{proposition}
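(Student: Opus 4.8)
The plan is to argue by contradiction, restricting the two-tape relation so that one of its coordinate projections becomes a visibly non-context-free language. Suppose $o \in T(\CF)$, say $o = L\pi$ for some context-free language $L$ over $\Xtt$. Let $h_1,h_2\colon \Xtt^* \to X^*$ be the homomorphisms reading off the first and second coordinate of each letter, so that $z\pi = (h_1(z),h_2(z))$ for every $z \in \Xtt^*$. The whole argument works directly for any $n > 2$, using only the letters $1,2,3$, so no preliminary reduction in $n$ is needed.

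The crux is the choice of input restriction. The output projection of $o$ is merely $1^*2^*3^*$, which is context-free, so I cannot simply project; instead I would intersect the input tape with the regular set $(123)^*$, which forces the three letter-counts to be equal and thereby turns the output into $\gset{1^m2^m3^m}{m\geq 0}$. Concretely, set $M = L \cap h_1^{-1}((123)^*)$. Since $(123)^*$ is regular, $h_1^{-1}((123)^*)$ is regular and hence $M$ is context-free. I would then verify
\[
  M\pi = \gset{\bigl((123)^m,\, 1^m2^m3^m\bigr)}{m\geq 0}:
\]
every $z \in M$ has $h_1(z) = (123)^m$ for some $m$, and as $z \in L$ we get $h_2(z) = \operatorname{sort}(h_1(z)) = 1^m2^m3^m$ (the letters $4,\dots,n$ never occur, so for $n>2$ the sorted word is exactly $1^m2^m3^m$); conversely each pair $((123)^m, 1^m2^m3^m)$ lies in $o$, hence has a preimage in $L$, which necessarily lies in $M$.

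Finally I would take the second-coordinate projection. Exactly as in the proof of Proposition~\ref{basics}, the projection of a relation in $T(\CF)$ onto either component is context-free; equivalently $h_2(M)$ is a homomorphic image of the context-free language $M$ and so is context-free. But $h_2(M) = \gset{1^m2^m3^m}{m\geq 0}$, which is not context-free by the pumping lemma. This contradiction gives $o \notin T(\CF)$ for every $n > 2$.

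The step I expect to be the real obstacle is not the closure bookkeeping but the selection of the input restriction, and it is worth flagging why a direct pumping argument on $L$ is unattractive: a word of $L$ may interleave its input and output symbols arbitrarily (and may carry $(\emptyword,\emptyword)$ padding), so one has almost no control over where a pumped factor sits. In fact, restricting the input to any single block order such as $2^*3^*1^*$ leaves a relation that is still context-free (a short grammar using paired blocks realises $\gset{(2^b3^c1^a,\,1^a2^b3^c)}{a,b,c\geq 0}$), so the non-context-freeness must be coaxed out of inputs with genuinely interleaved letters. Intersecting with $(123)^*$ is precisely what collapses the output to $\gset{1^m2^m3^m}{m\geq 0}$, after which the contradiction is immediate from standard closure properties.
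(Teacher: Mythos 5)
Your proof is correct and takes essentially the same approach as the paper's: intersect the relation with $(123)^*$ on the input tape, project onto the second tape, and observe that $\gset{1^m2^m3^m}{m\geq 0}$ is not context-free. The only difference is that you spell out the closure-property bookkeeping (inverse homomorphism, intersection with a regular language, homomorphic image) that the paper's one-line proof leaves implicit.
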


\begin{proof}
Suppose $n\geq 3$ and let $L$ be the language obtained by intersecting 
$o$ with $(123)^*\times X^*$ and then projecting onto the second tape.
If $o\in T(\CF)$, then $L$ is context-free.  But $L = \gset{1^k2^k3^k}{k\in \N_0}$,
which is not context-free.
\qed
\end{proof}

\begin{conjecture}
The relation $o$ in Example~\ref{content} is not in $T(\etol)$ or in $T(\lin)$ for sufficiently large $n$.
\end{conjecture}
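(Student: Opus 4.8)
The plan is to handle the two non-membership claims by the same reduction used in the proof that $o\notin T(\CF)$ for $n>2$, and then to observe that this reduction settles the $T(\lin)$ case but leaves the $T(\etol)$ case open. First I would intersect $o$ with the recognizable relation $(12\cdots n)^*\times X^*$ and project onto the second tape. As $o$ is a function sending $(12\cdots n)^k$ to $1^k2^k\cdots n^k$, the outcome is exactly $D_n:=\gset{1^k2^k\cdots n^k}{k\in\N_0}$. Both $\lin$ and $\etol$ are closed under intersection with regular languages and under (possibly erasing) homomorphism, so $o\in T(\lin)$ forces $D_n\in\lin$ and $o\in T(\etol)$ forces $D_n\in\etol$.

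For the linear indexed case this is conclusive. A linear indexed language is a tree-adjoining language, and it is classical---a consequence of the pumping lemma for tree-adjoining grammars, reflecting that their counting power is exactly four---that $\gset{a_1^k\cdots a_m^k}{k\in\N_0}$ is a tree-adjoining language if and only if $m\leq4$. Hence $D_n\notin\lin$ as soon as $n\geq5$, giving $o\notin T(\lin)$ for all $n\geq5$. Unlike the treatment of $\rho_g$ in Example~\ref{functions}, semilinearity is of no help here: the Parikh image of $D_n$ is the semilinear set $\gset{(k,\ldots,k)}{k\in\N_0}$, so the obstruction is structural and must be read off a counting/pumping lemma for tree-adjoining grammars.

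For the ET0L case the same reduction yields nothing, and this is the crux. In fact $D_n\in\edtol$ for every $n$: take axiom $\hat A_1\cdots\hat A_n$, a growth table with productions $\hat A_i\ra\hat A_i A_i$ and identity rules $A_i\ra A_i$, and a terminating table $\hat A_i\ra\emptyword$, $A_i\ra i$; then $k$ applications of the growth table produce $\hat A_1A_1^k\cdots\hat A_nA_n^k$ in parallel, which the terminating table sends to $1^k\cdots n^k$. Projecting onto one tape therefore discards precisely the order-scrambling information that ought to make $o$ hard. The plan would instead be to argue directly about the alignment language $L\subseteq\Xtt^*$ with $L\pi=o$, establishing a combinatorial lower bound that detects the simultaneous, reordering correspondence that $\pi$ imposes between the $n$ input blocks and the $n$ sorted output blocks under adversarial input orderings.

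The hard part, and the reason this half remains conjectural, is that such a lower bound cannot be packaged as non-membership of a single extracted language. Any language obtained from $L$ by a rational transduction that factors through $\pi$---and hence depends only on $o$ rather than on the chosen $L$---records only the count-based link between a word and its sorted image, and every such invariant one can produce turns out to be $\etol$, exactly as the construction for $D_n$ shows. A proof must therefore step outside the AFL toolkit and prove a genuine pumping- or interchange-type theorem for $\etol$ systems sensitive to the number of independently matched, interleaved segments. The difficulty is that parallel rewriting synchronises arbitrarily many equal-length blocks---this is why $D_n$ is $\edtol$---so the argument has to isolate the specific cost of the interleaving forced by $\pi$, and no existing pumping lemma for $\etol$ appears fine enough for this. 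This asymmetry is what leaves the $T(\etol)$ half out of reach of current techniques while the $T(\lin)$ half is settled.
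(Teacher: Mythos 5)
You should be aware that the paper offers no proof of this statement at all: it is stated as a conjecture, and the only related argument in the paper is the proposition immediately preceding it, which shows $o\notin T(\CF)$ for $n>2$ using exactly your reduction (intersect with $(12\cdots n)^*\times X^*$, project onto the second tape). Your treatment of the $T(\lin)$ half therefore goes beyond the paper, and it appears to be correct: if $o=L\pi$ with $L\in\lin$, then $\gset{1^k2^k\cdots n^k}{k\in\N_0}$ is obtained from $L$ by intersecting with the regular language $\pi_1^{-1}((12\cdots n)^*)$ (where $\pi_1$ is the first-component projection homomorphism on $\Xtt$) and then applying the homomorphism $(x,y)\mapsto y$; since the linear indexed (tree-adjoining) languages form a full AFL, this language is in $\lin$, and erasing all letters other than $1,\ldots,5$ yields $\gset{1^k2^k3^k4^k5^k}{k\in\N_0}\in\lin$, contradicting the fact the paper itself cites and uses in Proposition~\ref{linind} (Kallmeyer, Problem~4.5). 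So $o\notin T(\lin)$ for all $n\geq 5$; this in effect upgrades the $\lin$ half of the conjecture to a proposition, by the same method the authors used for $\CF$.

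However, your proposal does not prove the statement as a whole, and you concede as much: the $T(\etol)$ half is left open. Your diagnosis of \emph{why} the reduction fails there is sound --- the extracted language $\gset{1^k\cdots n^k}{k\in\N_0}$ is EDT0L for every $n$, by the system you exhibit, so no projection of this kind can separate $o$ from $T(\etol)$ --- and this is precisely why the paper can only conjecture non-membership. But your stronger assertion, that \emph{no} language obtained from $L$ by a rational transduction factoring through $\pi$ can witness non-membership because every such invariant is ET0L, is a heuristic rather than a theorem: you verify it for a single extraction and do not establish it for arbitrary recognizable relations in place of $(12\cdots n)^*\times X^*$, let alone for arbitrary transductions. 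Likewise, the proposed remedy --- a new pumping- or interchange-type theorem for ET0L systems sensitive to the number of independently matched interleaved segments --- is a research programme, not an argument. In summary: the $\lin$ half of your proposal is correct and genuinely new relative to the paper; the $\etol$ half contains a genuine (and honestly flagged) gap, which is exactly the part that keeps this statement a conjecture.
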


\begin{proposition}\label{cyclic}
The cyclic permutation relation $\kappa = \gset{ (uv, vu)} {u,v\in X^*}$ is in
$U(\etol)$ and in $U(\lin)$, but not in $T(\CF)$ for $|X|\geq 2$.
\end{proposition}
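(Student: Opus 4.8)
The plan is to phrase all three claims in terms of the unfolded language $L_\kappa = \gset{uv\#u^\rev v^\rev}{u,v\in X^*}$, since its internal structure controls the whole statement. Writing $u=u_1\cdots u_p$ and $v=v_1\cdots v_q$, a word of $L_\kappa$ reads $u_1\cdots u_p\,v_1\cdots v_q\,\#\,u_p\cdots u_1\,v_q\cdots v_1$, so it consists of two matched pairs — a $u$-pair joining the first block to the block just after $\#$, and a $v$-pair joining the second block to the last block — and these two pairs \emph{cross} rather than nest. Equivalently, accepting $L_\kappa$ means pushing $u$ and then $v$ and afterwards reproducing first $u^\rev$ and then $v^\rev$; a single stack would instead pop $v$ before $u$, yielding $uv\#v^\rev u^\rev=uv\#(uv)^\rev$, which is the context-free reverse relation of Example~\ref{rev}. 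This crossing is exactly what lifts $\kappa$ out of $\CF$ while keeping it inside $\etol$ and $\lin$.

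For $\kappa\in U(\etol)$ I would generate $L_\kappa$ by growing the two matched pairs independently, keeping sentential forms of the shape $u\,A\,v\,B\,\#\,C\,u^\rev\,D\,v^\rev$, with $A,B,C,D$ marking the four growth points and axiom $AB\#CD$. For each letter we use one table that appends it to $u$ and prepends it to $u^\rev$, and one that appends it to $v$ and prepends it to $v^\rev$:
\begin{align*}
T_x:&\quad A\ra xA,\ C\ra Cx,\ B\ra B,\ D\ra D\quad(\forall x\in X);\\
U_y:&\quad B\ra yB,\ D\ra Dy,\ A\ra A,\ C\ra C\quad(\forall y\in X);\\
E:&\quad A\ra\emptyword,\ B\ra\emptyword,\ C\ra\emptyword,\ D\ra\emptyword.
\end{align*}
Each table is deterministic, so this is in fact an EDT0L system, and an easy induction on the number of applications of the tables $T_x,U_y$ before the terminating table $E$ shows that its language is exactly $L_\kappa$; hence $\kappa\in U(\etol)$.

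For $\kappa\in U(\lin)$ I would record $u$ on the index stack, reproduce $v$ and $v^\rev$ by a linear context-free backbone, and emit $u^\rev$ in the middle by unwinding the stack; the crossing is resolved because $u$ is pushed \emph{before} the $v$-recursion and popped \emph{inside} it. With bottom marker $\$$ and axiom $S^{\$}$:
\begin{align*}
S&\ra xS^{x}\mid R\quad(\forall x\in X),\\
R&\ra xRx\mid C\quad(\forall x\in X),\\
C&\ra \#D,\\
D^{x}&\ra xD\quad(\forall x\in X),\\
D^{\$}&\ra \emptyword.
\end{align*}
Every production copies the index stack to a single nonterminal, so the grammar is linear indexed, and tracing a derivation yields $u\,v\,\#\,u^\rev\,v^\rev$; thus $L_\kappa\in\lin$ and $\kappa\in U(\lin)$.

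For $\kappa\notin T(\CF)$ I would argue by contradiction: if $\kappa=L\pi$ with $L$ context-free, apply the pumping lemma (or Ogden's lemma, marking positions to pin down where the iterated factors fall) to a representation of a rigid conjugate pair such as $(a^Nb^Na^{2N}b^{2N},\,a^{2N}b^{2N}a^Nb^N)$, where $u=a^Nb^N$ and $v=a^{2N}b^{2N}$. Pumping produces infinitely many pairs $(u_n,v_n)\in\kappa$, each with $v_n$ a cyclic rotation of $u_n$. The decisive feature is the crossing from the first paragraph: here $v$ is a suffix of $s$ but a prefix of $t$, so it must be reproduced in \emph{forward} order, while $u$ is a prefix of $s$ but a suffix of $t$; a context-free device can only replay a deferred factor in \emph{reversed} order (this is precisely the reverse relation of Example~\ref{rev}), so it cannot output the non-palindromic block $u=a^Nb^N$ forwards at the tail of $t$. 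I expect the main obstacle to be twofold: first, choosing the family rigidly enough, since whole-block rotations and palindromic factors \emph{are} context-free (so both $u$ and $v$ must be genuine two-letter words and the conjugating rotation must be unique); and second, turning this ``LIFO versus required FIFO'' obstruction into a clean count-and-block bookkeeping showing that the two available iterated factors cannot keep both matched pairs aligned for all $n$.
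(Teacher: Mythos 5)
Your ET0L system and your linear indexed grammar are correct and are, up to renaming of symbols, exactly the paper's constructions (the paper uses axiom $UV\#U'V'$ with per-letter tables growing $U,U'$ or $V,V'$ in tandem, and a linear indexed grammar that pushes $u$ on the index stack, produces $v\cdots v^{\rev}$ by a linear backbone, and pops $u^{\rev}$ after $\#$). The gap is the third claim, $\kappa\notin T(\CF)$ for $|X|\geq 2$: what you give is a plan with the decisive step missing, and you acknowledge this yourself. The step you lean on --- ``a context-free device can only replay a deferred factor in reversed order, so it cannot output $u=a^Nb^N$ forwards at the tail'' --- is not a lemma you can invoke, and as a blanket principle it is false: the equality relation $\gset{(w,w)}{w\in X^*}$ lies in $T(\rat)\subseteq T(\CF)$, because a two-tape grammar can interleave the two occurrences and never defer anything. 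The whole difficulty is to prove that, for a suitable family of pairs in $\kappa$, any word of a putative context-free language $L$ with $L\pi=\kappa$ must keep one of the two matched pairs separated by an unbounded stretch, and then to extract a contradiction from that separation; ``turning this into clean count-and-block bookkeeping'' is precisely the proof, not an afterthought to it.

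For comparison, the paper's argument runs as follows: work first with six distinct letters, intersect $\kappa$ with $(a^*b^*c^*x^*y^*z^*,\,x^*y^*z^*a^*b^*c^*)$, and pump a word $w$ representing $(uv,vu)$ with $u=a^nb^nc^n$, $v=x^ny^nz^n$. Because the six letters occupy disjoint blocks, every pumped component is a power of a single letter, and a case analysis on which components are nonempty shows that either the overlap $e_u$ between the two occurrences of $u$ inside $w$, or the overlap $e_v$, has length below the pumping constant. Deleting four of the six letters then leaves, for infinitely many $n$, the fully separated word $(a,\emptyword)^n(b,\emptyword)^n(\emptyword,a)^n(\emptyword,b)^n$ in a context-free image of $L$, and a second pumping argument on this four-block word gives the contradiction; finally the six letters are encoded as $1,10,100,\ldots$ over $\{0,1\}$ and the deleting homomorphisms replaced by rational transductions to cover all $|X|\geq 2$. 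Note also that your witness family $(a^Nb^Na^{2N}b^{2N},\,a^{2N}b^{2N}a^Nb^N)$ works against you here: with only two letters, a pumped $a$-power can sit in any of four $a$-blocks across the two tapes, and pumped words can re-factor as $(u'v',v'u')$ with a completely different split, so the bookkeeping you postpone would be strictly messier than with the paper's six-letter choice followed by encoding.
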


\begin{proof}
If $|A|=1$ then $\kappa$ is simply the equality relation, so assume $|X|\geq 2$.

An ET0L-system for $L_\kappa = \gset{uv\#u^\rev v^\rev}{u,v\in X^*}$ with axiom $UV\#U'V'$ is given by the following tables:
\begin{itemize}
\item
Table $1x$ (for each $x\in X$):
$U\ra xU$, $U'\ra U'x$;
\item
Table $2x$ (for each $x\in X$):
$V\ra xV$, $V'\ra V'x$;
\item
Table 3:
$U\ra \emptyword$, $V\ra \emptyword$.
\end{itemize}

A linear indexed grammar for $L_\kappa$ with start symbol $S$ is given by the productions:
\begin{align*}
S&\ra xS_x \; (\forall x\in X) \mid A;\\
A&\ra xAx \; (\forall x\in X) \mid \#B;\\
B_x&\ra xB \; (\forall x\in X);\\
B_{\$}&\ra \emptyword.
\end{align*}

To show that $\kappa\notin T(\CF)$, we first assume $|X|\geq 6$, and later explain
how to adapt the proof of this case for all $|X|\geq 2$.

Suppose that $\kappa\in T(\CF)$ and that $a,b,c,x,y,z$ are distinct elements of $X$.
Then the relation $\kappa_1 = \kappa \cap (a^*b^*c^*x^*y^*z^*, \, x^*y^*z^*a^*b^*c^*)$
is also in $T(\CF)$.  Let $Y = \set{a,b,c,x,y,z}$ and let $L\subseteq \Ytt$ be a context-free
language such that $L\pi = \kappa_1$.
Let $k$ be the pumping constant for $L$.
For $n>k$, let $u = a^n b^n c^n$ and $v_n = x^n y^n z^n$ and
let $w$ be any word in $L$ such that $w\pi = (uv,vu)$.

It will be helpful to view $w$ as a shuffle of the strings $u_1v_1$ and $u_2v_2$,
where $u_1$ is the string mapping onto $u$ on the first tape, and $u_2,v_1,v_2$
are defined analogously.  We shall be concerned with two (possibly empty)
special subwords of $w$:
these are $e_u$, the subword beginning with the first symbol from $u_2$ and ending with the
last symbol from $u_1$; and $e_v$, the subword beginning with the first symbol from
$v_1$ and ending with the last symbol from $v_2$.  The length of these subwords
shows how much the two occurrences of $u$
respectively $v$ overlap within $w$.

Since $|w|>k$, we can write
$w = pqrst$ such that $|r|<k$ and $pq^irs^it\in L$ for all $i\in \N_0$,
and at least one of $q$ or $s$ is non-empty.

Let $q\pi = (q_1,q_2)$ and $s\pi = (s_1,s_2)$.
Since $q$ and $s$ can be pumped in $w$, the form of pairs in $\kappa_1$
implies that each $q_i$ and $s_i$ is a (possibly trivial) power of a single letter from $Y$.
Moreover, at least two these powers must be non-trivial, otherwise deleting $q$ and $s$
from $w$ would result in a string $w'$ with different lengths on each tape, contradicting
$w'\pi\in \kappa_1$.

Let $P$ denote the subset of $\{q_1,q_2,s_1,s_2\}$ consisting of non-empty strings.
We proceed by case analysis, depending on the size of $P$, and whether
the elements of $P$ are in $A = a^*\cup b^*\cup c^*$ or $Z = x^*\cup y^*\cup z^*$ or both.\\

\textit{Case 1}  $|P|=2$.
To maintain membership of $L$ upon pumping, the strings in $P$ must
be in the same set $A$ or $Z$.  They also must come from different tapes (so
have different subscripts).
Suppose first that $P = \set{q_1,q_2}$.
If $q_1,q_2\in A$, then $v_2$ is a substring of $p$,
while $v_1$ is a substring of $rst$.  Thus $e_v$ is empty.
Symmetrically, if $q_1,q_2\in Z$, then $e_u$ is empty;
and if instead $P = \set{s_1,s_2}$, then
$e_u = \emptyword$ if $P\subseteq Z$, and $e_v = \emptyword$ if $P\subseteq A$.

Now suppose $P = \set{q_1,s_2}\subseteq Z$.
Then $u_1$ and $u_2$ are contained in $t$ and $p$ respectively, so $e_u=\emptyword$.
Symmetrically, if $P = \set{q_2,s_1}\subseteq A$ then $e_v = \emptyword$.

However, if $P = \set{q_1,s_2}\subseteq A$, we do not obtain separation.
In this case $v_1$ is contained in $rst$,
while $v_2$ is contained in $pqr$.
Thus these strings overlap only in $r$.
Thus $|e_v|\leq |r|<k$.
Symmetrically, in the final subcase $P = \set{q_2,s_1}\subseteq Z$,
we have $|e_u|<k$.\\

\textit{Case 2}  $|P|=3$.
In this case, all the strings in $P$ must still be in the same set $A$ or $Z$,
since on one of the tapes only one string (a power of a single letter) is being pumped.
Moreover, $P$ must always contain one of the pairs $\set{q_1,q_2}$ or
$\set{s_1,s_2}$, so that this is actually a more restrictive situation than Case 1
and always leads to either $e_u$ or $e_v$ being empty.\\

\textit{Case 3}  $|P|=4$.
If all elements of $P$ are in the same set $A$ or $Z$, then this is
even more restricted than Case 2 and we have $e_u$ or $e_v$ empty.
However, it is possible for $P$ to contain strings from both $A$ and $Z$.
The only situation in which this occurs is $q_1,s_2\in A$ and $q_2,s_1\in Z$.
In this case, $u_1$ and $v_2$ are both contained in $pqr$, while
$u_2$ and $v_1$ are both contained in $rst$.
Hence $e_u$ and $e_v$ are both substrings of $r$ and have length less than $k$.\\

Thus we have established that in all cases either $|e_u|<k$ or $|e_v|<k$.
The argument was valid for all $n>k$, and so at least one of the following
holds: for infinitely many $n$, we have $|e_u|>k$; or for infinitely many $n$, we have $|e_v|>k$.

Suppose first that $|e_u|>k$ for infinitely many $n$.
Let $\theta$ be the homomorphism given by deleting all
occurrences of symbols from $\set{c,x,y,z}$ on either tape,
and let $L' = L\theta$.
Then $L'$ contains $\alpha = (a,\emptyword)^n (b,\emptyword)^n (\emptyword,a)^n (\emptyword,b)^n$
for arbitrarily large $n$, since any `intermingling' between
symbols from $A\times \set{\emptyword}$ and $\set{\emptyword}\times A$
in words with $|e_u|<k$ occurs only among the symbols $(\emptyword,c)$ and
$(a,\emptyword)$.
Taking $n$ greater than the pumping constant of $L'$
(which, as a homomorphic image of $L$, is context-free),
there exist two substrings of $w_n$ that can be simultaneously pumped,
and these substrings must be contained within one of the subwords
$(a,\emptyword)^n (b,\emptyword)^n$, $(b,\emptyword)^n (\emptyword,a)^n$
or $(\emptyword,a)^n (\emptyword,b)^n$.  But this implies that $L'$ contains words
that cannot be the image under $\theta$ of words in $L$,
contrary to the definition of $L'$.
Similarly, taking if $|e_v|>k$ for infinitely many $n$, then
taking a homomorphism $\theta'$ that deletes all occurrences of $a,b,c$ and $x$ again results in a
non-context-free language.  Thus $L$ itself cannot have been context-free
and so $\kappa\notin T(\CF)$.

In general, for $|X|\geq 2$ we can modify the above proof for $|X|\geq 6$ as follows.
Let $0,1\in X$ and define $a=1, b=10, c=100, x=1000, y=10000$ and $z=100000$.
The entire argument given above holds, with the modification that we must replace
the homomorphisms $\theta$ and $\theta'$ by rational transductions in order to delete
$c,x,y,z$ (resp.\ $a,b,c,x$).
\qed
\end{proof}

\section{Comparing $U(\C)$ and $T(\C)$}\label{UT}

Although the following result is a special case of Proposition~\ref{indexed},
we give its proof separately as a useful `warm-up' for the proof of the later
proposition.

\begin{proposition}\label{rat}
$U(\rat)$ is properly contained in $T(\rat)$.
\end{proposition}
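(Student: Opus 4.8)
The plan is to prove the inclusion $U(\rat) \subseteq T(\rat)$ by a direct automaton construction, and then to read off properness from an example already in hand. For the inclusion, suppose $\rho \in U(\rat)$, so that $L_\rho = \gset{u\#v^\rev}{u\mathrel{\rho}v}$ is regular. Since $L_\rho \subseteq X^*\#X^*$, every accepted word contains exactly one occurrence of $\#$, and I would fix a deterministic finite automaton $\mathcal{A} = (Q, X\cup\set{\#}, \delta, q_0, F)$ recognising $L_\rho$.

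The key step is to split $\mathcal{A}$ at the unique $\#$. For each state $p \in Q$ on which the transition by $\#$ is defined, write $p' = \delta(p,\#)$ and set $U_p = \gset{u \in X^*}{\delta(q_0,u) = p}$ and $W_p = \gset{w \in X^*}{\delta(p', w) \in F}$; both are regular. Decomposing each accepting run as ``read $u$, then $\#$, then $v^\rev$'' gives $L_\rho = \bigcup_p U_p\,\#\,W_p$, and hence
\[
  \rho = \bigcup_p U_p \times W_p^\rev,
\]
a finite union of products of regular languages, using closure of $\rat$ under reversal to see that each $W_p^\rev$ is regular.

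It then remains to observe that each such product lies in $T(\rat)$ and that $T(\rat)$ is closed under finite union. For a single product $A \times B$ with $A,B$ regular, the two-tape language that reads all of $A$ on the first component and then all of $B$ on the second, namely $\gset{(a_1,\emptyword)\cdots(a_m,\emptyword)(\emptyword,b_1)\cdots(\emptyword,b_n)}{a_1\cdots a_m \in A,\ b_1\cdots b_n \in B}$ over $\Xtt$, is regular and maps under $\pi$ onto $A \times B$; so $A\times B \in T(\rat)$. Closure under union is immediate, since the union of two regular languages over $\Xtt$ maps under $\pi$ onto the union of the corresponding relations. Therefore $\rho \in T(\rat)$, establishing $U(\rat)\subseteq T(\rat)$.

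Finally, properness follows without further work from Example~\ref{functions}: the identity function gives $\rho_e = (x,x)^* \in T(\rat)$, whereas $L_{\rho_e} = \gset{x^n\#x^n}{n\in\N_0}$ fails the pumping lemma, so $\rho_e \notin U(\rat)$; thus $\rho_e \in T(\rat)\setminus U(\rat)$. The only real content is the state-splitting decomposition, and I expect no serious obstacle here — the value of the argument is as a template whose closure-property steps (finite union, and the single-product construction) will have to be upgraded to survive passage to the non-regular language classes treated in Proposition~\ref{indexed}.
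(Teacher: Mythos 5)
Your proof is correct, and properness is handled exactly as in the paper (via $\rho_e$ from Example~\ref{functions}), but your inclusion argument takes a genuinely different route. The paper works directly with a left-regular grammar for $L_\rho$: it stratifies the non-terminals into a pre-$\#$ phase, a $\#$ phase, and a post-$\#$ phase, then rewrites productions in place -- terminals before $\#$ become $(a,\emptyword)$, the $\#$ is erased, and terminals after $\#$ become $(\emptyword,a)$ appended on the \emph{right} of the non-terminal, so that the reversal of the second component is built into the shape of the productions rather than invoked as a closure property. You instead split a DFA at the unique $\#$ to obtain the rectangle decomposition $\rho = \bigcup_p U_p \times W_p^\rev$, and then assemble membership in $T(\rat)$ from closure facts (reversal, products, finite unions). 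Your argument is sound, and it even yields slightly more in the regular case: it shows every relation in $U(\rat)$ is a finite union of products of regular languages (i.e.\ is recognizable in the sense of Mezei), which is strictly stronger than membership in $T(\rat)$. However, your closing remark -- that this is a template to be ``upgraded'' for the classes in Proposition~\ref{indexed} -- is where the two approaches part ways decisively. The rectangle decomposition is intrinsically finite-state: it requires the run at $\#$ to be summarized by one of finitely many states, and relations in the larger unfolded classes are simply not finite unions of rectangles. For instance, $\sigma = \gset{(w,w)}{w\in K}$ with $K$ context-free lies in $U(\CF)$ since $L_\sigma = \gset{w\#w^\rev}{w\in K}$ is context-free, yet any product $A\times B \subseteq \sigma$ forces every element of $A$ and $B$ to be equal, so each such rectangle is a singleton and $\sigma$ (being infinite) is no finite union of them. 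The paper's in-place grammar transformation is precisely the technique that does survive the passage to indexed, linear indexed, ET0L and EDT0L grammars, which is why the paper presents this proposition as a warm-up for Proposition~\ref{indexed}.
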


\begin{proof}
Let $X$ be a finite set and $\rho$ a relation on $X^*$ with $L_{\rho}$ rational.  Let
$\Gamma = (N,X\cup \{\#\},P,S\})$ be a left-regular grammar for $L_\rho$.
We may assume that $N$ is partitioned into
sets $N_1, N_2, N_3$ such that all productions in $P$ have one of the following forms:
\begin{align*}
A &\ra aB, && \text{where $A\in N_1$, $B\in N_1\cup N_2$, $a\in X$;}\\
A &\ra \#, \quad A\ra \#B &&\text{where $A\in N_2$, $B\in N_3$;}\\
A &\ra a, \quad A\ra aB && \text{where $A,B\in N_3$, $a\in X\cup \set{\emptyword}$.}
\end{align*}
Let $\Gamma'$ be the left-regular grammar obtained by replacing productions as follows.  For $A_i\in N_i$, replace
$A_1\ra aB$ by $A_1\ra (a,\emptyword)B$, $A_2\ra \#$ by $A_2\ra \emptyword$, $A_2\ra \#B$ by $A_2\ra B$, $A_3\ra a$ by
$A_3\ra (\emptyword,a)$ and $A_3\ra aB$ by $A_3\ra Ba$.  Then $\Gamma'$ generates exactly the set $\gset{(u,v)}{u\#v^\rev\in L_\rho}$, which is $\rho$, and so $\rho$ is rational.
An example of a relation in $T(\rat)\setminus U(\rat)$ is the relation $\rho_e$ in Example~\ref{functions}.
\qed
\end{proof}

\begin{proposition}\label{1c}
$U(\OC)$ is properly contained in $T(\OC)$.
\end{proposition}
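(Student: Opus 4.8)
The plan is to prove the two halves of the statement separately: first that $U(\OC) \subseteq T(\OC)$, by a simulation argument in the spirit of the proof of Proposition~\ref{rat}, and then to exhibit a single relation in $T(\OC) \setminus U(\OC)$. For the containment I would start from a one-counter automaton $M$ recognising $L_\rho = \gset{u\#v^\rev}{u \mathrel{\rho} v}$, which I may assume accepts by final state together with empty counter. The goal is a two-tape one-counter automaton $N$, reading words over $\Xtt$, whose image under $\pi$ is exactly $\rho$: concretely $N$ should accept an encoding of $(u,v)$, scanning $u$ on the first component and $v$ on the second. The first component is straightforward: $N$ reads $u$ and simulates the run of $M$ on the prefix $u$ up to the single $\#$-transition, arriving in some state $p$ with counter value $c$. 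The difficulty is the second component, since $M$ consumes $v^\rev$, reading the letters of $v$ in reverse, whereas $N$ is forced to read $v$ left to right; thus $N$ must simulate the tail of $M$'s run \emph{in reverse}, relying on the fact that the one-counter languages are closed under reversal.

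The step I expect to be the main obstacle is reconciling this reversal with the single counter. The value $c$ reached at the fold is determined by $u$, while a reversed simulation of the $v^\rev$-part is naturally anchored at the \emph{accepting} end of the run rather than at the fold, so a literal ``read $u$, then read $v$'' simulation seems to want two counters: one to retain $c$ while the other re-runs the tail. The way around this is to avoid storing $c$ at all. Since the whole computation of $M$ is one counter excursion $0 \to c \to 0$, I would interleave the scanning of the two tapes so that the single counter of $N$ is driven up while reading the first component and down while reading the second, with the finite-state control matching the state $p$ at the fold. The delicate point, which must be verified carefully, is that this folded run keeps the counter non-negative and respects $M$'s intermediate zero-tests; this is where most of the work lies, and it is the reason the one-counter case warrants a separate treatment rather than being wholly routine.

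For properness I would take $\rho$ to be the identity relation on $a^*b^*$, namely $\rho = \gset{(a^ib^j, a^ib^j)}{i,j \ge 0}$ over $X = \set{a,b}$. Reading the two tapes synchronously gives $\rho \in T(\rat) \subseteq T(\OC)$, because the language $\set{(a,a)^i(b,b)^j : i,j\ge 0}$ over the two-letter alphabet $\set{(a,a),(b,b)}$ is regular and projects onto $\rho$ under $\pi$. On the other hand $L_\rho = \gset{a^ib^j\#b^ja^i}{i,j\ge 0} = \gset{w\#w^\rev}{w \in a^*b^*}$, which is context-free but not one-counter: recognising it forces an automaton to match the inner $b$-blocks and then the outer $a$-blocks across the $\#$, that is, to retain two independent counts, which a single counter cannot do. I would make this precise using the iteration (pumping) lemma for one-counter languages, pumping within a bounded window about the centre of the word and noting that no such local pump can preserve both matched pairs at once. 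Hence $\rho \in T(\OC) \setminus U(\OC)$ -- in fact $\rho \in U(\CF)$, so the gap sits exactly at the one-counter level -- which completes the proof.
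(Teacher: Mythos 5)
Your plan for the containment half has the same architecture as the paper's proof: simulate the one\hyphen counter automaton $M$ for $L_\rho$ forwards while reading $u$, then read $v$ forwards while running the tail of $M$'s run in reverse, never storing the fold value $c$. But the step you defer (``verify carefully that this folded run keeps the counter non-negative and respects $M$'s intermediate zero-tests'') is not something that can be closed by care: it is false. Once you decide not to store $c$, the value your single counter holds during the second phase, at the point corresponding to a configuration of $M$ whose true counter value is $d$, is forced to be $c-d$ (the phase starts at $c$ and must end at $0=c-c$). This quantity goes negative whenever $M$'s counter climbs above $c$ after the fold, and neither the constraint $d\ge 0$ nor a zero test $d=0$ can be read off from $c-d$. (Note also that $M$'s run is not a single excursion $0\to c\to 0$ as your sketch assumes; the counter may return to zero, and be tested there, many times.) The idea that actually makes the construction work, and which is missing from your write-up, is the paper's commutativity argument: give up simulating $M$'s counter altogether, let the new automaton carry an \emph{integer} counter (sign kept in the finite control) that merely accumulates every increment and decrement, and accept on net value zero; since counter updates commute, the net change over the folded input equals the net change over $u\#v^\rev$, which is zero exactly on accepting runs. (This silently assumes $M$ makes no intermediate zero tests --- the paper's ``view $\A$ as having an integer counter'' makes the same assumption --- but it is a complete argument in that model, whereas ``check that the folded run stays non-negative'' cannot be carried out at all.)

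For properness you use a different witness from the paper, and it is sound in principle: the identity relation on $a^*b^*$ lies in $T(\rat)\subseteq T(\OC)$, and $\gset{a^ib^j\#b^ja^i}{i,j\ge 0}$ is context-free but not one-counter. However, the non-one-counter claim is precisely the part you leave as a gesture, and restricting to $a^*b^*$ makes it harder than necessary: the growth criterion the paper uses in Proposition~\ref{wpratcf} to show that $\gset{w\#w^\rev}{w\in X^*}$ is not one-counter for $|X|\ge 2$ does not apply here, because $a^*b^*$ has polynomial growth, so you would need a genuine interchange/pumping argument for one-counter automata (for instance, fixing an accepting run for each pair $(i,j)$ with $i,j\le n$, counting configurations reached just after $\#$, and splicing two runs that share one --- with some care about $\emptyword$-moves). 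A simpler route is to use the unrestricted equality relation on $\set{a,b}^*$, which is the paper's witness in Proposition~\ref{wpratcf}, or the relation of Proposition~\ref{wp1ccf2}, which is what the paper's proof of this proposition actually cites.
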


\begin{proof}
  Let $\rho$ be a binary relation on $X^*$ that lies in $U(\OC)$. Consider a one-counter automaton $\A$ that
  recognizes $L_\rho$. View $\A$ as having an integer counter that starts at $0$ and which it increments or
  decrements as it reads each symbol of a word $u\#v^\rev$, accepting if the counter has value $0$ at the end of the
  input. Build a one-counter automaton $\B$ over $X^2_\emptyword$ recognizing
  $\gset{(u,\emptyword)(\emptyword,v)}{(u,v) \in \rho}$ that functions as follows. It first reads all input from
  its first tape, simulating $\A$ on this input followed by $\#$. It then reads the input from its second tape,
  nondeterministically simulates $\A$ in reverse on this tape (so that increments to the counter become decrements,
  and vice versa). Since increments and decrements to the counter commute, it is clear that
  $\B$ accepts $(u,v)$ if and only if $\A$ accepts $u\#v^\rev$.

  An example of a relation in $T(\OC)$ but not in $U(\OC)$ (since it is not in $U(\CF)$) is given in
  Proposition~\ref{wp1ccf2}.  \qed
\end{proof}

\begin{proposition}\label{indexed}
Let $\mathfrak{C}$ be the class of indexed, linear indexed, ET0L, EDT0L, context-free
or rational languages. Then $U(\mathfrak{C})\subseteq T(\mathfrak{C})$.
\end{proposition}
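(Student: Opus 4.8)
The plan is to generalize the left-regular construction of Proposition~\ref{rat} from automata to grammars. Given $\rho \in U(\C)$, we have a grammar $G$ of the appropriate type generating $L_\rho = \gset{u\#v^\rev}{u\mathrel{\rho}v}$, and we want to build a grammar $G'$ of the \emph{same} type generating a language $L$ over $\Xtt$ with $L\pi = \rho$. The target is
$L = \gset{(u_1,\emptyword)\cdots(u_m,\emptyword)(\emptyword,v_1)\cdots(\emptyword,v_n)}{u_1\cdots u_m \mathrel{\rho} v_1\cdots v_n}$,
which clearly projects onto $\rho$. Passing from $L_\rho$ to $L$ is, at the level of terminals, three local edits: relabel each symbol left of $\#$ as $(x,\emptyword)$; delete the unique $\#$; and relabel each symbol right of $\#$ as $(\emptyword,x)$ \emph{while reversing their order}, so that the stored block $v^\rev = v_n\cdots v_1$ is read back as $v_1\cdots v_n = v$. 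In the left-regular case this order reversal was exactly the replacement of the prepending productions $A_3 \ra aB$ by the appending productions $A_3 \ra Ba$; for a general grammar the analogue is to reverse the right-hand sides of those productions responsible for post-$\#$ terminals.

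The first step is a normalization that makes ``the part after $\#$'' meaningful. Since every word of $L_\rho$ contains exactly one $\#$, at each stage of any derivation exactly one nonterminal occurrence has $\#$ in the yield of its subtree. I would record this by tagging nonterminals with one of three types --- yield entirely left of $\#$, yield entirely right of $\#$, or (the unique occurrence) yield containing $\#$ --- and take the product of $G$ with this finite control. This is routine finite bookkeeping that does not leave $\C$: for context-free, linear indexed and indexed grammars it is a finite control on the nonterminals, orthogonal to the flag mechanism, and for ET0L and EDT0L systems it is the analogous tagging of the table alphabet.

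With the tagging in place I would rewrite productions by type: left-type productions keep their shape but relabel terminals $x \mapsto (x,\emptyword)$; right-type productions have their right-hand sides reversed and their terminals relabelled $x \mapsto (\emptyword,x)$; and the single $\#$-emitting production (together with the left and right material flanking $\#$) is split so that $\#$ becomes $\emptyword$, its left flank is sent to the first tape, and its right flank is reversed and sent to the second tape. An induction on derivation length then shows $G'$ generates exactly $L$, using the fact that reversing right-hand sides implements yield reversal for each formalism --- this being the standard proof that each of these classes is closed under reversal. Specializing the construction to grammars with no flags (respectively, with at most one flag copied to one nonterminal) keeps us inside $\CF$ (respectively $\lin$), and the rational case recovers Proposition~\ref{rat}, so the one argument yields $U(\C)\subseteq T(\C)$ for all six classes.

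The main obstacle is the interaction of the reversal step with the grammar machinery; the relabelling itself is cosmetic. For indexed grammars one must check that reversing a right-hand side, while still distributing the flag stack down to the reordered nonterminals, produces a legal indexed grammar whose right yield is genuinely reversed. For ET0L and EDT0L the difficulty is sharper: the right part is produced by many nonterminals rewritten in parallel, so reversing right-hand sides table-by-table must be shown to reverse the order \emph{among} sibling occurrences as well as \emph{within} each yield, and one must verify that this localized reversal stays consistent across successive parallel steps and that the three-way tagging can be maintained table-by-table. Confirming this compatibility is where the real work lies; the left-regular case of Proposition~\ref{rat} is precisely the degenerate instance in which the derivation is linear and all of these complications collapse.
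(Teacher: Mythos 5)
Your three-way tagging of nonterminals (yield left of $\#$, yield right of $\#$, yield containing $\#$) is exactly how the paper's proof begins, and reversing the right-hand sides of right-type productions is also part of the paper's construction. But your choice of target language is a genuine flaw: you insist on producing the \emph{tape-separated} language $\gset{(u_1,\emptyword)\cdots(u_m,\emptyword)(\emptyword,v_1)\cdots(\emptyword,v_n)}{u\mathrel{\rho}v}$, and this language need not lie in $\C$ even when $L_\rho$ does. Take $\rho$ to be the equality relation on $X^*$ with $|X|\geq 2$: then $L_\rho=\gset{w\#w^\rev}{w\in X^*}$ is context-free, but your target is, up to a bijective renaming of the symbols of $\Xtt$, the copy language $\gset{w\hat{w}}{w\in X^*}$ over two disjoint copies of $X$, which is not context-free (intersect with $a^*b^*\hat{a}^*\hat{b}^*$). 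So \emph{no} construction can achieve your target inside $\CF$, and your claim that specializing the construction to flag-free grammars keeps you inside $\CF$ cannot be repaired. The paper avoids this obstruction by exploiting the slack in the definition of $T(\C)$: since $\pi$ erases interleaving, the symbols $(x,\emptyword)$ and $(\emptyword,y)$ commute under $\pi$, so the new grammar is allowed to emit first-tape and second-tape material interleaved, as long as each tape's symbols appear in the correct relative order. This commutation argument is the central idea of the paper's proof, and it is precisely what your proposal is missing.

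The same gap surfaces concretely in the one case your rewriting scheme leaves unspecified: productions $A_\#\ra\alpha_L B_\#\gamma_R$ in which the $\#$-type nonterminal is passed down rather than $\#$ being emitted. If you keep $B_\#$ in place and reverse the flank, i.e.\ use $A'_\#\ra\alpha'_L B'_\#(\gamma'_R)^\rev$, then along a derivation $S_\#\Rightarrow\alpha_L B_\#\gamma_R\Rightarrow\alpha_L\delta_L C_\#\eta_R\gamma_R\Rightarrow\cdots$ the reversed right flanks accumulate in the order $(\eta'_R)^\rev(\gamma'_R)^\rev$, whereas the second tape must read $(\gamma'_R)^\rev(\eta'_R)^\rev$, because reversing $\eta_R\gamma_R$ reverses the order of the blocks as well as their contents; so the induction you sketch fails at the first recursive $\#$-production, and the grammar defines the wrong relation. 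The paper's fix is exactly to move the primed $\#$-type nonterminal past the reversed flank to the far right, $A'_\#\ra\alpha'_L(\gamma'_R)^\rev B'_\#$, so that all material produced later is emitted to the right of $(\gamma'_R)^\rev$, and then to invoke commutation under $\pi$ to slide the intervening first-tape material out of the way. With that step the generated words are interleaved rather than tape-separated, which is why the construction survives in $\CF$ where your version provably cannot.
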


\begin{proof}
  Let $\rho$ be a binary relation on $X^*$ that lies in $U(\ind)$ and let $\Gamma$ be an indexed grammar for $L_\rho$.
  We may assume that the set of non-terminals $N$ is partitioned into sets $N_L, N_{\#}, N_R$ such that all productions
  are of one of the types in the table below, where we adopt the convention that a nontermal with a subscript
  $H \in \set{L,\#,R}$ is in $N_H$, and the convention that a Greek letter with a subscript $H$ denotes a word over
  $N_H$ (potentially with flags).

  We construct a new indexed grammar $\Gamma'$ with non-terminals
  $N' = \{A' \mid A\in N\}$ and start symbol $S'_{\#}$ (where $S$ is the start symbol of $\Gamma$) in which each
  production is replaced by the corresponding production shown below, and where $\alpha'$ denotes the word $\alpha$ with
  each non-terminal $A$ replaced by $A'$, and each terminal $a$ replaced by $(a,\emptyword)$ if $\alpha\in N_L^*$ and by
  $(\emptyword,a)$ if $\alpha\in N_R^*$.  A flag $f$ in brackets denotes that the flag may or may not be
  present in the production.\\

{%
  \smallskip
  \centering
  \begin{tabular}{rcl@{\quad}lrcl}
    \toprule
            \multicolumn{3}{l}{Production in $\Gamma$} & \multicolumn{4}{l}{Corresponding Production in $\Gamma'$} \\
    \midrule
    $A_{\#}(f)$ & $\ra$ & $\alpha_L \# \gamma_R$      & & $A'_{\#}(f)$ & $\ra$ & $ \alpha'_L (\gamma'_R)^\rev$         \\
    $A_{\#}(f)$ & $\ra$ & $\alpha _L B_{\#} \gamma_R$ & \kern 9mm & $A'_{\#}(f)$ & $\ra$ & $\alpha'_L (\gamma'_R)^\rev B'_{\#} $ \\
    $A_L(f)$    & $\ra$ & $\alpha_L$                  & & $A'_L(f)$    & $\ra$ & $\alpha'_L$                           \\
    $A_R(f)$    & $\ra$ & $\alpha_R$                  & & $A'_R(f)$    & $\ra$ & $(\alpha'_R)^\rev$                    \\
    $A_H$ & $\ra$ & $B_H f$				& & $A'_H$ & $\ra$ & $B'_H f$,\\
    \bottomrule
  \end{tabular}
  \par
  \smallskip
}

\noindent
At any stage in a derivation in $\Gamma$, the current string has the form
$\alpha_L B \beta_R$, where $\alpha_L\in (N_L\cup X)^*$,
$B\in N_{\#}\cup \{\#\}$ and $\beta_R\in (N_R\cup X)^*$.
Since the non-terminals in $N_L$ and $N_R$ produce terminals
on the first and second tape respectively, they commute with each other
(that is, $A_L B_R$ and $B_R A_L$ both produce the same language).
Thus at the same point in the corresponding derivation in $\Gamma'$,
the current string will be equivalent (in the sense that it generates the same language)
to $\alpha'_L (\beta'_R)^\rev B'_{\#}$ or to $\alpha'_L (\beta'_R)^\rev$.
Thus the strings produced by $\Gamma'$ are exactly those equivalent to
$\alpha'_L \beta'_R$ for some $\alpha_L\#(\beta_R)^\rev\in L_\rho$.
Since for $\alpha_L,\beta_R\in X^*$ we have $\alpha'_L = (\alpha,\emptyword)$
and $(\beta')^\rev_R = (\emptyword, \beta)$, we conclude that $\Gamma'$ generates
the relation $\rho$ and hence $\rho\in T(\ind)$.

Note that the transformation from $\Gamma$ to $\Gamma'$ preserves the property
of being a linear indexed, E(D)T0L-indexed, context-free or rational grammar.
Hence the containment of $U(\mathfrak{C})\subseteq T(\mathfrak{C})$
also holds for all the classes mentioned.
\qed
\end{proof}

\begin{proposition}\label{CF}
The containment of $U(\CF)$ in $T(\CF)$ is proper.
\end{proposition}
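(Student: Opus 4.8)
The plan is to split the claim into its two halves: the inclusion $U(\CF) \subseteq T(\CF)$ and the strictness of that inclusion. The first half requires no new work, since it is exactly the instance of Proposition~\ref{indexed} obtained by taking $\mathfrak{C}$ to be the class of context-free languages. So the entire task reduces to exhibiting a single relation lying in $T(\CF) \setminus U(\CF)$.

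For the separating example I would simply reuse the reversal relation $\rho = \gset{(w, w^\rev)}{w \in X^*}$ of Example~\ref{rev}, with $|X| \geq 2$. It is shown there to lie in $T(\CF)$ via the grammar $S \ra (x,\emptyword) S (\emptyword,x) \mid \emptyword$ (one production for each $x \in X$), and to fail to lie in $U(\CF)$ because its unfolded language is $L_\rho = \gset{w \# w}{w \in X^*}$, which is not context-free. Thus Example~\ref{rev} already witnesses strictness, and the proposition follows at once.

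The only real content, then, is the non-context-freeness of $\gset{w \# w}{w \in X^*}$, which is the step I expect to carry the weight. First I would fix two distinct letters $a, b \in X$ and, for a pumping length $k$, apply the context-free pumping lemma to the word $a^k b^k \# a^k b^k$. The pumped factors together have length at most $k$, so they cannot modify matching positions in both occurrences of $a^k b^k$ at once; iterating therefore either introduces a second $\#$ or unbalances the two halves, producing a word outside the language. This is the textbook argument that the marked copy language is not context-free, and it genuinely uses two distinct letters, matching the hypothesis $|X| \geq 2$.

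One caveat worth flagging for the broader aims of the paper: the reversal relation is \emph{not} an equivalence relation, whereas the abstract promises that the hierarchy persists when restricted to equivalence relations. To secure that stronger version one would instead want a word-problem witness, that is, a congruence on $X^*$ lying in $T(\CF) \setminus U(\CF)$, analogous to the word-problem example invoked for the one-counter case in Proposition~\ref{1c}; such a relation could be cited from the later sections rather than constructed here, leaving the present proof as the clean two-line deduction from Proposition~\ref{indexed} and Example~\ref{rev}.
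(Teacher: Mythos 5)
Your proposal is correct and is exactly the paper's own proof: the paper proves Proposition~\ref{CF} simply by citing Example~\ref{rev} (with the inclusion $U(\CF)\subseteq T(\CF)$ already covered by Proposition~\ref{indexed}), and your pumping-lemma sketch for the non-context-freeness of $\gset{w\#w}{w\in X^*}$ matches the argument the paper alludes to there. Your closing caveat is also consonant with the paper, which defers equivalence-relation (word-problem) witnesses to \S~\ref{sec:wordprob} rather than demanding them here.
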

\begin{proof}
See Example~\ref{rev}.  \qed
\end{proof}

\begin{proposition}\label{linind}
The containment of $U(\lin)$ in $T(\lin)$ is proper.
\end{proposition}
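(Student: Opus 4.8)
The plan is to take the containment $U(\lin)\subseteq T(\lin)$ for granted (it is the case $\C=\lin$ of Proposition~\ref{indexed}) and to separate the two classes by a single relation over a two-letter alphabet. I would set $X=\{a,b\}$ and consider
\[
  \rho = \gset{(a^nb^n,\; a^nb^na^n)}{n\in\N_0}.
\]
The idea is that the unfolded encoding glues $u=a^nb^n$ to $v^\rev=a^nb^na^n$ into a word with \emph{five} blocks whose lengths are all forced to be $n$; tree-adjoining (equivalently, linear indexed) grammars can synchronise only four such blocks, whereas in the two-tape viewpoint the extra copy of $a^n$ can be spent on the second tape, so that only three blocks need synchronising.

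For the membership $\rho\in T(\lin)$ I would exhibit an explicit linear indexed grammar over $\Xtt$ for the language $L=\gset{(a,a)^n(b,b)^n(\emptyword,a)^n}{n\in\N_0}$, which satisfies $L\pi=\rho$. A grammar that works is
\begin{align*}
  S &\ra A\\
  A &\ra (a,a)\,A^{f}\,(\emptyword,a) \mid B\\
  B^{f} &\ra (b,b)\,B\\
  B^{\$} &\ra \emptyword,
\end{align*}
where the flag $f$ is pushed while the outer $a$'s (first tape) and trailing $a$'s (second tape) are laid down, and popped while the $b$'s are produced. Every right-hand side has at most one non-terminal, so this is a linear indexed grammar under both conventions of Section~\ref{sec:preliminaries}; moreover $L$ is a renaming of $\gset{a^nb^nc^n}{n\in\N_0}$, confirming $\rho\in T(\lin)$.

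The substance is to show $\rho\notin U(\lin)$, i.e.\ that $L_\rho=\gset{a^nb^n\#a^nb^na^n}{n\in\N_0}$ is not linear indexed. Since $\lin$ is closed under homomorphism, it suffices to show that the image $M=\gset{a^nb^na^nb^na^n}{n\in\N_0}$ of $L_\rho$ under the homomorphism erasing $\#$ is not a tree-adjoining language. Here I would invoke the pumping lemma for tree-adjoining languages: a sufficiently long word of $M$ would admit four factors that can be pumped simultaneously with a common exponent. First I would argue that each nonempty such factor must be a power of a single letter contained in one of the five maximal blocks, because a factor straddling a block boundary would, when pumped, introduce extra alternations and so leave the regular shape $a^+b^+a^+b^+a^+$. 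Four single-block factors touch at most four of the five blocks, so at least one block keeps length $n$ while a pumped block grows; for exponent $2$ the five exponents are no longer equal, contradicting membership in $M$. Hence $M$, and therefore $L_\rho$, is not linear indexed, so $\rho\notin U(\lin)$.

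I expect the last paragraph to be the only genuine obstacle: the remaining steps are routine, but the non-membership rests on (a) a correctly stated pumping lemma for tree-adjoining/linear indexed languages, and (b) the bookkeeping that forces each pumpable factor to be a single-block power. It is worth stressing that the semilinearity obstruction used earlier (Example~\ref{functions}) is useless here, since $L_\rho$ has semilinear Parikh image; the separation genuinely relies on the ``counting to four'' limitation of tree-adjoining grammars.
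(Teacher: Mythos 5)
Your proposal is correct, and it has the same skeleton as the paper's proof---a witness whose two-tape form is a renaming of $\gset{a^nb^nc^n}{n\in\N_0}$ and whose unfolded form maps homomorphically onto a language requiring five synchronised blocks---but it diverges in the ingredient that carries the weight. The paper takes $\rho = \gset{(a_1^na_2^na_3^n,\,b_1^nb_2^nb_3^n)}{n\in\N_0}$ over six letters, so that $L_\rho = \gset{a_1^na_2^na_3^n\#b_3^nb_2^nb_1^n}{n\in\N_0}$ maps (erasing $\#$ and $b_1$, renaming the rest) onto the \emph{textbook} non-TAL language $\gset{a^nb^nc^nd^ne^n}{n\in\N_0}$; both halves of the paper's argument are then citations to \cite{kallmeyer_parsing} (Problems~4.1 and~4.5) and no pumping argument is needed. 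Your two-letter witness instead produces $M=\gset{a^nb^na^nb^na^n}{n\in\N_0}$, which is \emph{not} the textbook example (it is a renamed image of it, and images of non-TALs can perfectly well be TALs), so you must prove non-membership from scratch. Your pumping argument is sound: a pumped factor straddling a block boundary destroys the shape $a^+b^+a^+b^+a^+$ when squared, and four within-block factors must miss one of the five blocks, which then stays at length $n$ while another grows. But, as you yourself flag, this rests on a correctly stated (weak) pumping lemma for TALs, which is precisely the dependence the paper's choice of witness is engineered to avoid. Two remarks. First, you can keep your two-letter witness and still finish by citation: $\lin$ is a full AFL, so taking $h\colon\{a_1,\dots,a_5\}^*\to\{a,b\}^*$ with $a_1,a_3,a_5\mapsto a$ and $a_2,a_4\mapsto b$, the language $h^{-1}(M)\cap a_1^*a_2^*a_3^*a_4^*a_5^*$ equals $\gset{a_1^na_2^na_3^na_4^na_5^n}{n\in\N_0}$, so $M\in\lin$ would contradict \cite[Problem~4.5]{kallmeyer_parsing} directly, with no pumping. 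Second, what your route buys is a separating relation over a $2$-symbol alphabet, in line with the paper's remark in the introduction that its witnesses use at most two symbols (the paper's own witness for this proposition in fact uses six); what the paper's route buys is brevity and robustness. Your explicit linear indexed grammar for the $T(\lin)$ half is valid under both conventions discussed in \S~\ref{sec:preliminaries}, and your closing observation that semilinearity cannot separate anything here is also correct.
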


\begin{proof}
  Let $\rho = \gset{(a_1^n a_2^n a_3^n, b_1^n b_2^n b_3^n)}{n\in \N_0}$.  Then $\rho$ is obtained from the language
  $L_3 = \gset{ a_1^n a_2^n a_3^n}{n\in \N_0}$ by replacing each $a_i$ by $(a_i,b_i)$.  Since $L_3$ is linear indexed by
  \cite[Problem~4.1]{kallmeyer_parsing} (recalling that the classes of languages defined by linear indexed and
  tree-adjoining grammars coincide), $\rho$ lies in $T(\lin)$.  But
  $L_\rho = \gset{ a_1^n a_2^n a_3^n \# b_3^n b_2^n b_1^n}{n\in \N_0}$ has as a homomorphic image the language
  $L_5 = \gset{a^nb^nc^nd^ne^n}{n\in \N_0}$, which is not linear indexed \cite[Problem~4.5]{kallmeyer_parsing}.  Hence
  $\rho\notin U(\lin)$.  \qed
\end{proof}

\begin{proposition}\label{et0l}
  The containment of $U(\etol)$ in $T(\etol)$ is proper, and moreover there exists a relation in
  $T(\CF)\setminus U(\etol)$.
\end{proposition}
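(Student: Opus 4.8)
The plan is to obtain both assertions from a \emph{single} relation, by showing it suffices to produce some $\rho$ lying in $T(\CF)\setminus U(\etol)$. Indeed, since $\CF\subseteq\etol$ we have $T(\CF)\subseteq T(\etol)$, while Proposition~\ref{indexed} gives $U(\etol)\subseteq T(\etol)$; hence any such $\rho$ automatically lies in $T(\etol)\setminus U(\etol)$, witnessing the proper containment, and is at the same time the promised element of $T(\CF)\setminus U(\etol)$. So I would reduce the whole proposition to exhibiting one relation with $\rho\in T(\CF)$ and $L_\rho\notin\etol$.

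For the witness I would take the identity relation on a context-free language with genuinely unbounded nesting. Concretely, let $X=\set{a,b}$, let $D\subseteq X^*$ be the Dyck language of balanced bracket words (reading $a$ as opening and $b$ as closing), and put $\rho=\gset{(w,w)}{w\in D}$. Membership $\rho\in T(\CF)$ is immediate: take any context-free grammar for $D$ and relabel each terminal $x$ by the letter $(x,x)\in\Xtt$; the resulting grammar over $\Xtt$ generates $\gset{(x_1,x_1)\cdots(x_n,x_n)}{x_1\cdots x_n\in D}$, whose image under $\pi$ is exactly $\rho$. The reason this stays context-free is that the single stack is spent entirely on checking membership in $D$ along the first components, with \emph{no} reversal needed in the pairing. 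Unfolding then gives $L_\rho=\gset{w\#w^\rev}{w\in D}$.

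The heart of the argument, and the step I expect to be the main obstacle, is to show $L_\rho\notin\etol$. The difficulty is that $L_\rho$ imposes two context-free conditions that no single stack can enforce together: the part after $\#$ must be the reverse of the part before it (a mirror, palindrome-type condition), \emph{and} the part before $\#$ must be a balanced bracket word. One cannot refute ET0L-membership by pumping a single word, since every long $w\#w^\rev$ does admit balanced, simultaneously iterable factors (for instance any complete bracketed subtree of $w$ together with its mirror), so the naive iteration condition is satisfied. The real obstruction is structural: an ET0L system rewrites in parallel, applying one finite substitution uniformly to all occurrences of a symbol, so the nondeterministic choices that build the bracketing of $w$ cannot be kept correlated — across unboundedly many independent nesting points — with the mirrored choices needed to build $w^\rev$. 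I would make this precise with an iteration (pumping) lemma for ET0L languages, applied not to a single word but to a family of bracketings $w$ engineered so that any bounded, commonly-iterated family of factors preserving both the balance of the prefix and the mirror condition is forced to meet more than boundedly many distinct nesting levels, contradicting the constant supplied by the lemma.

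With $L_\rho\notin\etol$ in hand, the relation $\rho$ lies in $T(\CF)\subseteq T(\etol)$ but not in $U(\etol)$, so by the reduction of the first paragraph this single example simultaneously proves that $U(\etol)$ is \emph{properly} contained in $T(\etol)$ and that $T(\CF)\setminus U(\etol)$ is non-empty, as required.
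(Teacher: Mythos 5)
Your overall frame is exactly the paper's: the paper likewise reduces both assertions to exhibiting a single relation in $T(\CF)\setminus U(\etol)$ (using Proposition~\ref{indexed} and $\CF\subseteq\etol$), likewise takes the identity relation $\sigma=\gset{(w,w)}{w\in K}$ on a context-free language $K$, and likewise proves $\sigma\in T(\CF)$ by replacing each terminal $x$ in a grammar for $K$ by $(x,x)$. The divergence is in the one step that carries all the weight, namely that $L_\sigma=\gset{w\#w^\rev}{w\in K}$ is not ET0L, and here your proposal has a genuine gap. The paper never proves such a statement from scratch: it chooses $K$ to be a context-free language already known not to be EDT0L (such a $K$ over a two-letter alphabet exists by \cite{ehrenfeucht_onsome}), and then cites the theorem of Ehrenfeucht and Rozenberg \cite{ehrenfeucht_relationship} that $\gset{w\#w^\rev}{w\in K}$ can be ET0L only if $K$ is EDT0L. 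You instead fix $K$ to be the Dyck language $D$ and defer the crucial claim to an unspecified ``iteration lemma for ET0L'' applied to an engineered family of words; this is a plan, not a proof, and none of its ingredients (the family, the lemma, the counting of nesting levels) is actually carried out.

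Moreover, the plan rests on a faulty intuition and cannot be completed by any routine pumping argument. Mirrored, correlated copying is precisely what deterministic tables do well: $\gset{w\#w^\rev}{w\in X^*}$ is EDT0L (axiom $A\#B$; for each $x\in X$ a table $\set{A\to xA,\ B\to Bx}$; one erasing table), and more generally, if $K$ is EDT0L then relabelling terminals $x\mapsto(x,x)$ puts $\gset{(w,w)}{w\in K}$ in $T(\edtol)$, whence the paper's own Proposition~\ref{edt0l} ($U(\edtol)=T(\edtol)$) shows that $\gset{w\#w^\rev}{w\in K}$ \emph{is} EDT0L, hence ET0L. Combining this with the cited direction of \cite{ehrenfeucht_relationship}, your claim that $\gset{w\#w^\rev}{w\in D}\notin\etol$ is \emph{equivalent} to the assertion that the Dyck language is not EDT0L. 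So the correlation between $w$ and $w^\rev$ is not the obstruction at all: the entire content of your key step lies in the EDT0L status of the base language $D$, which your sketch never addresses, which no classical iteration lemma settles, and which is not established in any source the paper draws on. To repair the proof, keep your first paragraph and the relabelling argument, but replace $D$ by a context-free language certified in the literature not to be EDT0L and conclude via \cite{ehrenfeucht_relationship} --- which is exactly the paper's proof.
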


\begin{proof}
  By Proposition~\ref{indexed}, $U(\etol) \subseteq T(\etol)$. Let $K\subseteq X^*$ be a context-free language that is not
EDT0L (an example on a $2$-letter alphabet exists
\cite{ehrenfeucht_onsome}).
The language $\gset{w\# w^\rev}{w\in K}$,
which is $L_\sigma$ for the relation $\sigma = \gset{(w,w)}{ w\in K}$,
is not ET0L \cite{ehrenfeucht_relationship}.
However, a context-free grammar for $\sigma$ can be obtained
by replacing every output symbol $x$ in a context-free grammar for $K$
by $(x,x)$.
\qed
\end{proof}

\begin{proposition}\label{edt0l}
$U(\edtol) =  T(\edtol)$.
\end{proposition}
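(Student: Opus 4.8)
The inclusion $U(\edtol)\subseteq T(\edtol)$ is already supplied by Proposition~\ref{indexed}, so the plan is to establish the reverse inclusion $T(\edtol)\subseteq U(\edtol)$. Suppose $\rho\in T(\edtol)$, witnessed by an EDT0L system $G$ with terminal alphabet $\Xtt$, non-terminal set $N$, axiom $\omega$, and a finite collection of tables, each of which is a homomorphism on the full alphabet of $G$. The goal is to transform $G$ into an EDT0L system $G'$ over the terminal alphabet $X\cup\set{\#}$ generating $L_\rho$, which will witness $\rho\in U(\edtol)$.

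The idea is to run $G$ essentially unchanged while ``unfolding'' every sentential form on the fly. I would introduce two fresh copies $A_L$ and $A_R$ of each non-terminal $A\in N$, and define homomorphisms $\phi_L,\phi_R$ on the alphabet of $G$ by $\phi_L(A)=A_L$, $\phi_R(A)=A_R$ for $A\in N$, and $\phi_L((x,y))=x$, $\phi_R((x,y))=y$ for a terminal pair $(x,y)\in\Xtt$. To each sentential form $w$ of $G$ I associate the unfolded word
\[
  \widehat{w} \;=\; \phi_L(w)\,\#\,(\phi_R(w))^\rev,
\]
in which the left copies occur in their original order, the central symbol $\#$ is fixed, and the right copies occur reversed. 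The axiom of $G'$ is then $\widehat{\omega}$, its terminals are $X\cup\set{\#}$, and a direct check shows that whenever $w\in\Xtt^*$ is a terminal word of $G$ with $w\pi=(u,v)$, one has $\widehat{w}=u\#v^\rev$.

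The core of the argument is to define, for each table $\tau$ of $G$, a table $\tau'$ of $G'$ satisfying the intertwining identity $\tau'(\widehat{w})=\widehat{\tau(w)}$ for every sentential form $w$. I would set $\tau'(A_L)=\phi_L(\tau(A))$ and $\tau'(A_R)=(\phi_R(\tau(A)))^\rev$ for $A\in N$, and let $\tau'$ fix every symbol of $X\cup\set{\#}$. The identity should follow from the fact that each $\tau$ is a homomorphism, together with the elementary observation that $(\sigma(z))^\rev=\sigma^\rev(z^\rev)$ for any substitution $\sigma$. Iterating, applying a sequence of tables to $\widehat{\omega}$ in $G'$ yields exactly $\widehat{w}$, where $w$ is the result of the same sequence in $G$; the derivation terminates over $X\cup\set{\#}$ precisely when $w\in\Xtt^*$, and so $L(G')=L_\rho$.

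The step requiring care---and, I expect, the real point of the proof---is that each non-terminal $A$ gives rise to two occurrences, $A_L$ and $A_R$, in $\widehat{w}$, which are rewritten independently in $G'$. For the identity $\tau'(\widehat w)=\widehat{\tau(w)}$ to hold, these two copies must evolve in lock-step with the single occurrence of $A$ in $G$. This is automatic exactly because $G$ is \emph{deterministic}: each table prescribes a unique image $\tau(A)$, so $\tau'(A_L)$ and $\tau'(A_R)$ are forced and mutually consistent. For a nondeterministic ET0L system the two copies could make differing choices, decoupling the left and right halves; this is precisely why the analogous equality fails for $\etol$ (Proposition~\ref{et0l}), and it is the main obstacle one must confront and resolve by appealing to determinism. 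Finally, since $G'$ is again an EDT0L system, we obtain $L_\rho\in\edtol$ and hence $\rho\in U(\edtol)$, which combined with Proposition~\ref{indexed} gives the claimed equality.
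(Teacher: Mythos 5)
Your proof is correct and is essentially the paper's own argument: the same reduction via Proposition~\ref{indexed}, the same construction of two copies of each non-terminal (left copy projecting to first components, right copy projecting to reversed second components) with tables in one-to-one correspondence, and the same key observation that determinism forces the two copies to evolve in lock-step. Your explicit discussion of why this fails for nondeterministic ET0L systems is a nice elaboration of the point the paper makes only briefly.
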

\begin{proof}
By Proposition~\ref{indexed}, it suffices to prove that
$T(\edtol)\subseteq U(\edtol)$.
Let $\rho$ be a binary relation on $X^*$ that lies in $T(\edtol)$ and let
$H = (V,\Xtt,\Delta,I)$ be an EDT0L-system for $\rho$.
Define an ET0L-system $H' = (V_1\cup V_2,X\cup\{\#\},\Delta',I')$
with $V_1$, $V_2$ copies of $V$, $I' = I_1 \# I_2$ and
$\Delta'$ consisting of a set of tables in one-to-one
correspondence with the tables in $\Delta$,
with each production $A\ra \alpha$, $\alpha\in (V\cup \Xtt)^*$
being replaced by two productions $A_1\ra \alpha_1$
and $A_2\ra \alpha_2^\rev$, where $\alpha_i$ is the same
as $\alpha$ except with each non-terminal in $V$ replaced
by the corresponding non-terminal in $V_i$, and the terminal
symbols replaced by their $i$-th component.
Since each table in $\Delta$ contains exactly one production
from $A$ for each $A\in V$, applying a given sequence of
tables in $\Delta'$ to $I'$ produces the word $u\# v^\rev$,
where $(u,v)$ is the word produced by applying the corresponding
sequence of tables in $\Delta$ to $I$.  Note that $\Delta'$ also
contains only one production from each non-terminal in $V_1\cup V_2$.
Hence $H'$ is an EDT0L-system for $L_\rho$.
\qed
\end{proof}

\begin{conjecture}\label{indconj}
$U(\ind)$ is properly contained in $T(\ind)$.
\end{conjecture}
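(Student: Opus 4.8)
The plan is to produce a witness relation $\rho \in T(\ind)\setminus U(\ind)$; since $U(\ind)\subseteq T(\ind)$ by Proposition~\ref{indexed}, such a relation establishes the conjecture. The most promising source of candidates is the diagonal (copy) construction already used for the ET0L case in Proposition~\ref{et0l}: for \emph{any} indexed language $K\subseteq X^*$, the relation $\sigma_K = \gset{(w,w)}{w\in K}$ lies in $T(\ind)$, because replacing each terminal $x$ by the pair $(x,x)$ in an indexed grammar for $K$ yields an indexed language over $\Xtt$ whose image under $\pi$ is $\sigma_K$. Its unfolded language is $L_{\sigma_K} = \gset{w\#w^\rev}{w\in K}$. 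The entire problem thus reduces to finding a single indexed language $K$ for which the marked reversal $\gset{w\#w^\rev}{w\in K}$ fails to be indexed.

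For each of the smaller classes this reduction succeeded because the class in question was not closed under the relevant marked copy or reversal: $\gset{w\#w}{w\in X^*}$ is already non-context-free, giving Example~\ref{rev}, and the ET0L argument combined a context-free but non-EDT0L language $K$ with a known non-closure property of ET0L. For the indexed languages no such ready-made non-closure result is available, so I would search among languages whose doubling genuinely raises descriptive complexity — for instance nested or exponential families such as $K = \gset{a^n b^{2^n}}{n\in\N_0}$, which is indexed, and whose marked reversal $\gset{a^n b^{2^n}\# b^{2^n} a^n}{n\in\N_0}$ forces any putative indexed grammar to reproduce an exponentially long block on both sides of $\#$ while simultaneously matching the linear $a$-count at the two extreme ends. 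To prove non-membership in $\ind$ I would appeal to Gilman's shrinking lemma \cite{gilman_shrinking}, exactly as in the analysis of $\rho_h$ in Example~\ref{functions}, aiming to show that no admissible decomposition of a long word of this form can respect all of these constraints at once.

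The hard part — and the reason the statement is offered only as a conjecture — is the non-membership step. The indexed languages form a very robust class, closed under homomorphism, intersection with regular languages, and many further operations, and in particular they can already copy or reverse a string around a central marker, so that $\gset{w\#w}{w\in X^*}$ and $\gset{w\#w^\rev}{w\in X^*}$ are themselves indexed. Consequently the marked-reversal operation preserves indexedness for a great many natural choices of $K$, and it is quite possible that it does so for every diagonal candidate; in that event one would be forced to a non-diagonal witness, in which the index stack correlates two genuinely different tapes during forward generation but cannot be reconstituted once the second tape is reversed to form $L_\rho$. Even granting a good candidate, applying the shrinking lemma to a two-sided, reversal-coupled language is delicate: one must control how the blocks produced by the lemma straddle the marker $\#$ and show that they cannot simultaneously honour the inner exponential matching and the outer linear matching. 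Identifying a language $K$ for which this estimate can be carried through cleanly is the principal obstacle, and it is precisely this that remains open.
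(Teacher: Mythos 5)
First, a framing point: the paper does not prove this statement --- it is posed precisely as Conjecture~\ref{indconj}, and the paper offers only a heuristic candidate witness (Example~\ref{samelensamesymbolcount}). So there is no complete argument to compare yours against, and your proposal is rightly a programme rather than a proof. Your reduction is the same as the paper's: both seek a diagonal relation $\sigma_K = \{(w,w) : w\in K\}$ with $K$ indexed, whose membership in $T(\ind)$ is immediate (replace each terminal $x$ by $(x,x)$ in an indexed grammar for $K$, as in Proposition~\ref{et0l}), so that everything hinges on showing that $\{w\#w^\rev : w\in K\}$ is not indexed --- and both you and the paper leave exactly that step open.

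However, your concrete candidate $K = \{a^n b^{2^n} : n\in\N_0\}$ demonstrably fails: the language $\{a^n b^{2^n}\# b^{2^n} a^n : n\in\N_0\}$ \emph{is} indexed. Using the paper's notation $A^f$ for a non-terminal with flag $f$, take flags $f,\$$ and productions
\[
S_0 \ra S^{\$}, \qquad S \ra aS^f a, \qquad S \ra B\#B, \qquad B^f \ra BB, \qquad B^{\$} \ra b.
\]
Because an indexed production copies the entire flag stack to every non-terminal on its right-hand side, the derivation $S_0 \Rightarrow^* a^n S^{f^n\$} a^n \Rightarrow a^n B^{f^n\$}\#B^{f^n\$} a^n$ hands the same count $n$ to both sides of the marker, after which each copy of $B$ independently doubles down to $b^{2^n}$. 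This is precisely the failure mode you half-anticipated, but the lesson is sharper than you state: nested and exponential block structure is what index stacks handle best, so ``doubling raises descriptive complexity'' is the wrong instinct for this class, and a shrinking-lemma attack on such a candidate cannot succeed because the language is simply indexed. Note also that your fallback --- being ``forced to a non-diagonal witness'' --- is not where the paper goes: its candidate $\sigma = \{(uv,uv) : |u|=|v|,\ |u|_a=|v|_a\}$ from Example~\ref{samelensamesymbolcount} is still diagonal, but the underlying language is defined by two \emph{independent counting constraints} smeared across the whole word, so that the unfolded language $\{uv\#v^\rev u^\rev : |u|=|v|,\ |u|_a=|v|_a\}$ appears to demand two blind counters running independently, which is (heuristically) incompatible with the nested stack automaton model. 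If you pursue the conjecture, your Gilman-shrinking-lemma plan should be aimed at a counting-type candidate of that kind, not at block-nested languages like $\{a^n b^{2^n}\}$.
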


\begin{example}\label{samelensamesymbolcount}
  Let $|X|\geq 2$ and for $a\in X$ define a relation $\rho$ on $X^*$
  by $u \mathrel\rho v$ iff $u$ and $v$ have the same length and $|u|_a=|v|_a$.
  This relation is in $T(\OC)$, since an automaton simply reads
  pairs of symbols (thus ensuring that two accepted words have the same length) and tracks the difference in the number
  of symbols $a$ on its two input tapes.  Thus, by Proposition~\ref{cflin} below,
  \[
    L_\rho = \gset{u\#v}{|u| = |v|, |u|_a = |v|_a}
  \]
  is a linear indexed language. Since $\lin$ is closed under homomorphisms, the relation
  $\sigma = \gset{(uv,uv)}{|u| = |v|, |u|_a = |v|_a}$, obtained by applying the homomorphism defined by $x \mapsto (x,x)$
  (for all $x \neq \#$) and $\# \mapsto \emptyword$, is also in $T(\lin)$ and so in $T(\ind)$. However, it is unlikely
  that $\sigma$ is in $U(\ind)$, since recognizing
  \[
    L_\sigma = \gset{uv\#v^\rev u^\rev}{|u| = |v|, |u|_a = |v|_a}
  \]
  seems to require two blind counters operating independently, which in turn does not seem to be workable with the nested stack automaton model of indexed languages.
(Two independent non-blind counters are of course equivalent to a Turing machine.)  
   Thus $\sigma$ is a potential witness for the proper containment of
  $U(\ind)$ in $T(\ind)$.
\end{example}

\section{Comparing $U(\C)$ and $T(\D)$ for $\D$ a subclass of $\C$}\label{TU}

Throughout this section, we will make use of Proposition~\ref{basics}
-- in particular the fact that if $\C\setminus \D\neq \emptyset$ then
$U(\C)\setminus T(\D)\neq \emptyset$ -- without further comment.

\begin{proposition}\label{ratcf}
$T(\rat)$ is a proper subclass of $U(\CF)$.
\end{proposition}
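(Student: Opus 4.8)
The plan is to prove the inclusion $T(\rat)\subseteq U(\CF)$ by a direct grammar construction, and to obtain properness immediately from Proposition~\ref{basics}. For the latter, both $\rat$ and $\CF$ are closed under homomorphism, left quotient and left concatenation by a single symbol, and $\CF\setminus\rat$ is non-empty (for instance $\gset{a^nb^n}{n\in\N_0}$ is context-free but not rational); hence Proposition~\ref{basics} gives $U(\CF)\setminus T(\rat)\neq\emptyset$. So all the real content lies in the inclusion.

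For the inclusion, I would start from a relation $\rho\in T(\rat)$, fix a rational language $L\subseteq\Xtt^*$ with $L\pi=\rho$, and fix an NFA $\A=(Q,\Xtt,\delta,q_0,F)$ recognising $L$. The aim is to build a context-free grammar generating $L_\rho=\gset{u\#v^\rev}{(u,v)\in\rho}$. The guiding observation is that each transition of $\A$ reading a letter $(x,y)\in\Xtt$ contributes $x$ to the first tape and $y$ to the second; in the unfolded word $u\#v^\rev$ the first-tape symbols occur left to right in the order read, whereas the second-tape symbols occur right to left. A linear grammar can track both at once by emitting each first-tape symbol on the left of a single working nonterminal and each second-tape symbol on its right.

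Concretely, I would introduce one nonterminal $T_q$ for each $q\in Q$, take $T_{q_0}$ as start symbol, and include a production $T_q\ra x\,T_{q'}\,y$ for every transition $q\stackrel{(x,y)}{\ra}q'$ of $\A$ (where $x,y\in X\cup\set{\emptyword}$ contribute nothing when equal to $\emptyword$), together with $T_q\ra\#$ for every $q\in F$. A routine induction on derivation length then shows that $T_{q_0}\Ra^{*}x_1\cdots x_m\#y_m\cdots y_1$ holds precisely when $\A$ has an accepting run reading $(x_1,y_1)\cdots(x_m,y_m)$, that is, precisely when the generated word is $u\#v^\rev$ for the pair $(u,v)=(x_1\cdots x_m,\,y_1\cdots y_m)\in\rho$. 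Distinct words of $L$ with the same image under $\pi$ yield the same unfolded word, so the grammar generates exactly $L_\rho$. Hence $L_\rho$ is context-free and $\rho\in U(\CF)$.

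The argument is essentially routine, and I do not anticipate a genuine obstacle; the only subtle point is the reversal bookkeeping, namely that the second-tape symbols must be emitted in the opposite order to the first-tape symbols. This is handled automatically by placing the symbol $y$ to the right of the recursive nonterminal $T_{q'}$ in each production, which is exactly what builds $v^\rev$ rather than $v$ to the right of $\#$.
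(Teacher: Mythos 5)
Your proof is correct, and its core content---the inclusion $T(\rat)\subseteq U(\CF)$---is essentially the paper's own argument: the paper converts a left-rational two-tape grammar production by production ($A\ra(a,\emptyword)B$ becomes $A'\ra aB'$, $A\ra(\emptyword,a)B$ becomes $A'\ra B'a$, with terminal productions emitting $\#$), which is exactly your NFA-to-linear-grammar construction in grammar form, including the same reversal bookkeeping of placing second-tape symbols to the right of the working nonterminal. The only divergence is the properness witness: you apply Proposition~\ref{basics} with $\gset{a^nb^n}{n\in\N_0}\in\CF\setminus\rat$, whereas the paper points to the concrete relation of Proposition~\ref{wpratcf} (the word problem of the free group of rank~1, lying in $U(\OC)\setminus T(\rat)$), chosen for its algebraic interest; your route is equally valid, and indeed the paper announces at the start of that section that Proposition~\ref{basics} will be used in exactly this way without further comment.
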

\begin{proof}
Let $\rho$ be a binary relation on $X^*$ that lies in $T(\rat)$ and let
$\Gamma = (N,\Xtt,P,S)$ be a left rational two-tape grammar for $\rho$
Define a context-free grammar
$\Gamma' = (N',X,P',S')$ whose
productions are derived from $P$ as follows (for $A,B \in N$ and $a \in X$):

{%
  \smallskip
  \centering
  \begin{tabular}{lrcl@{\quad}lrcl}
    \toprule
    \multicolumn{4}{l}{Production in $\Gamma$} & \multicolumn{4}{l}{Corresponding Production in $\Gamma'$}                                                            \\
    \midrule
 & $A$ & $\ra$ & $(a,\emptyword)B$ & \kern 15mm & $A'$ & $\ra$ & $aB'$ \\
 & $A$ & $\ra$ & $(\emptyword,a)B$ &  & $A'$ & $\ra$ & $B'a$ \\
 & $A$ & $\ra$ & $(a,\emptyword)$  &  & $A'$ & $\ra$ & $a\#$ \\
 & $A$ & $\ra$ & $(\emptyword,a)$  &  & $A'$ & $\ra$ & $\#a$ \\
 & $A$ & $\ra$ & $\emptyword$      &  & $A'$ & $\ra$ & $\#$, \\
    \bottomrule
  \end{tabular}
  \par
  \smallskip
}

It is easy to see that $L(\Gamma') = \gset{u\#v^\rev}{(u,v)\in \rho} = L_\rho$.

An example of a relation that lies in $U(\OC) \setminus T(\rat)$ and thus in $U(\CF) \setminus T(\rat)$ is given in
Proposition~\ref{wpratcf}.  \qed
\end{proof}

\begin{proposition}\label{rat1c}
The classes $T(\rat)$ and $U(\OC)$ are incomparable.
\end{proposition}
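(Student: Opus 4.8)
The plan is to prove incomparability by producing one witness in each of the set differences $T(\rat)\setminus U(\OC)$ and $U(\OC)\setminus T(\rat)$; together these show that neither class contains the other.

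The inclusion $U(\OC)\not\subseteq T(\rat)$ is immediate from Proposition~\ref{basics}: taking the one-counter non-rational language $K=\gset{a^nb^n}{n\in\N}$ shows $\OC\setminus\rat\neq\emptyset$, whence $U(\OC)\setminus T(\rat)\neq\emptyset$. Concretely, $\rho_K=\gset{(\emptyword,a^nb^n)}{n\in\N}$ lies in $U(\OC)$, since its unfolded language $\gset{\#b^na^n}{n\in\N}$ is one-counter, but not in $T(\rat)$, since its projection onto the second component would otherwise have to be the non-rational language $K$. So all the substance of the proposition is in the other direction.

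For $T(\rat)\not\subseteq U(\OC)$ I would take the identity relation $\rho=\gset{(w,w)}{w\in\{a,b\}^*}$ over a two-letter alphabet. This is rational, being defined by the two-tape grammar $S\ra (a,a)S\mid(b,b)S\mid\emptyword$, so $\rho\in T(\rat)$. It therefore suffices to show $\rho\notin U(\OC)$, that is, that the marked palindrome language $L_\rho=\gset{w\#w^\rev}{w\in\{a,b\}^*}$ is not one-counter. The two letters are essential: over a one-letter alphabet the analogous language $\gset{x^n\#x^n}{n\in\N}$ \emph{is} one-counter, so the obstruction is genuinely the inability of a counter to remember a binary word rather than merely a length.

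This non-one-counter claim is the main obstacle, and the context-free pumping lemma is of no help, since $L_\rho$ is linear context-free and satisfies it; a counter-sensitive argument is needed. I would use a configuration-counting (fooling-set) argument. Suppose a one-counter automaton $\A$ accepts $L_\rho$; after putting $\A$ in a normal form in which the counter value after reading a length-$m$ prefix is bounded linearly in $m$, the number of configurations (state together with counter value) reachable after reading a length-$m$ prefix is at most some polynomial $p(m)$. For each of the $2^n$ words $w\in\{a,b\}^n$, fix an accepting run of $w\#w^\rev$ and let $\gamma_w$ be the configuration of that run immediately after $w\#$ has been read. If $\gamma_{w_1}=\gamma_{w_2}$ for distinct $w_1,w_2$ of length $n$, then splicing the initial part of the run on $w_2\#$ onto the final part of the run on $w_1\#w_1^\rev$ that reads $w_1^\rev$ produces an accepting run on $w_2\#w_1^\rev$, contradicting $w_2\#w_1^\rev\notin L_\rho$. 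Hence the $2^n$ configurations $\gamma_w$ are pairwise distinct, forcing $2^n\le p(n+1)$, which fails for large $n$. The one delicate point is justifying the linear bound on the counter, so that $p$ is polynomial; this is where I would be most careful, handling $\emptyword$-transitions by deleting redundant counter-pumping $\emptyword$-cycles from the chosen accepting runs. With that in hand $L_\rho$ is not one-counter, so $\rho\in T(\rat)\setminus U(\OC)$, and together with the previous direction this establishes incomparability.
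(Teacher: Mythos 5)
Your proposal is correct, and structurally it mirrors the paper's proof, which obtains the proposition from Proposition~\ref{basics} together with Proposition~\ref{wpratcf}. The direction $U(\OC)\setminus T(\rat)\neq\emptyset$ is handled the same way in substance (you instantiate Proposition~\ref{basics} with $K=\gset{a^nb^n}{n\in\N}$; the paper points instead to the word problem of the free group of rank~$1$), and for the hard direction you even choose the same witness as the paper: the equality relation on $\set{a,b}^*$, i.e.\ the word problem of the free monoid of rank~$2$. Where you genuinely diverge is in proving that $L_\rho=\gset{w\#w^\rev}{w\in\set{a,b}^*}$ is not one-counter. The paper outsources this to \cite[Proposition~4.1]{holt_onecounter} (a finitely generated semigroup with one-counter word problem has polynomial growth, while $\set{a,b}^*$ grows exponentially), whereas you give a self-contained fooling-set argument: only polynomially many configurations (state, counter value) are available after reading a prefix, yet the $2^n$ prefixes $w\#$ must lead to pairwise distinct configurations, since a collision would allow splicing runs to accept some $w_2\#w_1^\rev$ with $w_1\neq w_2$. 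This is essentially the argument behind the cited growth result, specialized to the case at hand; what it buys is independence from the literature, and your observation that the context-free pumping lemma cannot help (the language is linear context-free) correctly identifies why a counter-specific argument is unavoidable.

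One caveat about the step you yourself flag as delicate. ``Deleting redundant counter-pumping $\emptyword$-cycles'' is not, as literally stated, a sound operation: removing a single positive $\emptyword$-cycle lowers every later counter value, which can drive the counter negative, change the outcome of zero-tests, or destroy acceptance by zero counter. The standard fix cancels cycles in pairs: if the maximum counter value of an accepting run on an input of length $n$ exceeds roughly $|Q|^2(n+1)^2$, then among the last upward crossings of each counter level before the maximum and the first downward crossings after it, two levels must share the same (state, input position) data at both crossings, and excising both the ascending and the descending segment yields a valid accepting run on the same input with smaller counter. Iterating gives a polynomial (quadratic rather than linear, but that is immaterial to your counting) bound on the counter. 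Since this normalization is standard in the study of one-counter automata, the gap is fillable, but the deletion step as you sketched it would fail if implemented naively.
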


\begin{proof}
This follows from Proposition~\ref{basics} and Proposition~\ref{wpratcf} below.
\qed
\end{proof}

\begin{proposition}\label{1ccf}
The classes $T(\OC)$ and $U(\CF)$ are incomparable.
\end{proposition}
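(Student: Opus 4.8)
The plan is to prove incomparability in the usual way, by exhibiting a relation in each of $U(\CF)\setminus T(\OC)$ and $T(\OC)\setminus U(\CF)$. The first of these is immediate from Proposition~\ref{basics}. Since one-counter languages form a proper subclass of context-free languages, $\CF\setminus\OC\neq\emptyset$; as both $\CF$ and $\OC$ are closed under homomorphism and under left quotient and left concatenation by a single symbol, Proposition~\ref{basics} yields $U(\CF)\setminus T(\OC)\neq\emptyset$, and hence $U(\CF)\not\subseteq T(\OC)$.

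For the reverse direction I would give a single explicit witness over a two-symbol alphabet. Take $X=\set{a,b}$ and $\rho=\gset{(a^nb^n,a^n)}{n\in\N_0}$. To see $\rho\in T(\OC)$, consider the one-counter automaton over $\Xtt$ accepting $\gset{(a,a)^n(b,\emptyword)^n}{n\in\N_0}$: it reads pairs $(a,a)$ while incrementing its counter, then reads pairs $(b,\emptyword)$ while decrementing, and accepts exactly when the counter returns to $0$. Applying $\pi$ to this language gives precisely $\rho$, so $\rho\in T(\OC)$. To see $\rho\notin U(\CF)$, observe that $L_\rho=\gset{a^nb^n\#a^n}{n\in\N_0}$; applying the homomorphism that erases $\#$ and fixes $a,b$ produces $\gset{a^nb^na^n}{n\in\N_0}$, which is not context-free by the pumping lemma. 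As $\CF$ is closed under homomorphism, $L_\rho$ is not context-free, so $\rho\notin U(\CF)$ and therefore $T(\OC)\not\subseteq U(\CF)$. Combining the two directions gives incomparability.

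The only content-bearing step is the second, and its crux is the reason one counter suffices for $\rho$ on two tapes: the equality $\abs{u}_a=\abs{v}_a$ is enforced for free by reading the $a$-blocks of the two tapes in synchronised pairs $(a,a)$, leaving the single counter free to enforce $\abs{u}_a=\abs{u}_b$. In the unfolded word $a^nb^n\#a^n$ no such alignment is available, so the three equal blocks are forced into one string and reproduce the canonical non-context-free language $a^nb^na^n$. I expect no genuine obstacle here; for consistency with the algebraically motivated witnesses used elsewhere in the paper one could instead invoke the word problem of Proposition~\ref{wp1ccf2}, which also lies in $T(\OC)\setminus U(\CF)$.
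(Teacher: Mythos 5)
Your proof is correct and follows essentially the same route as the paper: the paper likewise proves incomparability by exhibiting a witness in each direction, taking $\rho=\gset{(a^nb^n,b^na^n)}{n\in\N}$ (whose unfolding $\gset{a^nb^n\#a^nb^n}{n\in\N}$ is not context-free) for $T(\OC)\setminus U(\CF)$, and citing the word problem of the free group of rank~2 (Proposition~\ref{wp1ccf1}) rather than Proposition~\ref{basics} for $U(\CF)\setminus T(\OC)$. These differences are cosmetic: your witness $\gset{(a^nb^n,a^n)}{n\in\N_0}$ works by exactly the same one-counter-versus-pumping argument, and your appeal to Proposition~\ref{basics} is precisely the use of that proposition which the preamble of the section licenses ``without further comment''.
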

\begin{proof}
Let $X=\set{a,b}$ and $\rho = \gset{ (a^nb^n,b^na^n)}{n\in \N}$.
Then $\rho$ is one-counter, but $L_\rho = \gset{a^nb^n\#a^nb^n}{n\in \N}$
is not context-free.
An example of a relation in $U(\CF)\setminus T(\OC)$ is given in
Proposition~\ref{wp1ccf1}.
\end{proof}

\begin{corollary}\label{cfet0l}
The classes $T(\CF)$ and $U(\etol)$ are incomparable.
\end{corollary}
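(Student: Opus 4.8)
The plan is to establish incomparability in the usual way: exhibit one relation lying in $T(\CF)\setminus U(\etol)$ and another lying in $U(\etol)\setminus T(\CF)$, thereby ruling out containment in either direction. Since this is stated as a corollary, both witnesses should already be available from the preceding results, so the work is essentially a matter of assembling them.

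For the direction $T(\CF)\not\subseteq U(\etol)$, I would simply invoke Proposition~\ref{et0l}, whose statement already guarantees the existence of a relation in $T(\CF)\setminus U(\etol)$ (namely the equality relation restricted to a context-free but non-EDT0L language, whose unfolded form $\gset{w\#w^\rev}{w\in K}$ fails to be ET0L). For the reverse direction $U(\etol)\not\subseteq T(\CF)$, I would point to the cyclic permutation relation $\kappa$ of Proposition~\ref{cyclic}, which is shown there to lie in $U(\etol)$ yet not in $T(\CF)$ for $|X|\geq 2$. Alternatively, the squaring relation $\rho_g$ of Example~\ref{functions} serves equally well: it lies in $U(\etol)\setminus T(\lin)$, and since context-free languages form a subclass of linear indexed languages we have $T(\CF)\subseteq T(\lin)$, whence $\rho_g\notin T(\CF)$.

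Producing a relation in each of the two set differences yields incomparability immediately. There is no genuine obstacle remaining at the level of this corollary: the substantive arguments — in particular the intricate pumping-and-transduction proof that $\kappa\notin T(\CF)$ in Proposition~\ref{cyclic}, and the semilinearity obstruction placing $\rho_g$ outside $T(\lin)$ — have already been carried out. The only point requiring a moment's care is the inclusion $T(\CF)\subseteq T(\lin)$ used in the alternative argument, but this is immediate from $\CF\subseteq\lin$ together with the monotonicity of $T(\cdot)$ observed just after Example~\ref{rev}.
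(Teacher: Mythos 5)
Your proof is correct, and half of it coincides with the paper's: both you and the paper obtain $T(\CF)\not\subseteq U(\etol)$ directly from Proposition~\ref{et0l}. Where you diverge is the other direction, $U(\etol)\not\subseteq T(\CF)$. The paper's entire proof reads ``This follows from Propositions~\ref{basics} and \ref{et0l}'': it applies the generic separation result, Proposition~\ref{basics}, with $\C=\etol$ and $\D=\CF$, which needs only the bare fact that some ET0L language (e.g.\ $\gset{x^{2^n}}{n\in\N}$ or $\gset{a^nb^nc^n}{n\in\N_0}$) fails to be context-free; the witness it produces is the cheap relation $\rho_K=\gset{(\emptyword,w)}{w\in K}$, whose first coordinate is always empty. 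You instead reach for concrete, previously established witnesses: the cyclic permutation relation $\kappa$ of Proposition~\ref{cyclic}, or the squaring relation $\rho_g$ of Example~\ref{functions} combined with the inclusion $T(\CF)\subseteq T(\lin)$ (which, as you say, follows from $\CF\subseteq\lin$ and the monotonicity of $T(\cdot)$ noted after Example~\ref{rev}). Both routes are sound, and the results you cite appear before the corollary, so there is no circularity. The trade-off: the paper's route is more economical, resting on an elementary language-theoretic separation rather than on the laborious pumping-and-transduction analysis behind $\kappa\notin T(\CF)$ or the semilinearity obstruction behind $\rho_g\notin T(\lin)$; your route invokes strictly heavier machinery than the corollary needs, but in exchange exhibits more natural witnesses, relations that genuinely relate nonempty words on both coordinates.
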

\begin{proof}
This follows from Propositions~\ref{basics} and \ref{et0l}.
\qed
\end{proof}

\begin{proposition}\label{cflin}
$T(\CF)$ is a proper subclass of $U(\lin)$.
\end{proposition}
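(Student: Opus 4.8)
The statement has two parts: the containment $T(\CF)\subseteq U(\lin)$ and its properness. Properness is the easy half. Since $\gset{a^nb^nc^n}{n\geq 0}$ is linear indexed but not context-free, we have $\lin\setminus\CF\neq\emptyset$, and both classes are closed under homomorphism, left quotient and left concatenation by a single symbol, so Proposition~\ref{basics} (in the form quoted at the start of this section) yields a relation in $U(\lin)\setminus T(\CF)$ -- concretely $\gset{(\emptyword,w)}{w\in K}$ for $K=\gset{a^nb^nc^n}{n\geq 0}$. All the real work therefore lies in establishing the containment, and the rest of the plan concerns that.

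So let $\rho\in T(\CF)$, say $\rho=L\pi$ with $L\subseteq\Xtt^*$ context-free, and fix a context-free grammar $G=(N,\Xtt,P,S)$ for $L$, which we may take in Chomsky normal form (every production of the form $A\ra BC$ or $A\ra t$ with $t\in\Xtt$, together with possibly $S\ra\emptyword$). Writing $w\pi=(u,v)$, we have $L_\rho=\gset{u\#v^\rev}{w\in L,\ w\pi=(u,v)}$, and the plan is to produce a linear indexed grammar for this language. The guiding idea is to associate with each nonterminal $A$ of $G$, whose terminal yield $w_A$ satisfies $w_A\pi=(u_A,v_A)$, the \emph{wrapping context} $W_A(h)=u_A\,h\,v_A^\rev$, a word with a single hole. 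A terminal production $A\ra(x,y)$ then corresponds to $W_A(h)=x\,h\,y$, while a binary production $A\ra BC$ corresponds to the composition $W_A=W_B\circ W_C$, since $u_A=u_Bu_C$ and $v_A^\rev=v_C^\rev v_B^\rev$. Filling the innermost hole with $\#$ gives $L_\rho=\gset{W_S(\#)}{S\reduces w,\ w\in L}$.

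I would then realise this nested wrapping by a linear indexed grammar (equivalently, since the two formalisms define the same languages \cite[p.72]{kallmeyer_parsing}, a tree-adjoining grammar). Each binary production of $G$ becomes an adjunction that surrounds the material derived so far by a prefix drawn from the first tape and a reversed suffix drawn from the second tape; the index stack records the pending outer wraps, so that their second-tape contributions are emitted in the correct reversed order once the spine reaches the $\#$ at the foot. Because context composition is associative and respects the left-to-right order of the leaves, the resulting yield depends only on the leaf sequence $w\in L$ and equals exactly $u\#v^\rev$; in particular the shape of the chosen derivation tree of $G$ is immaterial. A routine induction on derivations then shows that the grammar generates precisely $L_\rho$ and that the transformation stays within the linear (single-flag-copying) restriction, whence $\rho\in U(\lin)$.

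The crux is the binary production $A\ra BC$. Under unfolding, the second-tape contributions of the two subtrees below $B$ and $C$ must be concatenated in reverse, as $v_C^\rev v_B^\rev$, and hence placed into a single block to the right of $\#$; yet a linear indexed grammar may pass its index stack to only one child, so the two blocks cannot simply be accumulated on one stack. The resolution is exactly the wrapping/adjunction structure: one subtree ($C$) lies on the spine leading to the foot, while the other ($B$) is adjoined as a complete wrap around it, with its reversed second-tape yield emitted on the way back out. Making this faithful -- verifying that the stack discipline reproduces the nesting exactly, that no spurious strings are generated, and that terminal pairs with $\emptyword$-components and the empty word are handled correctly -- is the main technical obstacle, but it is precisely the nested-dependency behaviour for which tree-adjoining grammars are designed, so I expect it to go through cleanly.
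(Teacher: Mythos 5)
Your proposal is correct in substance but takes a genuinely different route from the paper. The properness half is the same in both (Proposition~\ref{basics} applied to a witness in $\lin\setminus\CF$, exactly as the section preamble licenses). For the containment, however, the paper never mentions adjunction or the TAG equivalence: it builds the linear indexed grammar directly, simulating a \emph{leftmost} derivation of the Chomsky-normal-form grammar with a single active non-terminal $I$ whose flag stack stores the pending non-terminals of the current sentential form; a production $A\ra BC$ becomes $I_A\ra I_B^{C}$ (push $C$, continue with $B$), a pop $I^A\ra I_A$ resumes a deferred non-terminal, terminal productions emit $(a,\emptyword)$-symbols to the left of $I$ and $(\emptyword,a)$-symbols to its right, and $\#$ is written \emph{last}, by $I^{\$}\ra\#$, with the invariant that $(w_1,w_2)B_1\cdots B_k$ corresponds to $w_1 I_{B_1}^{B_2\cdots B_k\$}w_2^\rev$. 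Your wrapping-context semantics $W_A=W_B\circ W_C$ is precisely the insight underlying that construction, but you realize it tree-structurally: each binary production becomes adjunction of a derived $B$-auxiliary tree around the spine through $C$, with $\#$ at the foot, and you then lean on the cited equivalence of tree-adjoining and linear indexed grammars. Your plan is completable -- for $A\ra BC$ take an auxiliary tree with root and foot labelled by $A$ and interior nodes labelled $B$ and $C$ carrying obligatory adjunction, and for terminal productions take auxiliary trees with the terminal to the left (respectively right) of the foot; your associativity observation then yields exactly $u\#v^\rev$ independently of the tree shape. What the paper's route buys is a short, fully explicit grammar with nothing deferred to an equivalence theorem; what yours buys is a clearer semantic explanation of why the unfolded relation has exactly the nested structure TAGs are built for. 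One caution: your gloss that the index stack holds pending wraps whose \emph{second-tape} contributions are emitted ``once the spine reaches the $\#$'' does not describe a workable linear indexed grammar -- a single stack symbol cannot encode the sub-derivation needed to produce $v_B^\rev$ after the fact, correlated with an already-emitted $u_B$. The correct stack discipline (the paper's) defers a pending non-terminal's contribution to \emph{both} tapes and emits them together, outside-in, before $\#$ is ever written; your adjunction picture, in which $B$ wraps as a single unit, is the right resolution and avoids this trap, so you should phrase the construction purely in those terms.
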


\begin{proof}
  Let $\rho$ be a relation on $X^*$ such that $\rho = L\pi$ for a context-free language $L$. Let
  $\Gamma = (N, \Xtt, P, S)$ be a context-free grammar in Chomsky normal form for $L$.  We construct a linear indexed grammar with non-terminals $\set{I, I_A\mid A\in N}$, start symbol $I$, flags $N\cup\set{\$}$ (written as superscripts on the non-terminals) and productions
\begin{align*}
I   & \ra I_S^{\$} &&                             & I_A    & \ra aI & \text{for $A\ra (a,\emptyword)$ in $P$} \\
I_A & \ra I_B^C    && \text{for $A\ra BC$ in $P$} & I_A    & \ra Ia & \text{for $A\ra (\emptyword,a)$ in $P$} \\
I^A & \ra I_A      && \text{for $A\in N$}         & I^{\$} & \ra \#.
\end{align*}
The idea is that the grammar uses the flags to perform a leftmost derivation
in $\Gamma$, with the symbols from the first tape being output to the left and
the symbols from the second tape to the right (so that they appear in reverse).
Each sentential form $(w_1,w_2)B_1\ldots B_k$ in a leftmost derivation in $\Gamma$
corresponds in $\Gamma'$ to the sentential form $w_1 I_{B_1}^{B_2\ldots B_k \$} w_2^\rev$.
When the derivation is complete, we will have obtained the expression
$u I^{\$} v^\rev$ for some $(u,v)\in \rho$.  Finally, the production $I^{\$}$
places the symbol $\#$ to give $u\#v^\rev$, as desired.
Hence $L_\rho$ is linear indexed and so $T(\CF)\subseteq U(\lin)$.
\qed
\end{proof}

The incomparability of $\lin$ and $\etol$ is a straightforward consequence of results in
\cite{duske_linear,ehrenfeucht_relationship,kallmeyer_parsing}; for details, see \cite[Proofs of Corollaries 4.2
and~4.3]{bcm_crosssection}. Hence Proposition~\ref{basics} implies that amongst the classes $U(\lin)$, $U(\etol)$,
$T(\lin)$ and $T(\etol)$, the only containments are $U(\lin) \subseteq T(\lin)$ and $U(\etol) \subseteq T(\etol)$.




\begin{conjecture}\label{1cet0lconj}
$T(\OC)$ and $U(\etol)$ are incomparable.
\end{conjecture}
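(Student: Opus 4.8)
The statement asks for relations witnessing both $U(\etol)\setminus T(\OC)\neq\emptyset$ and $T(\OC)\setminus U(\etol)\neq\emptyset$. The first of these is easy, and I would dispose of it immediately using the square function $\rho_g$ of Example~\ref{functions}. Since $\OC\subseteq\CF$, Propositions~\ref{cflin} and~\ref{indexed} give the chain $T(\OC)\subseteq T(\CF)\subseteq U(\lin)\subseteq T(\lin)$; as $\rho_g\in U(\etol)\setminus T(\lin)$, it follows at once that $\rho_g\in U(\etol)\setminus T(\OC)$.

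The real content is the other direction, and my plan is to push the witness for $T(\CF)\setminus U(\etol)$ (Corollary~\ref{cfet0l}, obtained through Proposition~\ref{et0l}) down into $T(\OC)$. Recall the shape of that witness: for a language $K$ that is context-free but not EDT0L one sets $\sigma_K=\gset{(w,w)}{w\in K}$, and the point is that $L_{\sigma_K}=\gset{w\#w^\rev}{w\in K}$ fails to be $\etol$ by \cite{ehrenfeucht_relationship}. The observation I would exploit is that this $\sigma_K$ already lies in $T(\OC)$ as soon as $K$ is one-counter: the language $\gset{(x_1,x_1)\cdots(x_n,x_n)}{x_1\cdots x_n\in K}$ is the image of $K$ under the injective, length-preserving relabelling $x\mapsto(x,x)$, which a one-counter automaton recognizes by a mere change of input alphabet, and it maps onto $\sigma_K$ under $\pi$. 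Thus the whole problem collapses to a single language-theoretic question: \emph{does there exist a one-counter language that is not EDT0L?} Any such $K$ yields $\sigma_K\in T(\OC)\setminus U(\etol)$ and finishes the proof.

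The hard part will be answering that last question, and it is the reason the statement is only conjectural. It is known that context-free non-EDT0L languages exist over a two-letter alphabet \cite{ehrenfeucht_onsome}, but the standard examples are not evidently one-counter, and it is unclear whether one-counter-ness can be arranged without re-admitting the language into EDT0L. It is worth stressing that the naive candidate does not work: the one-counter relation $\gset{(a^nb^n,b^na^n)}{n\in\N}$ of Proposition~\ref{1ccf}, which witnesses $T(\OC)\setminus U(\CF)$, has unfolding $L_\rho=\gset{a^nb^n\#a^nb^n}{n\in\N}$ that is itself $\etol$ (from axiom $X\#X$, apply the table $X\ra aXb$ for $n$ steps and then the table $X\ra\emptyword$). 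This reflects a general difficulty: parallel rewriting reproduces a single synchronized matching together with its reversed copy for free, so relations enforcing essentially one counter constraint tend to land in $U(\etol)$. A successful witness must therefore smuggle genuinely non-EDT0L behaviour into a one-counter language; failing the existence of such a $K$, the alternative route would be to develop an $\etol$ pumping or structural lemma fine-tuned to unfoldings of one-counter relations and to apply it to a concrete $\rho\in T(\OC)$, which I expect to be considerably more delicate.
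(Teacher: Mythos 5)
This statement is a conjecture: the paper itself offers no proof, only a heuristic candidate witness, namely the relation of Example~\ref{samelensamesymbolcount} (which lies in $T(\OC)$) together with an informal argument that its unfolding ought not to be ET0L. Your proposal, rightly, is not a proof either, but everything you actually claim is correct, and your partial progress takes a genuinely different route from the paper's. For the direction $U(\etol)\not\subseteq T(\OC)$, your chain $T(\OC)\subseteq T(\CF)\subseteq U(\lin)\subseteq T(\lin)$ together with $\rho_g\in U(\etol)\setminus T(\lin)$ from Example~\ref{functions} is valid, though the paper gets this direction for free from Proposition~\ref{basics} (since $\etol\setminus\OC\neq\emptyset$), which it announces it will use without comment throughout that section. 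For the hard direction $T(\OC)\not\subseteq U(\etol)$, the paper's candidate is $\gset{(u,v)}{|u|=|v|,\,|u|_a=|v|_a}$, and closing the conjecture along that line would require a new, bespoke non-ET0L lower-bound argument; your reduction instead shows that it suffices to exhibit a one-counter language $K$ that is not EDT0L, since then $\sigma_K=\gset{(w,w)}{w\in K}$ lies in $T(\OC)$ by relabelling a one-counter automaton for $K$, while $L_{\sigma_K}=\gset{w\#w^\rev}{w\in K}$ fails to be ET0L by the Ehrenfeucht--Rozenberg theorem already invoked in Proposition~\ref{et0l} (your $K$ is one-counter, hence context-free, so that citation applies). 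This trade is attractive: it replaces an ad hoc non-ET0L proof by a clean, purely language-theoretic question, and it reuses a known theorem rather than demanding a new one. Two caveats: your phrase ``the whole problem collapses to'' overstates the situation --- existence of such a $K$ is sufficient but not necessary, and a negative answer would leave the paper's candidate (or others) in play, a point you do acknowledge at the end; and your observation that the witness of Proposition~\ref{1ccf} has ET0L (indeed EDT0L) unfolding is a correct and useful sanity check, explaining why the paper's $T(\OC)\setminus U(\CF)$ witness cannot simply be reused here.
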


Let $\rho$ be the relation defined in Example~\ref{samelensamesymbolcount}; recall that $\rho$ was proven to lie in
$T(\OC)$. It appears unlikely that $L_\rho$ is ET0L, since there seems to be no way of using ET0L tables to maintain both the same length of words on each side of $\#$ and the same number of symbols $a$, while allowing the occurrences of $a$ to appear in any combination of
positions.

\begin{conjecture}\label{linconj}
$T(\lin)$ and $U(\ind)$ are incomparable.
\end{conjecture}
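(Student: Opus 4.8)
The plan is to establish the two required non-containments separately. Incomparability of $T(\lin)$ and $U(\ind)$ means exhibiting a relation in $U(\ind)\setminus T(\lin)$ and a relation in $T(\lin)\setminus U(\ind)$. The first of these is routine and should be dispatched exactly as the rest of this section: invoking Proposition~\ref{basics}, it suffices that $\ind\setminus\lin$ is non-empty, which holds since $L_5 = \gset{a^n b^n c^n d^n e^n}{n\in\N_0}$ is indexed but not linear indexed (cf.\ the proof of Proposition~\ref{linind}); hence $U(\ind)\setminus T(\lin)$ is non-empty.

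The second direction is where the real work lies. The natural candidate witness is the relation $\sigma$ of Example~\ref{samelensamesymbolcount}, which was already shown there to lie in $T(\lin)$, and hence in $T(\ind)$. Since $U(\ind)\subseteq T(\ind)$ by Proposition~\ref{indexed}, proving $\sigma\notin U(\ind)$ would place $\sigma$ in $T(\lin)\setminus U(\ind)$ and, as a by-product, would also settle Conjecture~\ref{indconj}. Everything therefore reduces to proving that
\[
  L_\sigma = \gset{uv\#v^\rev u^\rev}{|u|=|v|,\ |u|_a=|v|_a}
\]
is not an indexed language. It is worth observing that $v^\rev u^\rev = (uv)^\rev$, so $L_\sigma = \gset{w\#w^\rev}{w=uv,\ |u|=|v|,\ |u|_a=|v|_a}$: the word is a marked palindrome whose left half is additionally required to split into two equal-length pieces carrying equally many $a$'s. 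Matching $w$ against $w^\rev$ is a pushdown (indeed linear context-free) task, but verifying the two constraints $|u|=|v|$ and $|u|_a=|v|_a$ simultaneously, with the stack already committed to the palindrome matching, is exactly the ``two independent blind counters'' phenomenon flagged in Example~\ref{samelensamesymbolcount}.

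To prove $L_\sigma\notin\ind$ I would attempt Gilman's shrinking lemma \cite{gilman_shrinking}, in the same spirit as the treatment of $\rho_h$ in Example~\ref{functions}. Assuming $L_\sigma$ indexed yields a constant $k$; I would feed the lemma an adversarial word $uv\#v^\rev u^\rev$ in which the $a$'s are distributed so that the bounded-size contractions the lemma guarantees are forced to disturb one of the two quantities $|u|-|v|$ or $|u|_a-|v|_a$, or else the mirror symmetry about $\#$, contradicting membership in $L_\sigma$. The hard part will be that the reversal around $\#$ couples the two halves: a contraction acting near one end is mirrored near the other, so a single adversarial word must simultaneously defeat length-balance, $a$-count-balance, and palindromy for \emph{every} admissible decomposition the lemma may return, and arranging this uniformly is delicate. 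An alternative attack, exploiting that $\ind$ is a full AFL, would intersect $L_\sigma$ with a regular language and apply a homomorphism (or rational transduction) to reduce to a language already known not to be indexed; here the obstacle is to constrain the positions of the $a$'s so as to isolate the genuine two-counter difficulty without collapsing it to something an indexed grammar can recognize. Surmounting this non-membership argument is precisely what keeps both this statement and Conjecture~\ref{indconj} open.
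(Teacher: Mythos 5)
This statement is a conjecture in the paper, not a theorem: the paper offers no proof, only the remark that the relation $\sigma$ of Example~\ref{samelensamesymbolcount} would serve as a witness that $T(\lin)\not\subseteq U(\ind)$ if one could establish $\sigma\notin U(\ind)$. Your proposal matches the paper's position essentially exactly -- the direction $U(\ind)\not\subseteq T(\lin)$ follows from Proposition~\ref{basics} since $\ind\setminus\lin\neq\emptyset$, the candidate witness for the hard direction is the same $\sigma$ (with the same ``two independent blind counters'' obstruction the paper flags), and you correctly identify that the missing non-membership argument $L_\sigma\notin\ind$ is precisely what keeps this statement, and Conjecture~\ref{indconj}, open.
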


The relation $\sigma$ mentioned as a potential witness for Conjecture~\ref{indconj}
would also serve as a witness that $T(\lin)\not\subseteq U(\ind)$.

\begin{proposition}
$T(\ind)$ is properly contained in $U(\csl) = T(\csl)$.
\end{proposition}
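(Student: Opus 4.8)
The plan is to establish both the containment $T(\ind)\subseteq U(\csl)$ and its properness, using the fact from Proposition~\ref{csl} that $U(\csl)=T(\csl)$.

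For the containment, I would argue as follows. Suppose $\rho\in T(\ind)$, so that $\rho=L\pi$ for some indexed language $L\subseteq\Xtt$. Indexed languages are context-sensitive, hence recursive; in particular membership in $L$ is decidable by a linear bounded automaton. I would then build a linear bounded automaton recognizing $L_\rho=\gset{u\#v^\rev}{(u,v)\in\rho}$. Given input $u\#v^\rev$, the machine can reconstruct (on a bounded amount of tape, since $v^\rev$ has the same length as $v$) a candidate witness word over $\Xtt$ whose $\pi$-image is $(u,v)$, and nondeterministically verify membership of a suitable preimage in $L$. The mild subtlety is that $\pi$ is not length-preserving: a single element $(u,v)$ of $\rho$ may have preimages in $\Xtt^*$ of length up to $|u|+|v|$ because of the padding symbols $(\emptyword,\emptyword)$, $(x,\emptyword)$ and $(\emptyword,y)$. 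However, the total length of any \emph{useful} preimage is bounded linearly in $|u|+|v|=|u\#v^\rev|-1$, so the nondeterministic search stays within linear space. Alternatively, and more cleanly, I would simply invoke Proposition~\ref{csl} together with the chain $T(\ind)\subseteq T(\csl)=U(\csl)$: since indexed languages are context-sensitive we have $\ind\subseteq\csl$, whence $T(\ind)\subseteq T(\csl)$, and then $T(\csl)=U(\csl)$ gives the result immediately. This second route is almost certainly the intended one and avoids the padding bookkeeping entirely.

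For properness, the witnesses are already available in the examples. By Example~\ref{functions}, the relation $\rho_h$ associated to $h:n\mapsto n^n$ lies in $U(\csl)\setminus T(\ind)$; this single relation simultaneously shows that the containment $T(\ind)\subseteq U(\csl)$ is proper and, via $U(\csl)=T(\csl)$, that it is not an equality. Thus I would cite Example~\ref{functions} directly to conclude that $U(\csl)\setminus T(\ind)\neq\emptyset$.

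The only genuine point requiring care is the first route's space bound, namely checking that restricting to preimages of length at most $|u|+|v|$ loses no elements of $\rho$; but since every pair $(u,v)$ has a preimage that pads the shorter component and never exceeds length $\max(|u|,|v|)\le|u|+|v|$, this is routine. I expect the main ``obstacle'' to be purely expository: deciding whether to present the self-contained LBA construction or to lean on Proposition~\ref{csl}. Given that Proposition~\ref{csl} is stated precisely to handle $U(\C)=T(\C)$ for context-sensitive $\C$, the cleanest proof is the one-line reduction $T(\ind)\subseteq T(\csl)=U(\csl)$ for containment, followed by citing Example~\ref{functions} for properness.
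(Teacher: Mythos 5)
Your proof is correct, and its skeleton matches the paper's: containment via $\ind\subseteq\csl$, hence $T(\ind)\subseteq T(\csl)$, combined with Proposition~\ref{csl} to get $T(\csl)=U(\csl)$. (Your first, self-contained LBA construction is sound but unnecessary, as you yourself note; the one-line reduction is indeed the intended argument.) The only genuine divergence is in the properness step. The paper invokes Proposition~\ref{basics} with $\C=\csl$ and $\D=\ind$: since $\csl\setminus\ind\neq\emptyset$, that proposition yields an abstract witness in $U(\csl)\setminus T(\ind)$. You instead cite the concrete relation $\rho_h$ (for $h\colon n\mapsto n^n$) from Example~\ref{functions}, which was shown there, via Gilman's shrinking lemma, to lie in $U(\csl)\setminus T(\ind)$. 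Both routes are legitimate and both lean on results already established in the paper; yours trades generality for concreteness. Incidentally, your version sidesteps a slip in the paper's printed proof, which asserts that ``$U(\ind)\setminus T(\ind)$ is non-empty by Proposition~\ref{basics}'' --- that set is actually empty by Proposition~\ref{indexed} (and its non-emptiness is precisely the open Conjecture~\ref{indconj}); the intended claim is clearly that $U(\csl)\setminus T(\ind)$ is non-empty, which is exactly what your citation of Example~\ref{functions} delivers directly.
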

\begin{proof}
We have $T(\ind)\subseteq T(\csl)$ by containment of language classes,
while $U(\ind)\setminus T(\ind)$ is non-empty by Proposition~\ref{basics},
and by Proposition~\ref{csl}, $U(\csl) = T(\csl)$.
\qed
\end{proof}

\section{The one-symbol case}


Let $\mathcal{B}$ consist of all binary relations on sets of words over one-symbol alphabets. Let
\[
  T_1(\C) = T(\C) \cap \mathcal{B}\text{ and }U_1(\C) = U(\C) \cap \mathcal{B}.
\]

\begin{proposition}\label{1ratlin}
  $T_1(\rat) = T_1(\lin)$.
\end{proposition}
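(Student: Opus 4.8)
The plan is to prove the two inclusions separately. The inclusion $T_1(\rat) \subseteq T_1(\lin)$ is immediate, since every rational language is linear indexed; all the work is in the reverse inclusion $T_1(\lin) \subseteq T_1(\rat)$. The key observation is that over a one-symbol alphabet $X = \{x\}$ the map $\pi$ depends only on letter counts. The alphabet $\Xtt$ then has the three meaningful symbols $(x,\emptyword)$, $(\emptyword,x)$, $(x,x)$ (together with the neutral $(\emptyword,\emptyword)$, which $\pi$ ignores), and any $w\in \Xtt^*$ satisfies $w\pi = (x^a, x^b)$ with $a = |w|_{(x,\emptyword)} + |w|_{(x,x)}$ and $b = |w|_{(\emptyword,x)} + |w|_{(x,x)}$. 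Thus $\pi$ factors through the Parikh map: identifying a relation $\rho\in\B$ with the set $S_\rho = \gset{(a,b)\in\N_0^2}{x^a \mathrel\rho x^b}$, and writing $\Psi(L)\subseteq\N_0^3$ for the Parikh image of $L\subseteq\Xtt^*$ recording the occurrences of the three meaningful symbols, we have $S_\rho = \phi(\Psi(L))$ whenever $\rho = L\pi$, where $\phi\colon\N_0^3\to\N_0^2$ is the linear map $(m,n,p)\mapsto(m+p,\,n+p)$.

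Now suppose $\rho\in T_1(\lin)$, say $\rho = L\pi$ for a linear indexed language $L$ over $\Xtt$. By \cite[Theorem~5.1]{duske_linear} the Parikh image $\Psi(L)$ is a semilinear subset of $\N_0^3$. Semilinear sets are closed under linear images: the image of a linear set $\set{v_0 + \sum_i c_i v_i}$ under $\phi$ is the linear set $\set{\phi(v_0) + \sum_i c_i\,\phi(v_i)}$, and finite unions are preserved. Hence $S_\rho = \phi(\Psi(L))$ is a semilinear subset of $\N_0^2$. Conversely, I would realize any semilinear $S\subseteq\N_0^2$ as a two-tape rational relation: since every semilinear set is the Parikh image of some regular language, choose a regular $M\subseteq\set{(x,\emptyword),(\emptyword,x)}^*$ whose Parikh image (counting the two letters) equals $S$; then $w\pi = (x^{|w|_{(x,\emptyword)}},\, x^{|w|_{(\emptyword,x)}})$ for $w\in M$, so $M\pi = \gset{(x^a,x^b)}{(a,b)\in S}$. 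Taking $S = S_\rho$ gives $M\pi = \rho$ with $M$ regular, whence $\rho\in T_1(\rat)$, completing the proof.

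The two facts doing the real work are both standard and already available: the semilinearity of Parikh images of linear indexed languages (cited above), and the realization of every semilinear set as the Parikh image of a regular language. I expect no substantive obstacle beyond making the factorization of $\pi$ through the Parikh map precise; this is exactly what collapses the extra expressive power of linear indexed grammars in the one-symbol case, since there the only data $\pi$ can extract from a word is a pair of letter counts, and such data is always semilinear. The only mild care needed is to note that projecting $\Psi(L)$ onto the three meaningful coordinates (discarding occurrences of $(\emptyword,\emptyword)$) preserves semilinearity, which is again a coordinate-projection instance of closure under linear maps.
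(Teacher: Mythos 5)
Your proof is correct and follows essentially the same route as the paper: both arguments rest on the semilinearity of Parikh images of linear indexed languages (citing the same theorem of Duske et al.) followed by realizing the resulting semilinear set as a rational two-tape relation over the one-letter alphabet. The only cosmetic difference is that the paper normalizes the terminal alphabet to $\set{(a,\emptyword),(\emptyword,a)}$ at the outset, whereas you keep the symbol $(x,x)$ and absorb it via the linear map $\phi\colon(m,n,p)\mapsto(m+p,n+p)$.
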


\begin{proof}
  Let $\rho \in T_1(\lin)$ and let $\Gamma$ be a two-tape context-free grammar for $\rho$. Without loss of generality,
  assume that the set of terminal symbols of $\Gamma$ is
  $A = \set{(a,\emptyword),(\emptyword,a)}$.
  The Parikh image of
  the language over $A$ generated by $\Gamma$ is a semi-linear set $S \subset \N_0 \times \N_0$
 \cite[Theorem~5.1]{duske_linear}.
  Therefore
  $\rho = \gset{(a^\alpha,a^\beta)}{(\alpha,\beta) \in S}$, and so $\rho$ is a rational relation and hence
  $\rho \in T_1(\rat)$.
  \qed
\end{proof}

\begin{proposition}\label{1rat1c}
  $T_1(\rat) = U_1(\OC)$
\end{proposition}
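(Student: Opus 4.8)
The plan is to prove the two inclusions separately. Throughout I identify a relation $\rho \in \mathcal{B}$ over the one-symbol alphabet $X = \set{a}$ with the subset $\gset{(m,n)}{a^m \mathrel\rho a^n}$ of $\N_0 \times \N_0$, and I recall from the proof of Proposition~\ref{1ratlin} that the relations in $T_1(\rat)$ are precisely those for which this subset is a \emph{semilinear} subset of $\N_0 \times \N_0$ (that proof also records that every semilinear such set arises from a rational relation).

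For the inclusion $U_1(\OC) \subseteq T_1(\rat)$, I would begin from a one-counter, hence context-free, language $L_\rho = \gset{a^m \# a^n}{(m,n) \in \rho}$. Applying Parikh's theorem to $L_\rho$ directly is useless, since over the alphabet $\set{a,\#}$ the Parikh image only records the \emph{total} number of $a$'s and cannot separate $m$ from $n$. The key step is therefore to relabel first: apply the rational transduction realised by the finite transducer that copies each $a$ occurring before the $\#$ to $a$, deletes the $\#$, and copies each $a$ occurring after the $\#$ to a fresh letter $b$. This sends $L_\rho$ to $K = \gset{a^m b^n}{(m,n) \in \rho}$, which is context-free because $\CF$ is closed under rational transductions. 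Now the Parikh image of $K$ with respect to $\set{a,b}$ is exactly $\gset{(m,n)}{(m,n) \in \rho} = \rho$, and Parikh's theorem shows this set is semilinear; hence $\rho \in T_1(\rat)$ by Proposition~\ref{1ratlin}.

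For the reverse inclusion $T_1(\rat) \subseteq U_1(\OC)$, I start from $\rho \in T_1(\rat)$, whose associated subset $S \subseteq \N_0 \times \N_0$ is semilinear, i.e.\ a finite union of linear sets. Since one-counter languages are closed under finite union, it suffices to construct a one-counter automaton recognising $L_\rho$ for a single linear set $S = \gset{(x_0,y_0) + \sum_{i=1}^{k} c_i(p_i,q_i)}{c_1,\dots,c_k \in \N_0}$. The automaton reads $a^m \# a^n$ as follows: it consumes $x_0$ leading $a$'s in its finite control; then it repeatedly and nondeterministically selects an index $i$, reads a block of $p_i$ further $a$'s while performing $q_i$ counter increments via $\emptyword$-moves; on the $\#$ it consumes $y_0$ $a$'s in the control and thereafter decrements the counter once per $a$, accepting iff the counter is exactly $0$ at end of input. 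A run that partitions the left block into $c_i$ groups of size $p_i$ leaves $\sum_i c_i q_i$ in the counter and accepts precisely when $m - x_0 = \sum_i c_i p_i$ and $n - y_0 = \sum_i c_i q_i$; so the automaton accepts $a^m \# a^n$ if and only if $(m,n) \in S$.

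The main obstacle is this forward construction: a multi-period linear set couples the two coordinates through $k$ independent multiplicities $c_i$, and a naive deterministic simulation cannot track them with a single counter. The resolution is that \emph{nondeterminism} lets the counter accumulate $\sum_i c_i q_i$ by reading the first block in variable-sized chunks, so that the single end-of-input zero test simultaneously verifies both coordinate equations; keeping all increments in the first phase and all decrements in the second guarantees the counter stays nonnegative throughout, so no zero-testing during the run is needed (and the degenerate cases $p_i = 0$ or $q_i = 0$ are handled by pure $\emptyword$-increments or by reading input without incrementing). The remaining points are routine: closure of $\OC$ under finite union, closure of $\CF$ under rational transductions, and semilinearity of Parikh images of context-free languages.
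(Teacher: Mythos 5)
Your proof is correct, but it is structured differently from the paper's. The paper's own proof establishes only the inclusion $T_1(\rat) \subseteq U_1(\OC)$, and does so by simulating a transducer directly: its one-counter automaton nondeterministically follows transducer transitions while reading the portion of the input before $\#$, incrementing the counter by the length of each transition's second component, then after $\#$ decrements once per symbol and accepts at counter value zero. This is precisely your increment-then-decrement construction, except that you first pass to the semilinear normal form (so your ``transitions'' are the periods $(p_i,q_i)$ of a linear set, and you additionally need closure of $\OC$ under finite union to reassemble the semilinear set), whereas the paper works straight from the transducer and needs no normal form or union step. The opposite inclusion $U_1(\OC) \subseteq T_1(\rat)$ is not argued in the paper's proof at all; it is left implicit in the surrounding results ($U(\OC) \subseteq T(\OC)$ by Proposition~\ref{1c}, $T(\OC) \subseteq T(\lin)$ by containment of language classes, and $T_1(\lin) = T_1(\rat)$ by Proposition~\ref{1ratlin}). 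Here your argument is genuinely different and self-contained: relabelling $a^m\#a^n$ to $a^mb^n$ by a rational transduction and applying Parikh's theorem to the resulting context-free language uses only classical Parikh, rather than routing through the semilinearity theorem for linear indexed languages that Proposition~\ref{1ratlin} invokes; and your observation that the relabelling is needed (the Parikh image over $\set{a,\#}$ would conflate the two coordinates) is exactly the right point of care. In short, your forward direction is the paper's idea filtered through a normal form, and your reverse direction supplies an explicit proof where the paper relies on its earlier propositions; both rest on the same underlying facts, namely commutativity of counter increments and semilinearity.
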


\begin{proof}
  Let $\rho \in T_1(\rat)$ and consider a transducer $T$ that recognizes $\rho$. Construct a one-counter automaton $C$
  that accepts a word $u\#v^\rev$ if and only if $(u,v)$ is accepted by $T$ as follows. The one-counter automaton $C$
  keeps in its state a simulated copy of the state of $T$, beginning with its start state. At some point before it
  reaches $\#$, while its simulated state is $q$, it nondeterministically selects a transition of $T$ starting at
  $q$. Suppose this transition has label $(a^k,a^\ell)$ (for $k,\ell \in \N \cup \set{0}$) and leads to state $p$. Then
  $C$ reads $k$ symbols $a$ from its input, failing if it reads $\#$, and increments its counter by $\ell$. When the
  simulated state is a final state of $T$, the automaton $C$ can read $\#$. After having read $\#$, the automaton $C$
  simply reads $v$ symbol-by-symbol, decrementing its counter by $1$ each time, accepting if the counter is $0$ when the
  end of the input is reached. It is clear that $u\#v^\rev$ is accepted if and only if the numbers of symbols $a$ making
  up $u$ and $v$ are the numbers of symbols $a$ on the two sides of a path leading from a start to a final state of $T$. \qed
\end{proof}

Thus we have established that all classes of relations inside the box in Figure~\ref{hierarchy}
are equal when $|X|=1$.

\begin{corollary}
If $\mathfrak{C}$ is any class of languages intermediate between
$\rat$ and $\lin$ inclusive, then $T_1(\mathfrak{C}) = T_1(\rat)$.
If $\mathfrak{D}$ is any class of languages intermediate between
$\OC$ and $\lin$ inclusive, then $U_1(\mathfrak{D}) = T_1(\rat)$.
\end{corollary}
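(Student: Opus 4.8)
The final statement is the Corollary asserting that every language class between $\rat$ and $\lin$ (inclusive) collapses in the two-tape one-symbol case to $T_1(\rat)$, and every class between $\OC$ and $\lin$ (inclusive) collapses in the unfolded one-symbol case to $T_1(\rat)$.

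The plan is to derive both sentences of the Corollary purely from the two preceding propositions by a squeezing argument, without introducing any new grammar constructions. First I would record the monotonicity already noted in the text just before Proposition~\ref{basics}: if $\C \subseteq \D$ then $T(\C) \subseteq T(\D)$ and $U(\C) \subseteq U(\D)$, and intersecting with $\B$ preserves these inclusions, so $T_1(\C) \subseteq T_1(\D)$ and $U_1(\C) \subseteq U_1(\D)$ likewise. This is the only general fact I need beyond the two propositions.

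For the first sentence, suppose $\C$ satisfies $\rat \subseteq \C \subseteq \lin$. Applying monotonicity to both inclusions gives
\[
  T_1(\rat) \subseteq T_1(\C) \subseteq T_1(\lin).
\]
By Proposition~\ref{1ratlin} the two outer classes coincide, $T_1(\rat) = T_1(\lin)$, so the sandwiched class $T_1(\C)$ is squeezed between two equal classes and hence equals $T_1(\rat)$. For the second sentence, suppose instead $\OC \subseteq \D \subseteq \lin$. I would now combine three facts. Monotonicity gives
\[
  U_1(\OC) \subseteq U_1(\D) \subseteq U_1(\lin);
\]
Proposition~\ref{1rat1c} identifies the left end as $U_1(\OC) = T_1(\rat)$; and for the right end I use Proposition~\ref{indexed}, which gives $U(\lin) \subseteq T(\lin)$ and hence $U_1(\lin) \subseteq T_1(\lin) = T_1(\rat)$ by Proposition~\ref{1ratlin}. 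Thus both ends of the chain equal $T_1(\rat)$, forcing $U_1(\D) = T_1(\rat)$ as well.

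There is no real obstacle here, since everything reduces to monotonicity plus the two propositions; the only point requiring a moment's care is the second sentence, where the upper bound must be routed through the unfolded-to-two-tape inclusion of Proposition~\ref{indexed} rather than through a direct equality $U_1(\lin) = U_1(\OC)$, which is not what was proved. I would therefore make explicit that $U_1(\lin) \subseteq T_1(\lin)$ is the step that closes the chain at the top, and remark that the same squeezing shows the two collapses are really a single phenomenon: all the classes in the grey box of Figure~\ref{hierarchy} coincide with $T_1(\rat)$ when $|X| = 1$, so no class strictly between the stated endpoints can escape the collapse.
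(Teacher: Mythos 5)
Your proof is correct and follows essentially the same route as the paper: the paper also derives the first sentence from Proposition~\ref{1ratlin} and the second from Propositions~\ref{1ratlin}, \ref{1rat1c} and the $\lin$ case of Proposition~\ref{indexed}, with the monotonicity/squeezing step left implicit. You have merely made that squeezing argument explicit, which is a faithful expansion rather than a different approach.
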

\begin{proof}
The first statement follows immediately from Proposition~\ref{1ratlin},
while the second follows from Propositions~\ref{1ratlin}, \ref{1rat1c}
and the $\lin$ case of Proposition~\ref{indexed}.
\qed
\end{proof}

However, this is as far as the equality extends (amongst language classes considered in this paper). On the one hand, the identity function $e: \N\ra \N$ in Example~\ref{functions} shows that $U_1(\rat)$ is properly contained in $T_1(\rat)$. On the other hand, we have the
following:

\begin{proposition}\label{1ratet0l}
$T_1(\rat)$ is properly contained in $U_1(\etol)$.
\end{proposition}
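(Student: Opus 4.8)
The plan is to prove the two halves separately. For the inclusion $T_1(\rat) \subseteq U_1(\etol)$, I would start from the structural description of unary two-tape rational relations obtained in the proof of Proposition~\ref{1ratlin}: every $\rho \in T_1(\rat)$ has the form $\rho = \gset{(a^\alpha, a^\beta)}{(\alpha,\beta) \in S}$ for some semilinear set $S \subseteq \N_0 \times \N_0$. Since over a one-symbol alphabet $v^\rev = v$, the unfolded language is $L_\rho = \gset{a^\alpha \# a^\beta}{(\alpha,\beta) \in S}$, and because $\etol$ is closed under union it suffices to produce an ET0L-system for each of the finitely many linear sets making up $S$.

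So I would fix a linear set $S_0 = \gset{(\alpha_0,\beta_0) + \sum_{i=1}^k c_i(\alpha_i,\beta_i)}{c_1,\dots,c_k \in \N_0}$ with base $(\alpha_0,\beta_0)$ and periods $(\alpha_i,\beta_i)$, take axiom $A\#B$, and for each period $i$ introduce a \emph{growth} table
\[ A \ra a^{\alpha_i} A, \quad B \ra a^{\beta_i}B, \]
together with a single \emph{finishing} table $A \ra a^{\alpha_0}$, $B \ra a^{\beta_0}$ (every table also fixing the terminals $a$ and $\#$). Applying the $i$-th growth table $c_i$ times, in any order, and then the finishing table once yields exactly $a^{\alpha_0 + \sum_i c_i\alpha_i}\#a^{\beta_0 + \sum_i c_i\beta_i}$; and since no terminal word appears before the finishing table is applied (and further tables leave a finished word unchanged), this system generates precisely $\gset{a^\alpha\#a^\beta}{(\alpha,\beta)\in S_0}$. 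Taking the union of these systems over the linear summands of $S$ then gives an ET0L-system for $L_\rho$, whence $\rho \in U_1(\etol)$.

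The conceptual point, and the step I expect to be the only real subtlety, is the synchronization of the two tapes in the presence of several independent period multiplicities $c_i$. This is exactly where the ET0L model pays off: because a table rewrites every symbol simultaneously, one application of the $i$-th growth table advances the left and the right side together, so independent choices of the $c_i$ can never desynchronize the two sides. This is also why the construction does not descend to rational relations, which is consistent with the strictness established next.

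For properness I would invoke the relation $\rho_g = \gset{(x^n,x^{n^2})}{n\in \N_0}$ of Example~\ref{functions}, which lives over a one-symbol alphabet. It was shown there that $\rho_g \in U(\etol)$, so $\rho_g \in U_1(\etol)$; and that no language $L$ with $L\pi = \rho_g$ is even linear indexed, so $\rho_g \notin T(\lin) \supseteq T(\rat)$, giving $\rho_g \notin T_1(\rat)$. (Equivalently, by Proposition~\ref{1ratlin} membership in $T_1(\rat)$ would force $\gset{(n,n^2)}{n \in \N_0}$ to be semilinear, but its second projection $\gset{n^2}{n \in \N_0}$ is not eventually periodic.) Hence $\rho_g \in U_1(\etol) \setminus T_1(\rat)$, and the containment is proper.
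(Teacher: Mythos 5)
Your proof is correct, but it takes a genuinely different route from the paper's in both halves. For the inclusion, the paper simply cites Proposition~\ref{ratcf}: the general conversion of a two-tape rational grammar into a context-free grammar for $L_\rho$ gives $T(\rat)\subseteq U(\CF)$ over every alphabet, hence $T_1(\rat)\subseteq U_1(\CF)\subseteq U_1(\etol)$ in one line. You instead argue intrinsically in the unary case: Parikh semilinearity reduces $\rho$ to a semilinear set $S\subseteq \N_0\times\N_0$, and you build an explicit ET0L-system per linear component; your bookkeeping is sound (identity productions on terminals, a single finishing table, commutativity of the growth tables, and closure of $\etol$ under union make the generated language exactly $\{a^\alpha\# a^\beta : (\alpha,\beta)\in S\}$). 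Your version nicely isolates why the unary case is easy --- reversal is trivial and a table rewrites both sides in lockstep --- but it proves slightly less than the paper's route, which yields the stronger intermediate fact $T_1(\rat)\subseteq U_1(\CF)$ and requires no unary-specific machinery. For properness, the paper constructs a fresh witness $\rho_K=\{(w,w) : w\in K\}$ with $K=\{x^{2^n} : n\in\N\}$, gives a two-table ET0L-system for $L_{\rho_K}$ (axiom $S\#S$, tables $S\ra SS$ and $S\ra x$), and excludes $T_1(\CF)$ because the tape projections of a two-tape context-free relation are context-free while $K$ is not; you instead reuse $\rho_g=\{(x^n,x^{n^2}) : n\in\N_0\}$ from Example~\ref{functions}, whose membership in $U(\etol)$ and non-membership in $T(\lin)\supseteq T(\rat)$ the paper has already established, so your properness half is essentially a citation. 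The two witnesses give equivalent separating power in the end, since by Proposition~\ref{1ratlin} and its corollary $T_1(\rat)=T_1(\CF)=T_1(\lin)$, so excluding any one of these classes excludes them all.
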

\begin{proof}
By Proposition~\ref{ratcf}, $T_1(\rat)\subseteq U_1(\CF)$; hence $T_1(\rat)\subseteq U_1(\etol)$.

Let $K = \gset{x^{2^n}}{ n\in \N}$ and $\rho_K = \gset{(w,w)}{w\in K}$.
Then $L_{\rho_K}$ is generated by
an ET0L-system with axiom $S\#S$ and two tables consisting of the single
productions $S\ra SS$ and $S\ra x$.  But $\rho$ does not lie in $T_1(\CF)$
since the projections onto each tape would then also be context-free and
$K$ is not context-free. Hence $U_1(\etol)\neq T_1(\rat)$.
\qed
\end{proof}

It remains open whether there are any equalities between
$U_1(\etol)$, $T_1(\etol)$, $U_1(\ind)$ and $T_1(\ind)$.

\section{Word problems of monoids}
\label{sec:wordprob}

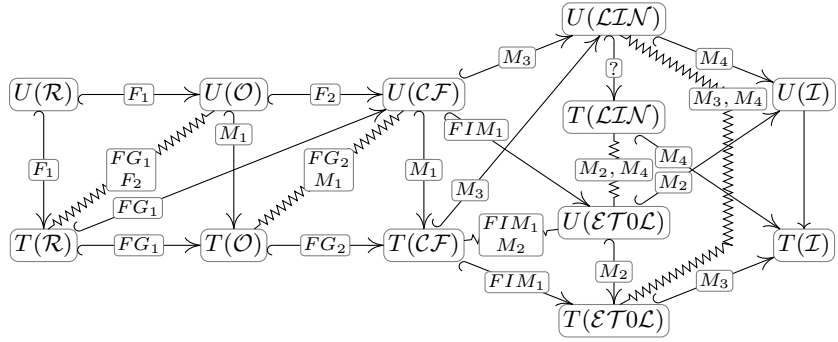
\begin{figure}[t]
  \centering
  \begin{tikzpicture}[x=25mm,y=20mm]
    \useasboundingbox (-.5,.8) -- (6,-1.8);
    \begin{scope}[
      every node/.style={
        rectangle with rounded corners,
        draw=gray,
        font=\small,
        inner sep=.5mm,
      }
      ]
      \node (urat) at (0,0) {$U(\rat)$};
      \node (trat) at (0,-1) {$T(\rat)$};
      \node (uone) at (1,0) {$U(\OC)$};
      \node (tone) at (1,-1) {$T(\OC)$};
      \node (ucf) at (2,0) {$U(\CF)$};
      \node (tcf) at (2,-1) {$T(\CF)$};
      \node (ulin) at (3,.5) {$U(\lin)$};
      \node (tlin) at (3,-.15) {$T(\lin)$};
      \node (uetol) at (3,-.85) {$U(\etol)$};
      \node (tetol) at (3,-1.5) {$T(\etol)$};
      \node (uind) at (4,0) {$U(\ind)$};
      \node (tind) at (4,-1) {$T(\ind)$};
    \end{scope}
    \begin{scope}[
      every node/.style={
        font=\scriptsize,
        fill=white,
        draw=gray,
        rectangle with rounded corners,
        rectangle corner radius=.5mm,
        inner sep=.3mm,
        align=center,
      },
      ]
      \draw[contained] (urat) edge node {$F_1$} (uone);
      \draw[contained] (uone) edge node {$F_2$} (ucf);
      \draw[rcontained] (ucf) edge node {$M_3$} (ulin);
      \draw[contained] (ucf) edge node[pos=.2] {$FIM_1$} (uetol);
      \draw[contained] (ulin) edge node {$M_4$} (uind);
      \draw[rcontained] (uetol) edge node[pos=.25] {$M_2$} (uind);
      \draw[contained] (trat) edge node {$FG_1$} (tone);
      \draw[contained] (tone) edge node {$FG_2$} (tcf);
      \draw[contained] (tcf) edge node {$FIM_1$} (tetol);
      \draw[contained] (tlin) edge node[pos=.25] {$M_4$} (tind);
      \draw[rcontained] (tetol) edge node {$M_3$} (tind);
      \draw[contained] (urat) edge node {$F_1$} (trat);
      \draw[contained] (uone) edge node[pos=.2] {$M_1$} (tone);
      \draw[contained] (ucf) edge node {$M_1$} (tcf);
      \draw[contained] (ulin) edge node {?} (tlin);
      \draw[contained] (uetol) edge node {$M_2$} (tetol);
      \draw[maybecontained] (uind) edge (tind);
      \draw[incomparable] (trat.75) -- node {$FG_1$\\$F_2$} (uone.220);
      \draw[incomparable] (tone) -- node {$FG_2$\\$M_1$} (ucf);
      \draw[incomparable] (tcf) -- node {$FIM_1$\\$M_2$} (uetol);
      \draw[incomparable] (tlin) -- node {$M_2,M_4$} (uetol);
      \draw[incomparable] (ulin.290) -- ($ (uind) + (-.4,0) $) -- node[pos=.03] {$M_3,M_4$} ($ (tind) + (-.4,0) $) -- (tetol.70);
      \draw[rcontained] (trat) edge node[pos=.2] {$FG_1$} (ucf);
      \draw[rcontained] (tcf) edge node[pos=.2] {$M_3$} (ulin);
    \end{scope}
  \end{tikzpicture}
  \caption{Illustration of separation of classes using word problems; for the notation used for monoids, see
    \S~\ref{sec:wordprob}.}
  \label{wordproblemsep}
\end{figure}

For a monoid $M$ with finite generating set $X$,
the \emph{word problem relation} of $M$ with respect to $X$ is
$\iota(M,X) = \gset{(u,v)}{u,v\in X^*, u=_M v}$.
Note that this is by definition an equivalence relation.
We say that $M$ has word problem in $T(\C)$ if
$\iota(M,X)\in T(\C)$, and that $M$ has word problem in
$U(\C)$ if $L_{\iota(M,X)}\in U(\C)$.
Note that $L_{\iota(M,X)}$ was the first language-theoretic version of
monoid word problems to be studied, and is often denoted $\WP(M,X)$.

In this section we exhibit examples of monoid word problems distinguishing the relation classes under consideration.
Besides demonstrating the algebraic relevance of the relation classes, these examples also establish that the hierarchy
shown in Figure~\ref{hierarchy} still holds when we restrict our attention to equivalence relations. The separation of
the various classes by word problems is summarized, using the notation of this section, in
Figure~\ref{wordproblemsep}. For reasons of space, proofs of more technical results are omitted from this section and given in an appendix.

Denote the free monoid, the free inverse monoid, and the free group of rank $n$ by, respectively, $FM_n$, $FIM_n$, and
$FG_n$.

\begin{proposition}\label{wprat}
  The word problem of the free monoid of rank~1 is in $T_1(\rat)$ and in $U_1(\OC)$ but not in $U_1(\rat)$.
\end{proposition}

\begin{proof}
Let $F_1$ be generated by $x$ and let $\rho = \iota(F_1,\set{x})$.  We have
$\rho = \gset{ (x^n, x^n)}{n\in \N_0}$, which is the relation of the 
identity function $e(n) = n$.  We showed in Example~\ref{functions} that this
relation is in $T(\rat)\setminus U(\rat)$.
Moreover, $L_\rho = \gset{x^n\# x^n}{n\in \N_0}$ is one-counter.
\qed
\end{proof}

\begin{proposition}\label{wpratcf}
The free group of rank~1 has word problem in $U(\OC)$
but not in $T(\rat)$. The free monoid of rank greater than~1 has word problem in $T(\rat)$ but not in $U(\OC)$.
\end{proposition}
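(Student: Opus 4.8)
The plan is to make the two word problems explicit and then treat each half separately. The free group $FG_1$ is $\Z$, generated as a monoid by $X=\set{x,x^{-1}}$; writing $\phi\colon X^*\to\Z$ for the exponent-sum homomorphism (so $\phi(x)=1$ and $\phi(x^{-1})=-1$), its word problem is $\rho=\gset{(u,v)}{\phi(u)=\phi(v)}$. The free monoid $FM_n$ with $n>1$ satisfies $u=_{FM_n}v$ if and only if $u=v$, so its word problem is the equality relation $\delta=\gset{(w,w)}{w\in X^*}$ on an alphabet with $\abs{X}=n\geq 2$.

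For $FG_1$, since $\phi(v^\rev)=\phi(v)$ we have $L_\rho=\gset{u\#w}{u,w\in X^*,\ \phi(u)=\phi(w)}$. A one-counter automaton recognizes this: it reads $u$ while keeping the running value of $\phi$ of the consumed prefix in its counter (storing the sign in the finite control, since the counter is non-negative), then after reading $\#$ it continues across $w$ but treats $x$ as a decrement and $x^{-1}$ as an increment, accepting exactly when the tracked integer is $0$ at the end. Hence $L_\rho\in\OC$ and $\rho\in U(\OC)$. To see $\rho\notin T(\rat)$, suppose otherwise. The set $\set{\emptyword}\times X^*$ is recognizable and rational relations are closed under intersection with recognizable sets, so $\rho\cap(\set{\emptyword}\times X^*)$ is rational; its image under the projection onto the second coordinate (a monoid homomorphism, under which rational sets are preserved) is therefore a regular language, namely $\gset{w}{\phi(w)=0}$. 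But this is the set of words over $X$ with equally many $x$ and $x^{-1}$, which is not regular by the pumping lemma --- a contradiction.

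For $FM_n$, the relation $\delta=\left(\bigcup_{x\in X}(x,x)\right)^*$ is recognized by a one-state transducer that reads and outputs each symbol in lockstep, so $\delta\in T(\rat)$. For the negative part, $L_\delta=\gset{w\#w^\rev}{w\in X^*}$; intersecting with the regular language $Y^*\#Y^*$ for a two-letter sub-alphabet $Y=\set{a,b}\subseteq X$ and using closure of $\OC$ under intersection with regular languages, it suffices to prove $L=\gset{w\#w^\rev}{w\in Y^*}\notin\OC$. Suppose a one-counter automaton $M$ with state set $Q$ recognized $L$; normalize $M$ so that each transition changes the counter by at most $1$ and it has no counter-increasing $\emptyword$-loops, so that after reading any prefix of length $m$ the counter is at most $Km$ for a constant $K$. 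For each of the $2^n$ words $w\in Y^n$ fix an accepting run of $M$ on $w\#w^\rev$ and let $C_w=(q_w,c_w)$ be the configuration reached just after the first $n+1$ input symbols $w\#$ are consumed. If $C_w=C_{w'}$ for some $w\neq w'$, then splicing the initial segment of the run for $w$ onto the final segment of the run for $w'$ yields an accepting run on $w\#(w')^\rev$, forcing $w=w'$, a contradiction; hence $w\mapsto C_w$ is injective and $M$ has at least $2^n$ such configurations. But there are at most $\abs{Q}\bigl(K(n+1)+1\bigr)$ of them, which is impossible for large $n$. Thus $L\notin\OC$ and $\delta\notin U(\OC)$.

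The two positive constructions and the projection argument are routine. The crux --- and the main obstacle --- is the one-counter lower bound $L_\delta\notin\OC$: the fooling-set counting argument is the heart of it, and the only delicate point is the normalization keeping the counter linearly bounded in the amount of input read (controlling $\emptyword$-moves), after which the exponential-versus-linear configuration count finishes the proof.
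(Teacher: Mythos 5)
Your treatment of the free group of rank~1 is correct: the one-counter automaton for $L_\rho$ is essentially the paper's, and your argument that $\rho\notin T(\rat)$ --- intersect with the recognizable set $\set{\emptyword}\times X^*$, project onto the second coordinate, and pump --- is a clean, self-contained alternative to the paper, which instead cites Pfeiffer's theorem that only finite groups have rational word problem. The positive half for the free monoid (equality is a rational relation) is also fine, and the splicing/fooling-set skeleton of your lower bound is the right idea: if two distinct words $w,w'$ of length $n$ led to the same configuration just after $w\#$ is read, splicing the runs would force acceptance of $w\#(w')^\rev\notin L$, so $w\mapsto C_w$ is injective and exponentially many midpoint configurations are needed.

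The genuine gap is at exactly the point you flag as ``delicate'' and then assert rather than prove: the normalization to an equivalent one-counter automaton with no counter-increasing $\emptyword$-loops, which is what gives the bound $c_w\leq K(n+1)$ and hence the polynomial count of midpoint configurations. This is not a standard normal form that can be invoked by fiat; it is the entire content of the lower bound. A nondeterministic one-counter automaton can use an $\emptyword$-increasing loop to guess an arbitrarily large counter value before it ever reaches $\#$, and nothing in your proof rules this out; without a bound on $c_w$ the set of possible configurations is infinite and the pigeonhole collapses. What is true, and needs an actual argument, is that every accepted word has an accepting run whose counter stays $O(\abs{Q}^2\cdot n)$: take a shortest accepting run; for each counter level $j$ pair the last time the counter equals $j$ before its peak with the first time it equals $j$ after; if the peak is too high, more than $\abs{Q}^2$ of these levels have climb and descent segments reading no input, so two levels carry the same pair of states and the matched climb/descent can be excised, contradicting minimality. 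Alternatively, do what the paper does and cite a packaged result: the paper invokes \cite[Proposition~4.1]{holt_onecounter} to conclude that $\gset{w\#w^\rev}{w\in X^*}$ is not one-counter because $X^*$ has exponential growth, and the proof of that proposition is essentially your counting argument with the missing counter bound supplied. With either repair your proof is complete; as written, the crux is assumed.
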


\begin{proof}
Let $F$ be the free group of rank~$1$, generated
as a monoid by $X = \{x,x^{-1}\}$.
Then $L_{\iota(F,X)} = \gset{u\#v }{ u,v\in X^*,
|u|_x - |u|_{x^{-1}} = |v|_x - |v|_{x^{-1}}}$.
This language is easily recognised by a one-counter
automaton that uses the stack to calculate
$|u|_x - |u|_{x^{-1}}$ and then checks this against
$|v|_x - |v|_{x^{-1}}$, accepting by empty stack.
We require four states to record whether we are
currently reading $u$ or $v$ and whether we currently
have an excess of $x$'s or of $x^{-1}$'s.
The only groups $G$ with $\iota(G)$ rational are
finite \cite[Theorem~8.7.9]{pfeiffer_phd}, so $\iota(F,X)$ is not rational.

Let $|X|\geq 2$ and let $\rho$ be the equality relation on $X^*$, which is the two-tape word problem of $X^*$.
Then $\rho$ is rational, but $L_\rho = \gset{w\# w^\rev}{ w\in X^*}$
is not one-counter by \cite[Proposition~4.1]{holt_onecounter} since $X^*$ has exponential growth.
\qed
\end{proof}

\begin{proposition}\label{wp1ccf1}
The word problem of the free group of rank~2 is in $U(\CF)$ but not in $T(\OC)$.
\end{proposition}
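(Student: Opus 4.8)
The plan is to treat the two halves separately. Membership in $U(\CF)$ is a matter of building an appropriate stack machine (or invoking closure of $\CF$ under rational transductions), whereas non-membership in $T(\OC)$ will come from extracting, out of any putative two-tape one-counter description, the classical one-variable word problem language of the free group and invoking the fact that this language is not one-counter.

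Fix the symmetric monoid generating set $X = \set{a,b,a^{-1},b^{-1}}$ for $FG_2$ and write $\iota = \iota(FG_2,X)$. For the first half I would exploit that $u =_{FG_2} v$ iff $uv^{-1}$ freely reduces to the empty word, together with the observation that inverting each letter of $v^{\rev}$ reproduces, letter by letter, the formal inverse word $v^{-1}$. Concretely I would describe a pushdown automaton for $L_\iota = \gset{u\#v^{\rev}}{u =_{FG_2} v}$ that reads $u$ while maintaining its freely reduced form on the stack (cancelling a symbol against the stack top when they are mutually inverse, otherwise pushing), and then, after reading $\#$, reads $v^{\rev}$ treating each symbol $x$ as the appended inverse generator $x^{-1}$ and cancelling or pushing accordingly, accepting exactly when the stack returns to empty. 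Since $uv^{-1} =_{FG_2} 1$ iff $u =_{FG_2} v$, this automaton recognizes $L_\iota$, so $\iota \in U(\CF)$. Equivalently, one may note that the one-variable word problem language $W = \gset{w\in X^*}{w =_{FG_2} 1}$ is context-free and that $L_\iota$ is its image under the rational transduction that guesses a factorization $w = us$, outputs $u$, then $\#$, then $s$ with every letter inverted; closure of $\CF$ under rational transductions then gives the claim.

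For the second half the key idea is a section argument. Suppose $\iota \in T(\OC)$, say $\iota = L\pi$ with $L$ a one-counter language over $\Xtt$. Intersecting with the recognizable relation $\set{\emptyword}\times X^*$ corresponds, on the language side, to intersecting $L$ with the regular set of words all of whose letters have empty first component; this keeps $L$ one-counter, and applying the (possibly erasing) homomorphism that projects onto the second coordinate yields that the section $\gset{v\in X^*}{(\emptyword,v)\in\iota}$ is one-counter. But this section is precisely $W = \gset{v\in X^*}{v =_{FG_2} 1}$, the two-sided Dyck language on two pairs of symbols, which is not one-counter: a group has one-counter word problem only if it is virtually cyclic, and $FG_2$ is not (indeed it has exponential growth, as used for the unfolded language in Proposition~\ref{wpratcf} via \cite{holt_onecounter}). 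This contradiction shows $\iota \notin T(\OC)$.

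The main obstacle is the second half, and within it the load-bearing step is the closure claim that sections of $T(\OC)$ relations are one-counter languages. I would state explicitly that the one-counter languages form a full trio -- closed under intersection with regular sets, homomorphism, and inverse homomorphism -- from which the section property follows directly; the erasing homomorphism onto the second coordinate is covered by this closure, so no separate workaround for the neutral letters $(\emptyword,\emptyword)$ is needed. The remaining ingredient, that the one-variable word problem of $FG_2$ is not one-counter, is standard but must be cited for the one-variable language here rather than for the unfolded language appearing in Proposition~\ref{wpratcf}.
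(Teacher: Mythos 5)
Your proof is correct and takes essentially the same approach as the paper: a pushdown automaton performing free reduction (via $u=_{FG_2}v$ iff $uv^{-1}=_{FG_2}1$) for membership in $U(\CF)$, and, for non-membership in $T(\OC)$, extracting the one-variable word problem of $FG_2$ from a hypothetical two-tape one-counter description and invoking the one-counter-word-problem characterisation from \cite{holt_onecounter}. The only differences are cosmetic: you kill the first coordinate where the paper kills the second, and you justify the extraction via full-trio closure properties where the paper argues directly on the automaton.
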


\begin{proof}
Let $F_2$ be the free group on $X=\set{x,y}$ and let $X^\pm = \{x,y,x^{-1},y^{-1}\}$.
The word problem of $F_2$ is well known to be context-free:
$L_{\iota(F_2,X^\pm)}$ is accepted by a pushdown automaton that
pushes each symbol read onto the stack, unless it is the inverse of the current
top-of-stack symbol (in which case the stack is popped) or $\#$ (in which case the stack
is unchanged).
Let $W = \iota(F_2,X)$ and suppose $W$ is one-counter.
Then the set $W_1$ of all pairs $(w,\epsilon)$ in $W$ is also one-counter
(we can modify an automaton accepting $W$ to move to a failure state
if any symbols are read from the second tape).
But $W_1$ is equivalent to the group word problem of $F_2$
(the set of all words equal to the identity in $F_2$),
and a group has one-counter word problem if and only if it is cyclic \cite{holt_onecounter}.
\qed
\end{proof}

\begin{proposition}{\rm \cite[Theorem~1 and Corollary~1]{brough_inverse}}
The free inverse monoid of rank $1$
has word problem in $U(\etol)$ but not in $T(\CF)$.
\end{proposition}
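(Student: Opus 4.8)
The plan is to reduce everything to Munn's solution of the word problem of $FIM_1$. Reading $x$ as $+1$ and $x^{-1}$ as $-1$, every word $w\in\{x,x^{-1}\}^*$ traces a walk on $\Z$ starting at $0$; write $\mu(w),\nu(w),\phi(w)$ for the minimum, the maximum and the final height of this walk. In the rank-$1$ case a Munn tree is just an interval of $\Z$ containing $0$ with a distinguished endpoint, so Munn's theorem gives $u=_{FIM_1}v$ iff $\mu(u)=\mu(v)$, $\nu(u)=\nu(v)$ and $\phi(u)=\phi(v)$. Thus $\iota=\gset{(u,v)}{(\mu,\nu,\phi)(u)=(\mu,\nu,\phi)(v)}$, and both halves of the proposition become statements about the simultaneous matching of these three statistics.

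For the membership in $U(\etol)$ I would construct an ET0L system generating $L_\iota=\gset{u\#v^\rev}{(\mu,\nu,\phi)(u)=(\mu,\nu,\phi)(v)}$ directly. For a \emph{fixed} triple $(m,M,p)$ the set of words with that triple is regular (a finite automaton records the current height in $[m,M]$ together with flags recording whether $m$ and $M$ have yet been reached), so the only real issue is to realise the union over all triples uniformly. The idea is to use one group of tables that synchronously grows, on both sides of $\#$, a skeleton committing to the common interval $[m,M]$ and endpoint $p$, and a second group that expands each side into an arbitrary realising walk by inserting height-bounded balanced detours (which change none of $\mu,\nu,\phi$); the parallel rewriting of ET0L is exactly what keeps the two sides' statistics locked together. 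That ET0L supports this kind of coupled counting is already visible in Example~\ref{functions}, where $\gset{x^n\#x^{n^2}}{n\in\N}$ is produced. The laborious part is checking that the system generates \emph{exactly} $L_\iota$ — in particular that every walk realising a given triple is obtained, and no word with mismatched statistics creeps in.

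For $\iota\notin T(\CF)$ I would first record why the usual projection technique does not suffice. Restricting $\iota$ to a recognisable subrelation $R_1\times R_2$ preserves membership in $T(\CF)$, and projecting onto one tape yields $\gset{v\in R_2}{(\mu,\nu,\phi)(v)\in(\mu,\nu,\phi)(R_1)}$; but for a \emph{regular} $R_1$ the set of achievable triples does not couple $\mu$ and $\nu$, and this projected language turns out to be context-free. Hence, exactly as in Proposition~\ref{cyclic}, a genuinely two-tape argument is required. Assuming $L\pi=\iota$ for a context-free $L$ over $\Xtt$, I would pass to a recognisable subrelation chosen so that the surviving pairs must display three large and genuinely independent matched statistics $\mu,\nu,\phi$, apply the context-free pumping lemma to a long witness $w\in L$, and track how the (at most two) pumped factors — each contributing a power of a single letter to each tape — change $\mu,\nu,\phi$ on the two tapes. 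One then argues by cases that no non-trivial pumping can preserve all three statistics on both tapes simultaneously, contradicting $w'\pi\in\iota$.

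The main obstacle is this last case analysis, together with the design of the witness family. The subtlety that the $FIM_1$ setting makes sharp is the need to rule out \emph{correlated} pumps: for a careless restriction (for instance both tapes in $x^*(x^{-1})^*x^*$, with identical pairs) one can synchronously enlarge matching blocks on the two tapes and remain inside $\iota$, so such a restriction is useless. The witness must therefore force distinct words on the two tapes whose agreement on the minimum, on the maximum and on the final height constitute three obstructions that a single stack cannot discharge at once — the same phenomenon, in a different guise, that drives the proof of Proposition~\ref{cyclic}.
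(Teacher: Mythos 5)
Your first step is right, and it matches the cited source: for rank $1$, Munn's theorem says $u =_{FIM_1} v$ if and only if the walks of $u$ and $v$ on $\Z$ have the same minimum, maximum and final height, so both halves of the proposition are indeed statements about matching the triple $(\mu,\nu,\phi)$. Note, however, that the paper itself offers no proof of this proposition: it is imported wholesale from Theorem~1 and Corollary~1 of the first author's earlier paper on free inverse monoid word problems, where each half is a substantial result in its own right. Judged as a proof, your proposal defers exactly the two steps that constitute those results, and one of them is deferred in a way that hides a structural problem rather than a mere verification.

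The gaps are concrete. (a) For membership in $U(\etol)$, the two-phase design --- first grow a skeleton committing both sides of $\#$ to $(m,M,p)$, then insert ``height-bounded balanced detours'' --- does not work as stated: a detour inserted at a skeleton point of height $h$ must stay inside $[m,M]$, so it needs the two budgets $h-m$ and $M-h$, and these take unboundedly many distinct values over the insertion points; once the skeleton is fixed, no finite nonterminal alphabet can tag each insertion point with its budget. The growth of the strip and the expansion of the excursions must be interleaved so that the global synchronisation of table applications itself encodes the budgets (for instance by keeping all active symbols at the current extreme levels), and designing such a system and proving it generates exactly $L_\iota$ is precisely the content of the cited Theorem~1; it is absent here. (b) For $\iota\notin T(\CF)$, your negative observations are sound (projection after a recognisable restriction cannot suffice, and the symmetric restriction is useless because matching triples inside $x^*(x^{-1})^*x^*\times x^*(x^{-1})^*x^*$ forces $u=v$), but the witness family and the pumping case analysis are the entire proof and are missing. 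They can be supplied: intersect $\iota$ with $x^*(x^{-1})^*x^*\times(x^{-1})^*x^*(x^{-1})^*$; generically the surviving pairs are $\bigl(x^a(x^{-1})^{a+b}x^{b+c},\,(x^{-1})^b x^{a+b}(x^{-1})^{a-c}\bigr)$ with $0\le c\le a$ and $b\ge 0$, whose six block lengths satisfy $B_2=C_2=B_1+C_1=B_3+C_3$, and an argument in the style of the proof of Proposition~\ref{cyclic} (two pumped factors, each meeting at most one block per tape, with their tape projections ordered consistently) shows that these equations cannot all be preserved under pumping. Until (a) is redesigned and (b) is actually carried out, the proposal does not establish the proposition.
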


\begin{proposition}\label{wp1ccf2}
Let $X = \set{a,b,\ell,r}$ and let $M_1$ be the monoid with presentation
\[ \langle X \mid \ell a^n b^n r =  \ell b^n a^n r \; (n\in \N)\rangle. \]
Then the word problem of $M_1$ is in $T(\OC)$ but not in $U(\CF)$.
\end{proposition}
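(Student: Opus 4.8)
The plan is to first pin down exactly when two words are equal in $M_1$, and then treat the two halves separately: a direct construction of a two-tape one-counter automaton for the positive part, and an intersection-and-projection argument for the negative part.

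First I would analyse the congruence $=_{M_1}$. Each defining relation replaces a factor $\ell a^n b^n r$ by $\ell b^n a^n r$ (or conversely), leaving the surrounding $\ell$ and $r$ in place and keeping the affected maximal $\{a,b\}$-block a single block. I would make this precise by decomposing any word over $X$ into its \emph{template}: the sequence of occurrences of $\ell$ and $r$ together with the intervening maximal blocks over $\{a,b\}$. The key lemma is that a single rewrite preserves the template and merely toggles one block between $a^n b^n$ and $b^n a^n$, and that such a toggle is possible exactly for a block equal to $a^n b^n$ or $b^n a^n$ and immediately flanked by $\ell$ on the left and $r$ on the right. Consequently $u =_{M_1} v$ iff $u$ and $v$ have the same template and, block by block, corresponding maximal $\{a,b\}$-blocks are either identical or form a valid swap $\{a^n b^n, b^n a^n\}$ at a position flanked by $\ell\ldots r$. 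In particular, equal words have equal length and the same positions of $\ell$ and $r$.

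For $\iota(M_1,X)\in T(\OC)$: since equal words have equal length, I would build a nondeterministic two-tape one-counter automaton that reads the two tapes synchronously, one matched pair of symbols at a time. It rejects unless $\ell$ and $r$ occur in the same positions on both tapes (a finite-state check that also delimits blocks). On entering a maximal $\{a,b\}$-block it nondeterministically guesses the block's mode: either \emph{identical}, in which case it verifies $u_i=v_i$ throughout the block using no counter; or one of two \emph{swap} modes, permitted only when the block is flanked by $\ell$ on the left, in which it verifies that the synchronous reading has the shape $(a,b)^n(b,a)^n$ (resp.\ $(b,a)^n(a,b)^n$) by incrementing the counter on the first half and decrementing on the second, requiring the counter to return to zero exactly as the block ends and that the following symbol be $r$. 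Because the counter is balanced within each block, it returns to zero between blocks, so a single counter suffices. By the characterisation, some run accepts $(u,v)$ iff $u=_{M_1}v$.

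For $\iota(M_1,X)\notin U(\CF)$: suppose $L_{\iota(M_1,X)}=\gset{u\#v^\rev}{u=_{M_1}v}$ were context-free. Intersecting with the regular language $\ell a^+ b^+ r\,\#\,r a^+ b^+ \ell$ and using the characterisation of $=_{M_1}$, the identical-block possibility is excluded (each block then contains both an $a$ and a $b$), leaving only genuine swaps; thus the intersection equals $\gset{\ell a^n b^n r\,\#\,r a^n b^n \ell}{n\ge 1}$. Applying the homomorphism deleting $\ell$, $r$, $\#$ and fixing $a,b$ yields $\gset{a^n b^n a^n b^n}{n\ge 1}$. As $\CF$ is closed under intersection with regular sets and under homomorphism, this language would be context-free; but it is not, by the pumping lemma, since any pumped factor $vwx$ has length at most the pumping constant and so meets at most two of the four blocks, destroying the equality of the four exponents -- a contradiction. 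The main obstacle is the first step, establishing the clean template/block characterisation of $=_{M_1}$, and in particular verifying that rewrites never merge, split, or create blocks and that toggles on distinct blocks are independent; once that is in hand, both the automaton and the non-context-freeness argument are routine.
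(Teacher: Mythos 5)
Your proposal is correct and follows essentially the same route as the paper: the same synchronous nondeterministic one-counter automaton with guessed ``identical'' and ``swap'' block modes for membership in $T(\OC)$, and the same intersection-with-a-regular-language-plus-homomorphism argument yielding $\gset{a^nb^na^nb^n}{n\geq 1}$ for non-membership in $U(\CF)$. Your two refinements are in fact improvements in precision: the explicit template/block characterisation of $=_{M_1}$ justifies what the paper merely asserts about its automaton, and intersecting with $\ell a^+ b^+ r\,\#\,r a^+ b^+ \ell$ rather than the paper's $\ell a^* b^* r\,\#\,r a^* b^* \ell$ cleanly excludes the degenerate identical pairs $\ell a^m r\,\#\,r a^m \ell$ and $\ell b^k r\,\#\,r b^k \ell$ that the paper's stated intersection actually still contains.
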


\begin{proof}
Let $\rho = \iota(M_1,X)$.  Define a one-counter automaton $\A$ with
initial and final state $q_0$ and further states
$p_1,q_1,p_2,q_2$ as follows:
In state $q_0$, $\A$ reads $(x,x)$ for $x\in X$, or on input $(\ell,\ell)$,
$\A$ may move to state $p_1$ or $q_1$.  In state $p_1$, $\A$ reads $(a,b)$ and increments the counter, or moves to state $q_2$ on input $(b,a)$, decrementing the counter.
In state $p_2$, $\A$ reads $(b,a)$ and decrements the counter, or moves to
state $q_0$ on input $(r,r)$ if and only if the counter is at $0$.
States $q_1$, $q_2$ act the same as $p_1,p_2$ respectively, except with the roles
of $a$ and $b$ swapped.
Then $\A$ accepts exactly all pairs $(u,v)$ such that $u$ and $v$ are
equal in all positions, except that subwords of the form $\ell a^n b^n r$ and
$\ell b^n a^n r$ in corresponding positions may be interchanged.  That is,
$\A$ accepts $\rho$, so $\rho\in T(\OC)$.

Let $\phi$ be the homomorphism on $X\cup \set{\#}$ that maps $\ell,r$ and
$\#$ to $\emptyword$.
Then
\[ L_1 = \phi( L_\rho\cap \ell a^* b^* r \# r a^* b^* \ell) = \gset{a^n b^n a^n b^n}{n\in \N_0}\]
is not context-free, hence $L_\rho$ is also not context-free, so $\rho\notin U(\CF)$.
\qed
\end{proof}

Our remaining examples are based on two general construction techniques
similar to that used in the previous proposition.
For $\rho\subseteq X^*\times X^*$, define a monoid
$M[\rho] = \langle X, \ell, r \mid \ell u r = \ell v r \; (u \mathrel\rho v)\rangle$.
For $L\subseteq X^*$, define a monoid
$M(L) = \langle X, \ell, r \mid \ell w r = \ell r \; (w\in L) \rangle$.

\begin{proposition}\label{wpconstructs}
Let $\C$ and $\D$ be classes of languages.
\begin{enumerate}
\item If $\rho\in T(\C)$ and $\C$ is closed under union, concatenation and Kleene star
and contains $\rat$,
then $M[\rho]$ has word problem in $T(\C)$.
\item If $\C\in \set{\CF,\etol,\edtol,\lin,\ind}$ and
$L\in \C\setminus \D$ and furthermore $\D$ is closed
under transductions,
then $M(L)$ has word problem in $U(\C)\setminus T(\D)$.
\end{enumerate}
\end{proposition}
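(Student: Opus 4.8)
Write $Y = X \cup \set{\ell,r}$. For both parts I would start from a structural description of equality in the monoid. Call a factor $\ell u r$ with $u \in X^*$ a \emph{block}; these are precisely the innermost $\ell\cdots r$ patterns. Since every defining relation rewrites only the content of a block and neither creates nor destroys an $\ell$ or an $r$, the arrangement of $\ell$'s, $r$'s and non-block letters of a word -- its \emph{skeleton} -- is a congruence invariant, and distinct blocks never interact. A routine induction on rewriting sequences then shows that $w_1 =_M w_2$ if and only if $w_1$ and $w_2$ have the same skeleton and, block for block, their contents are related by the equivalence relation on $X^*$ generated by the defining relations. For $M[\rho]$ this per-block relation is the equivalence closure $\hat\rho$ of $\rho$; for $M(L)$ it is the relation $\eta$ with $u \mathrel\eta u'$ if and only if $u = u'$ or $u,u' \in L \cup \set{\emptyword}$.

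For part~(1) the plan is to assemble a two-tape relation for $\iota(M[\rho],Y)$ from these ingredients. Read synchronously, the markers and non-block letters must agree exactly, a diagonal relation lying in $T(\rat) \subseteq T(\C)$; on each matched pair of blocks one runs a $T(\C)$ machine for the per-block relation; and a whole word is an alternating concatenation of arbitrarily many matching regions and blocks. As $\pi$ is a monoid homomorphism, closure of $\C$ under union, concatenation and Kleene star transfers to the corresponding componentwise operations on $T(\C)$, so the pieces stitch into a single member of $T(\C)$. The step I expect to be the main obstacle is that the per-block relation is not $\rho$ but its equivalence closure $\hat\rho$; keeping $\hat\rho$ inside $T(\C)$ is the crux and is exactly where the closure hypotheses must be used. (For the relations supplied by the examples the defining pairs do not chain, so $\hat\rho = \rho \cup \rho^{-1} \cup \id$ and closure under union already suffices.)

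For part~(2) I would treat the two claims separately. For membership in $U(\C)$, build a grammar for $L_{\iota(M(L),Y)} = \gset{w_1 \# w_2^\rev}{w_1 =_M w_2}$: reading $w_1$ forwards and $w_2$ backwards turns the agreement of skeletons into a nested copy-and-reverse match of $w_1$ against $w_2^\rev$. Even for the most restrictive class $\edtol$ this matching is available, since the marked reverse-copy language $\gset{s \# s^\rev}{s \in Y^*}$ is EDT0L (generate the two sides from distinct non-terminals under parallel tables, one appending and the other prepending). At each matched block the derivation branches, either forcing the two contents to coincide or generating contents independently from $L \cup \set{\emptyword}$ on the two sides from the $\C$-grammar for $L$. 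The delicate point here -- the step I expect to be hardest -- is carrying out this branching uniformly for all five classes, most critically for $\edtol$, whose parallel rewriting must be arranged so that arbitrarily many blocks may receive $L$-content independently; this is where the closure properties specific to each class are needed.

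For non-membership in $T(\D)$, suppose $\iota(M(L),Y) \in T(\D)$, say $K\pi = \iota(M(L),Y)$ with $K \in \D$. Intersecting $K$ with the regular set of pair-words that project into $(\ell X^* r) \times \set{\ell r}$ and then applying the transduction that extracts the $X$-content of the first tape yields, by the structural lemma, the language $\gset{u}{\ell u r =_M \ell r} = L \cup \set{\emptyword}$. Since intersection with a regular language and this extraction are transductions and $\D$ is closed under transductions, $L$ itself would lie in $\D$, contradicting $L \in \C \setminus \D$; hence $\iota(M(L),Y) \notin T(\D)$. This direction is short, and is precisely where the hypothesis that $\D$ is closed under transductions is used.
\qed
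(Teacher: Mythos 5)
Your proposal follows the same route as the paper's own proof: part~(1) by stitching diagonal segments and block segments together using closure of $\C$ under union, concatenation and Kleene star; part~(2) membership by a centre-out grammar around $\#$ that matches the two copies of the skeleton and branches at blocks; part~(2) non-membership by restricting the second tape to $\ell r$ and transducing down to $L$. So there is no divergence of method. What is worth recording is that the two places where you hesitate are exactly the places where the paper's proof is defective, and your instincts are correct on both.

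On part~(1), your ``crux'' is a genuine flaw in the statement, not a weakness of your write-up. The paper asserts that the word problem equals $K'\pi$ for $K' = Y_2^* \cup Y_2^*(\ell,\ell)K(r,r)^*Y_2^*$ (where $Y_2=\gset{(y,y)}{y\in Y}$): this rewrites at most one block, exactly once, and only in the forward direction of $\rho$, so it is not even symmetric, let alone the congruence. Moreover no closure hypothesis can supply the missing equivalence closure: take $\rho$ to be the single-step rewriting relation $\gset{(xuy,xvy)}{x,y\in X^*,\,(u,v)\in R}$ of a finite semi-Thue system $R$ with undecidable word problem. This $\rho$ is rational, yet $\ell u r =_{M[\rho]} \ell v r$ holds exactly when $u$ and $v$ are Thue-equivalent, so $\iota(M[\rho],Y)$ is undecidable and cannot lie in $T(\rat)$ (rational relations have decidable membership), even though $\rat$ satisfies every stated hypothesis. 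Part~(1) is thus true only under the extra assumption in your parenthetical -- no chaining, so that $\hat\rho = \id\cup\rho\cup\rho^{-1}$, with $\rho^{-1}\in T(\C)$ coming for free as the image of $K$ under the letter swap $(x,y)\mapsto(y,x)$ -- and that assumption holds in every application the paper makes ($M_2$, $M_5$), since there the domains and ranges lie over disjoint alphabets. You should promote the parenthetical to an explicit hypothesis rather than search for a derivation from the closure properties; none exists.

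On part~(2), your per-block branching (contents equal, or both contents drawn independently from $L\cup\set{\emptyword}$) is correct and strictly more complete than the paper's grammar, which has only $I\ra aIa \mid \ell S r I r \ell \mid \ell r I r S'\ell \mid \#$ and therefore fails to generate pairs such as $(\ell w r, \ell w' r)$ with $w\neq w'$ both in $L$, although these are equal in $M(L)$ (both sides equal $\ell r$); the missing production is $I\ra \ell S r I r S'\ell$. Your caution about $\edtol$ is also warranted and is passed over in silence by the paper: with deterministic tables, two simultaneously live copies of the $L$-grammar's start symbol are rewritten in lockstep, so completeness must be argued by staggering (spawn one block, derive its two sides to termination using product tables for the $S$- and $S'$-alphabets, and only then reactivate $I$); soundness is unaffected, since even lockstep derivations place contents of $L$ in every block. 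Your non-membership argument coincides with the paper's and is fine.
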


\begin{proof}
First, let $\rho\subseteq X^*\times X^*$ be in $T(\C)$.
Let $K\subset (\Xtt)^*$ such that $\rho = K\pi$,
$Y = X\cup\set{\ell,r}$, $Y_2 = \gset{(y,y)}{y\in Y}$ and
$\sigma = \iota(M_\rho,Y)$.
Then $\sigma = K'\pi$ for $K' = Y_2^* \cup Y_2^* (\ell,\ell) K (r,r)^* Y_2^*$.
Since $Y_2^*$ is rational, by the closure properties of $\C$ we have $K'\in \C$
and hence $\sigma\in T(\C)$.

Second, let $L\subseteq X^*$ be in $\C\setminus \D$, with $Y$ and $Y_2$ as before,
and let $\sigma_1 = \iota(M(L),Y)$.
Suppose $\sigma_1\in T(\D)$ with $\sigma' = K_1\pi$ for $K_1\in \D$.
If $\D$ is closed under transductions, then the sublanguage $K'_1$ of $K_1$ consisting
of all $w$ such that $w\pi$ has $\ell r$ on the second tape is also in $\D$.
But then $K'_1\pi = \gset{(\ell r, \ell r), (\ell w r, \ell r)}{w\in L}$ is in $T(\D)$,
implying that $\gset{\ell r, \ell w r}{w\in L}$ (the projection of $K_1'\pi$ onto the
first tape) is in $\D$.  But applying another transduction to this
implies that $L$ itself is in $\D$, which is false.  Hence $\sigma_1 = \iota(M(L),Y)$ is not in $T(\D)$.
On the other hand,
$L_{\sigma_1}$ consists of all $u\# v^\rev$ such that either $u=v$ or
$u = \alpha_0 \ell u_1 r \alpha_1 \ldots \ell u_n r \alpha_n$
and $v = \alpha_0 \ell v_1 r \alpha_1 \ldots \ell v_n r \alpha_n$
for $\alpha_i\in Y^*$ and $(u_i,v_i)\in (L\times \{\emptyword\})\cup (\{\emptyword\}\times L)$.
Let $\Gamma$ be an indexed grammar for $L$.  Then an indexed grammar for
$L_{\sigma_1}$ of the same `type' as $\Gamma$ (context-free, ET0L-indexed etc.)
with start symbol $I$ can be defined by setting the following productions from $I$:
\[
I\ra aIa \mid \ell S r I r \ell \mid \ell r I r S' \ell \mid \# \quad (a\in Y),
\]
where $S$ is the start symbol of $\Gamma$ and $S'$ is the start symbol of
the grammar $\Gamma^\rev$ for $L^\rev$.  Hence $\iota(M(L),Y) = \sigma_1$
is in $U(\C)$.
\qed
\end{proof}

Let $X = \{x,y\}$ and let $K\subset X^*$ be a context-free language which is not ET0L, as in the proof of
Proposition~\ref{et0l}.  Let $X' = \{x' \mid x\in X\}$ and define a homomorphism $\phi:X^*\ra (X')^*$ by $x\phi = x'$.
Let $\rho_K = \gset{(w,w\phi)}{w \in K}$ and $M_2 = M[\rho_K]$.

\begin{proposition}\label{wpetol}
The monoid $M_2$ has word problem in $T(\CF)$
but not in $U(\etol)$.
\end{proposition}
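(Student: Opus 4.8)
The plan is to apply Proposition~\ref{wpconstructs} with the language $K$ and the relation $\rho_K = \gset{(w,w\phi)}{w\in K}$ defined just above the statement. Recall that $K \subseteq X^*$ is context-free but not ET0L, and that $M_2 = M[\rho_K]$. To invoke part~(1) of Proposition~\ref{wpconstructs}, I first need $\rho_K \in T(\CF)$: this is immediate, since a context-free grammar for $K$ can be converted into a two-tape context-free grammar for $\rho_K$ by replacing each terminal-producing production for $x$ with one emitting the pair $(x,x')$ (analogous to the construction in Example~\ref{rev}). Because $\CF$ is closed under union, concatenation and Kleene star and contains $\rat$, part~(1) then gives $\iota(M_2,Y) \in T(\CF)$, where $Y = X\cup X'\cup\set{\ell,r}$. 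This disposes of the positive part of the statement.

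For the negative part, the obstacle is that $\rho_K$ is \emph{not} a language in $\C\setminus\D$ of the form required by part~(2) of Proposition~\ref{wpconstructs}, which is stated for the $M(L)$ construction rather than the $M[\rho]$ construction. So I cannot simply quote part~(2) verbatim; instead I would argue directly, in the same spirit. Suppose for contradiction that $\iota(M_2,Y) \in U(\etol)$, so that $L_{\iota(M_2,Y)}$ is ET0L. Using closure of $\etol$ under rational transductions, I would isolate from $L_{\iota(M_2,Y)}$ the sublanguage corresponding to single rewrites $\ell w r \# r\,(w\phi)^\rev \ell$ with $w\in K$: intersect with the appropriate regular constraint $\ell X^* r \# r\, (X')^* \ell$ and project/relabel via a transduction to recover $\gset{w\# w^\rev}{w\in K}$ (renaming the primed copy back through $\phi^{-1}$, which is a rational operation).

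The key step is then to recognize that $\gset{w\# w^\rev}{w\in K} = L_\sigma$ for the relation $\sigma = \gset{(w,w)}{w\in K}$ appearing in the proof of Proposition~\ref{et0l}, where it is shown (via \cite{ehrenfeucht_relationship}) that this language is \emph{not} ET0L precisely because $K$ is not EDT0L. Since $\etol$ is closed under rational transductions, its image under the transduction above would have to be ET0L, contradicting that non-membership. Hence $\iota(M_2,Y)\notin U(\etol)$, completing the proof.

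I expect the main obstacle to be making the transduction extraction fully rigorous: I must check that after intersecting with the regular language pinning down the single-block form and undoing $\phi$ on the reversed second component, what remains really is exactly $\gset{w\# w^\rev}{w\in K}$ with no extraneous words from the $u=v$ alternative or from multi-block rewrites, since those could otherwise contaminate the image and break the reduction. The cleanest route is to choose the regular constraint tightly enough (forcing exactly one $\ell\cdots r$ block on each side and no occurrences of $\ell,r$ elsewhere) that only the single-rewrite words survive, after which the relabelling is routine.
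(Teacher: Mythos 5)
Your proposal is correct and takes essentially the same route as the paper: the positive half via Proposition~\ref{wpconstructs}(1) applied to $\rho_K\in T(\CF)$, and the negative half by intersecting $L_{\iota(M_2,Y)}$ with $\ell X^* r \# r (X')^*\ell$ and then applying a homomorphism/rational transduction to land on a copy language over $K$ that is non-ET0L by \cite{ehrenfeucht_relationship} (the paper deletes $\ell,r,\#$ to get $\gset{w(w\phi)^\rev}{w\in K}$, whereas you relabel primes to get $\gset{w\#w^\rev}{w\in K}$, a trivial variation). The extraneous-word worry you flag is harmless: with that regular constraint the only possible extra word is $\ell r\#r\ell$, and removing a single word preserves membership in $\etol$.
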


\begin{proof}
Let $X = \{x,y\}$ and let
$K\subset X^*$ be a context-free language which is not ET0L,
as in the proof of Proposition~\ref{et0l}.
Let $X' = \{x' \mid x\in X\}$ and define a homomorphism
$\phi:X^*\ra (X')^*$ by $x\phi = x'$.
Then the language $\gset{w(w\phi)^\rev}{ w\in K}$ is
not ET0L \cite{ehrenfeucht_relationship}.
Let $Y=  \{r,\ell\}\cup X\cup X'$
and let $M_2$ be the monoid with presentation
\[ \langle Y \mid \ell w r = \ell(w\phi)r \; (w\in K)\rangle. \]
Then $L_{\iota(M_2,Y)}\cap \ell X^*r \# r(X')^*\ell =
\gset{\ell w r \# r(w\phi)^\rev\ell }{ w\in K }$.
This language has as a homomorphic image the non-ET0L language
mentioned above, and since the ET0L languages are closed under
homomorphism, $L_{\iota(M_2,Y)}$ cannot be ET0L.

However, $M_2 = M[\rho]$ for the relation $\rho = \gset{(w,w\phi)}{w\in K}$,
which is in $T(\CF)$.  Hence $M_2$ has word problem in $T(\CF)$ by
Proposition~\ref{wpconstructs}, as $\CF$ satisfies the required closure properties.
\qed
\end{proof}

\begin{proposition}
There exists a monoid $M_3$ with word problem in
$U(\lin)$ but not in $T(\etol)$, and a monoid $M_4$
with word problem in $U(\etol)$ but not in $T(\lin)$.
\end{proposition}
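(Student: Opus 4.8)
The plan is to realise both monoids through the construction $M(L)$ and to appeal to Proposition~\ref{wpconstructs}(2), which is designed exactly to turn a language in $\C\setminus\D$ into a monoid whose word problem lies in $U(\C)\setminus T(\D)$. Thus the entire argument reduces to supplying a suitable witness language in each direction and checking the one hypothesis of that proposition that is not automatic, namely that $\D$ is closed under transductions.

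First I would handle $M_3$. Choose a language $L_3 \in \lin\setminus\etol$ and set $M_3 = M(L_3)$. Since $\lin$ is one of the admissible classes for Proposition~\ref{wpconstructs}(2), applying that result with $\C = \lin$ and $\D = \etol$ places the word problem of $M_3$ in $U(\lin)\setminus T(\etol)$, provided $\etol$ is closed under transductions. The monoid $M_4$ is entirely symmetric: choose $L_4 \in \etol\setminus\lin$, set $M_4 = M(L_4)$, and apply Proposition~\ref{wpconstructs}(2) with $\C = \etol$ and $\D = \lin$ to obtain a word problem in $U(\etol)\setminus T(\lin)$, provided $\lin$ is closed under transductions.

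The existence of the two witness languages is precisely the incomparability of $\lin$ and $\etol$ already recorded in the discussion after Proposition~\ref{cflin}, which rests on \cite{duske_linear,ehrenfeucht_relationship,kallmeyer_parsing} (see also \cite[Proofs of Corollaries~4.2 and~4.3]{bcm_crosssection}). For the direction $\etol\setminus\lin$ one can even be concrete and take $L_4 = \gset{x^{2^n}}{n\in\N}$: this language is ET0L by the construction in the proof of Proposition~\ref{1ratet0l}, yet its Parikh image $\gset{2^n}{n\in\N}$ is not semilinear and so, by \cite[Theorem~5.1]{duske_linear}, $L_4$ is not linear indexed. A witness $L_3$ for the reverse direction is furnished by the same incomparability results.

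The only genuine obstacle is verifying the transduction-closure hypothesis in the two applications. Both $\etol$ and $\lin$ are full AFLs, and hence full trios, so each is closed under rational transductions: for $\etol$ this is classical, while for $\lin$ it follows from the full-AFL closure properties of the tree-adjoining languages, with which $\lin$ coincides. With these closures confirmed, the two invocations of Proposition~\ref{wpconstructs}(2) finish the proof with no further computation.
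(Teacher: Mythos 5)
Your proposal is correct and follows essentially the same route as the paper: the paper's proof likewise invokes Proposition~\ref{wpconstructs} together with the incomparability of $\lin$ and $\etol$ (established via \cite{duske_linear,ehrenfeucht_relationship,kallmeyer_parsing} and \cite{bcm_crosssection}). The extra details you supply --- singling out part~(2) of Proposition~\ref{wpconstructs}, the concrete witness $\gset{x^{2^n}}{n\in\N}$ for the $\etol\setminus\lin$ direction, and the verification that $\etol$ and $\lin$ (as tree-adjoining languages) are full AFLs and hence closed under rational transductions --- are all accurate, and merely make explicit what the paper leaves implicit.
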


\begin{proof}
This follows immediately from Proposition~\ref{wpconstructs} and the
incomparability of $\etol$ and $\lin$.
\qed
\end{proof}

Let $\rho = \gset[\big]{(a_1^n a_2^n a_3^n,b_1^n b_2^n b_3^n)}{n \in \N}$ and let $M_5 = M[\rho]$. (Recall that the
relation $\rho$ was used in the proof of Proposition~\ref{linind}.)

\begin{proposition}
  \label{m5}
The monoid $M_5$ has word problem in $T(\lin)$ but not in $U(\lin)$.
\end{proposition}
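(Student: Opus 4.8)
The plan is to treat the two directions separately, invoking Proposition~\ref{wpconstructs}(1) for membership in $T(\lin)$ and carrying out a direct reduction to a known non-linear-indexed language for non-membership in $U(\lin)$.

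For the positive direction I would first recall from the proof of Proposition~\ref{linind} that $\rho \in T(\lin)$. The class $\lin$ (equivalently, the class of tree-adjoining languages) is a full AFL; in particular it is closed under union, concatenation and Kleene star and contains $\rat$. Thus the hypotheses of Proposition~\ref{wpconstructs}(1) are satisfied with $\C = \lin$, and I conclude at once that $M_5 = M[\rho]$ has word problem in $T(\lin)$.

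For the negative direction the crucial point is a structural analysis of the congruence. Since $\ell, r \notin X$, within any factor of the form $\ell a_1^n a_2^n a_3^n r$ or $\ell b_1^n b_2^n b_3^n r$ the symbols $\ell$ and $r$ occur only at the two ends, so a defining relation can be applied only to such a balanced factor as a whole. Consequently the $M_5$-class of $\ell a_1^n a_2^n a_3^n r$ is exactly $\{\ell a_1^n a_2^n a_3^n r,\, \ell b_1^n b_2^n b_3^n r\}$, and a word $\ell a_1^i a_2^j a_3^k r$ with $i,j,k$ not all equal is $M_5$-equal only to itself. Supposing for contradiction that $L_{\iota(M_5,Y)} \in \lin$ (where $Y = X \cup \{\ell,r\}$), I would intersect it with the regular language $\ell a_1^* a_2^* a_3^* r\, \#\, r b_3^* b_2^* b_1^* \ell$. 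For a word of this shape, $u = \ell a_1^i a_2^j a_3^k r$ can be $M_5$-equal to a word $v = \ell b_1^p b_2^q b_3^s r$ consisting only of $b$'s (together with $\ell,r$) precisely when $i=j=k=n$ and $p=q=s=n$; hence by the observation above the intersection is exactly $\{\ell a_1^n a_2^n a_3^n r\, \#\, r b_3^n b_2^n b_1^n \ell : n \in \N\}$.

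Finally, applying the homomorphism that deletes $\ell$, $r$, $\#$ and $a_1$ carries this language to a renaming of $L_5 = \{a^n b^n c^n d^n e^n : n \in \N\}$. Since $\lin$ is closed under intersection with regular languages and under homomorphism, $L_5$ would then be linear indexed, contradicting \cite[Problem~4.5]{kallmeyer_parsing}. Hence $L_{\iota(M_5,Y)} \notin \lin$, so $M_5$ does not have word problem in $U(\lin)$. The main obstacle is exactly the structural claim underlying the negative direction: I must verify that no unexpected chain of rewrites can move a word in $\ell a_1^* a_2^* a_3^* r\, \#\, r b_3^* b_2^* b_1^* \ell$ off the diagonal $i=j=k$, so that the intersection is precisely the matched five-block language; the AFL closure properties of $\lin$ and the citation that five matched blocks are not linear indexed are then routine.
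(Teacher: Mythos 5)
Your proof is correct and takes essentially the same approach as the paper's: membership in $T(\lin)$ comes from Proposition~\ref{wpconstructs}(1) applied to the relation $\rho \in T(\lin)$ of Proposition~\ref{linind}, and non-membership in $U(\lin)$ comes from intersecting $L_{\iota(M_5,X)}$ with $\ell a_1^* a_2^* a_3^* r \# r b_3^* b_2^* b_1^* \ell$ and applying a homomorphism to reach the five-block language $L_5$, exactly as the paper does (the paper routes this through $L_\rho$ and leaves your structural analysis of the congruence implicit, but that analysis is precisely what justifies its one-line claim). The only quibble is that the intersection also contains $\ell r \# r \ell$ (the case $n=0$), so it is indexed by $n \in \N_0$ rather than $n \in \N$; this is harmless, since the homomorphic image then differs from $L_5$ by at most the empty word and is still not linear indexed.
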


\begin{proof}
Let $X = \{a_1,a_2,a_3,b_1,b_2,b_3,\ell,r\}$ and
\[
M_5 = \langle X \mid \ell a_1^n a_2^n a_3^n r  = \ell b_1^n b_2^n b_3^n r \; (n\in \N)\rangle.
\]
Then $M_5 = M[\rho]$ for the relation $\rho$ given in the proof of
Proposition~\ref{wpconstructs}, which is in $T(\lin)$ but not in $U(\lin)$.
Hence by Proposition~\ref{wpconstructs}, $\iota(M_5,X)\in T(\lin)$.
Intersecting $L_{\iota(M_5,X)}$ with $\ell a_1^* a_2^* a_3^* r \# r b_3^* b_2^* b_1^* \ell$
and applying a homomorphism yields $L_\rho$, hence $\iota(M_5,X)\notin U(\lin)$.
\qed
\end{proof}

\section{Relations of larger arity}

The two-tape definition of binary relations is straightforwardly extendable to relations
of any arity.  It is less clear how to define the `unfolded' version for non-binary relations.
Indeed, which definition makes sense is likely to depend on the semantic content of
the relation in question.
For example, Gilman \cite{gilman_wordhyperbolic} defines the multiplication table of a group
with respect to a regular language $L$ of normal forms
as $\gset{ u\# v\# w^\rev}{ u,v,w\in L, uv =_G w}$ (and the same definition is extended to semigroups
in \cite{duncan_hyperbolic}).  However, for other ternary relations, such as the various betweenness
relations, it might make little or no sense to reverse any of the component words.
Meanwhile, for quaternary relations, in some cases the unfolding
$(u,v,w,x)\mapsto u\# v\# w\# x^\rev$ will make sense, as in the case where the relation
represents some ternary operation with $u\cdot v\cdot w = x$; while in others a more
natural unfolding could be $(u,v,w,x)\mapsto u\#v\#x^\rev\# w^\rev$
(not even preserving the order of the components), for example
if the relation consisted of tuples such that $uv = wx$ in some algebraic structure.

\bibliography{\jobname}
\bibliographystyle{elsarticle-num}

\end{document}